\documentclass[a4paper,UKenglish,cleveref, autoref, thm-restate]{lipics-v2021}
\hideLIPIcs  %

\usepackage{listings}
\usepackage{xcolor} %
\usepackage{tcolorbox}
\usepackage{amsmath}
\usepackage{stmaryrd}

\usepackage{hyperref}

\usepackage{sidecap}
\usepackage{wrapfig}
\usepackage{amsmath}
\usepackage{amssymb}
\usepackage{floatflt}
\usepackage{comment}
\usepackage{todonotes}

\newcommand{\changed}[1]{\textcolor{blue}{#1}}

\usepackage{xspace} %
\usepackage{tikz}
\usetikzlibrary{positioning,arrows.meta,decorations.pathmorphing}
\usepackage{cleveref} %

\newcommand{\RPT}{\ensuremath{\mathsf{RPT}}\xspace}
\newcommand{\HSC}{\ensuremath{\mathsf{HSC}}\xspace}
\newcommand{\hsc}{\ensuremath{\mathsf{hsc}}\xspace}
\newcommand{\RDA}{\ensuremath{\mathsf{RDA}}\xspace}
\newcommand{\RDAs}{\ensuremath{\mathsf{RDAs}}\xspace}

\newcommand{\ARDA}{\ensuremath{\mathsf{ARDA}}\xspace}
\newcommand{\ARDAs}{\ensuremath{\mathsf{ARDAs}}\xspace}

\newcommand{\RDTS}{\ensuremath{\mathsf{RDTS}}\xspace}
\newcommand{\PRDTS}{\ensuremath{\mathsf{PRDTS}}\xspace}
\newcommand{\HSCL}{\ensuremath{\mathsf{HSCL}}\xspace}
\newcommand{\LTL}{\ensuremath{\mathsf{LTL}}\xspace}
\newcommand{\RMA}{\ensuremath{\mathsf{RMA}}\xspace}
\newcommand{\PIA}{\ensuremath{\mathsf{PIA}}\xspace}
\newcommand{\SSR}{\ensuremath{\mathsf{SSR}}\xspace}
\newcommand{\BWA}{\ensuremath{\mathsf{BWA}}\xspace}
\newcommand{\RIA}{\ensuremath{\mathsf{RIA}}\xspace}
\newcommand{\HR}{\ensuremath{\mathsf{HR}}\xspace}
\newcommand{\HRE}{\ensuremath{\mathsf{HRE}}\xspace}
\newcommand{\HRecord}{\ensuremath{\mathsf{HRecord}}\xspace}
\newcommand{\local}{\ensuremath{\mathsf{local}}\xspace}
\newcommand{\flocal}{\ensuremath{\mathsf{local}}\xspace}
\newcommand{\cumul}{\ensuremath{\mathsf{cumul}}\xspace}

\newcommand{\RDTSRMA}{\ensuremath{\mathsf{RDTS}\textsuperscript{$\mathsf{RMA}$}}\xspace}
\newcommand{\cfg}{\mathsf{cfg}\xspace}
\newcommand{\PState}{\mathsf{PState}\xspace}
\newcommand{\NState}{\mathsf{NState}\xspace}
\newcommand{\Update}{\mathsf{Update}\xspace}
\newcommand{\Recv}{\mathsf{Recv}\xspace}
\newcommand{\enabled}{\mathsf{enabled}\xspace}
\newcommand{\effect}{\mathsf{effect}\xspace}
\newcommand{\Rules}{\mathsf{Rules}\xspace}

\newcommand{\rcvMsg}{\mathsf{rcvMsg}\xspace}
\newcommand{\loc}{\mathsf{loc}\xspace}
\newcommand{\rd}{\mathsf{rd}\xspace}
\newcommand{\frm}{\mathsf{frm}\xspace}
\newcommand{\type}{\mathsf{type}\xspace}
\newcommand{\guard}{\mathsf{guard}\xspace}

\newcommand{\tgt}{\mathsf{tgt}\xspace}
\newcommand{\rmax}{r_\mathsf{max}\xspace}
\newcommand{\rmin}{r_\mathsf{min}\xspace}

\newcommand{\LinTrm}{\mathsf{LinTrm}\xspace}

\usepackage{booktabs}
\usepackage[T1]{fontenc}
\usepackage{graphicx}

\title{Parameterized Verification of Asynchronous Round-Based Distributed Algorithms via Reduction to Finite-Counter Systems} %

\titlerunning{Parameterized Verification of Round-Based Distributed Algorithms} %

\author{Nathalie Bertrand}{Univ Rennes, Inria, CNRS, IRISA, France}{}{}{}
\author{Pranav Ghorpade}{The University of Sydney, Australia}{}{}{}
\author{Sasha Rubin}{The University of Sydney, Australia}{}{}{}
\Copyright{Nathalie Bertrand, Pranav Ghorpade and Sasha Rubin} %
\authorrunning{N. Bertrand, P. Ghorpade and S. Rubin}

\ccsdesc[500]{Theory of computation~Verification by model checking}
\ccsdesc[300]{Theory of computation~Modal and temporal logics}
\ccsdesc[300]{Theory of computation~Distributed algorithms}
\keywords{parametrized verification, asynchronous round-based distributed algorithms, finite-counter systems, LTL model checking} %

\category{} %

\nolinenumbers %

\EventEditors{~}
\EventNoEds{1}
\EventLongTitle{}%
\EventShortTitle{}%
\EventAcronym{}%
\EventYear{2026}
\EventDate{September 2--5, 2026}
\EventLocation{Liverpool, United Kingdom}
\EventLogo{}
\SeriesVolume{42}
\ArticleNo{1}

\begin{document}

\maketitle

\begin{abstract}
Traditional model-checking techniques typically verify distributed algorithms only for a fixed number of finite-state processes. Parameterized model checking generalizes this to any number of processes, while still typically assuming that each process is finite-state. In this work, we consider asynchronous round-based distributed algorithms in which each process is infinite-state since it can execute for an infinite number of rounds. %
We show that the parameterized verification problem for asynchronous round-based distributed algorithms is undecidable, already for simple specifications. Nevertheless, as our main contribution, we provide a reduction to \(\LTL\) model checking over finite-counter systems and prove that it is sound and complete. This enables the use of off-the-shelf, mature symbolic model checkers for finite-counter systems. We demonstrate the practical applicability of this reduction by verifying safety and liveness properties of several asynchronous round-based consensus and leader-election algorithms using the nuXmv model checker.

\end{abstract}

\section{Introduction}
\label{sec:intro}

Asynchronous distributed algorithms consist of a parameterized number of identical processes that asynchronously exchange messages with other processes and update their local states.
A subclass of these is that of asynchronous round-based distributed algorithms ($\ARDAs$), in which each process progresses through an unbounded sequence of rounds and each message is tagged with the sender's current round number. Typical examples include Ben-Or's randomized consensus algorithms~\cite{ben-or}, Bracha's consensus algorithm~\cite{bracha}, and Raft's leader-election algorithm~\cite{raft}. In the absence of synchronization, processes in such algorithms may drift across rounds, so each process must store its current round index as part of its local state. Since this index is unbounded, each process has infinitely many possible states. Consequently, the global system exhibits two orthogonal sources of unboundedness: the parameterized number of processes and the unbounded round index of each process.  

Parameterized verification of distributed algorithms refers to checking their correctness for 
all numbers $n$ of processes, provided that a resilience condition relating $n$ and the maximum number $t$ of faulty processes is satisfied (e.g., $3t < n$). 
Parameterized model checking provides techniques that cope with all admissible valuations of the parameters $n$ and $t$, namely those satisfying the resilience condition. However, it typically assumes that processes are finite-state (see, for instance, the book~\cite{PMCP_book} or the recent survey~\cite{Konnov_2023}) and is therefore not directly applicable to asynchronous round-based distributed algorithms. Moreover, even for fixed parameter values, the resulting system is infinite-state because of the round indices and cannot be directly handled by finite-state model checkers such as SPIN, NuSMV, or TLA+. 

In this work, we introduce a modeling formalism for a broad class of $\ARDAs$, together with a specification logic designed to express their standard correctness properties. For this formalism, we show that the parameterized verification problem is undecidable, whereas the fixed-instance verification problem is decidable. To address the parameterized case despite this undecidability, we give a sound and complete reduction to $\LTL$ model checking over finite-counter systems. Although this target problem is itself undecidable, it is well studied, and sound, necessarily incomplete, symbolic model-checking techniques for it have been developed and implemented, for example in nuXmv~\cite{NuXmv}. Thus, our reduction enables the use of existing symbolic model checkers for parameterized verification of $\ARDAs$, while also potentially benefiting from future advances in such tools.
To assess the practical applicability of the reduction, we apply it to four case studies: (i) Ben-Or's consensus with crash faults~\cite[Algo A]{ben-or}, (ii) Ben-Or's consensus with Byzantine faults~\cite[Algo B]{ben-or}, (iii) Bracha's consensus with Byzantine faults~\cite[Fig. 4]{bracha}, and (iv) Raft's leader-election~\cite[Sec. 5.2]{raft}. Using nuXmv as the back end, we verify safety and liveness properties, namely agreement, validity, termination (under some condition), and leader uniqueness.
These case studies suggest that existing verification engines can efficiently handle the resulting finite-counter systems.

\medskip
\noindent \textbf{Brief Overview.}
\S~\ref{subsec:model} introduces templates for $\ARDAs$. 
A template specifies the symbolic parameters of the algorithm together with the behavior of a single process. For each admissible parameter valuation, it induces a round-based distributed transition system ($\RDTS$) capturing the behavior of the resulting interacting processes. The parameterized semantics (parameterized $\RDTS$) is obtained as the disjoint union of these fixed-instance $\RDTS$ over all admissible parameter valuations.
\S~\ref{subsec:HSCL} identifies a class of \emph{action-based} temporal properties, called \emph{history state-count properties}, and introduces a tailored action-based logic, \emph{history state-count logic} ($\HSCL$). 
This logic is deliberately restricted so that it can be translated into state-based $\LTL$, while remaining expressive enough to capture standard correctness conditions for consensus and leader election. 
\S~\ref{subsec:Verif-Problems} gives a reduction from the non-halting problem for two-counter machines to the verification problem of parameterized $\RDTS$ against $\HSCL$ formulas, thereby establishing undecidability. \S~\ref{subsec:CS} gives four sound and complete transformations that reduce the parameterized $\RDTS$ to a finite-counter system while preserving history state-count properties. \S~\ref{subsec:LTL} then gives two further sound and complete transformations that reduce verification of parameterized $\RDTS$ against $\HSCL$ formulas to $\LTL$ model checking over finite-counter systems.
When specialized to fixed-instance semantics $\RDTS$, our reduction yields a \emph{one}-counter system for history state-count properties and reduces verification of $\RDTS$ against $\HSCL$ formulas to $\LTL$ model checking over \emph{finite-state} systems. This proves that the fixed-instance verification problem is decidable. \S~\ref{sec:evaluation} presents the case studies, and \S~\ref{sec:conclusion} concludes the paper.

\medskip
\noindent {\bf Related Work.} 
Round-based distributed algorithms belong to the broader class of fault-tolerant distributed algorithms, which also includes algorithms in which processes are finite-state, such as broadcast algorithms. 
Threshold automata~\cite{TA-1,POPL-2017} provide a successful framework for parameterized verification of such finite-state algorithms by abstracting them to finite-counter systems that can be analyzed by symbolic model checkers. 
This line of work was later extended to synchronous round-based distributed algorithms through synchronous threshold automata~\cite{DBLP:conf/tacas/StoilkovskaKWZ19}. 
For asynchronous round-based distributed algorithms ($\ARDAs$), a compositional approach was proposed in~\cite{round-rigid}. It reduces the verification to proof obligations for individual rounds, which are handled within the threshold-automata framework. A limitation is that the target property is not established directly: one must first identify suitable round invariants and then argue {separately} that these invariants imply the desired property. 

Layered threshold automata~\cite{bertrand_2021}, together with the PyLTA prototype~\cite{PyLTA}, were later proposed as a more direct framework for verifying $\ARDAs$. 
However, this framework is restricted to algorithms that admit a \emph{layered} structure, thereby excluding, for instance, most leader-election algorithms.
More importantly, the corresponding verification procedure does not retain the main tooling advantage of standard threshold automata. It first gives a sound and complete abstraction to counter systems with \emph{infinitely} many counters, and then applies a further abstraction to finite-state systems that is \emph{incomplete}. As a result, the approach relies on a bespoke CEGAR loop rather than on off-the-shelf symbolic model checking.
In contrast, our work targets a broader class of $\ARDAs$, including leader-election algorithms, and gives a sound and complete reduction to finite-counter systems. This enables the use of mature general-purpose symbolic model checkers, as illustrated in Fig.~\ref{fig:contribution}.

\begin{figure}[t]
  \centering
  \scalebox{.9}{\begin{tikzpicture}
\draw[teal!30!white, fill=teal!30!white, very thick] (0,0) rectangle (2,3);
\draw[orange!60!white,fill=orange!60!white,very thick] (1,0.75) ellipse (.75 and 0.5);
\node (lay) at (1,.9) {{\scriptsize Layered}};
\node (lay) at (1,.6) {{\scriptsize $\ARDAs$}};
\node (lay) at (1,2) {{\scriptsize $\ARDAs$}};

\draw[teal!30!white, fill=teal!30!white, very thick] (5,1.5) rectangle (7,3);
\node (lay) at (6,2.4) {{\scriptsize Finite-counter}};
\node (lay) at (6,2.1) {{\scriptsize system}};

\draw[orange!60!white, fill=orange!60!white, very thick] (5,.25) rectangle (7,1.25);
\node (lay) at (6,.9) {{\scriptsize Infinite-counter}};
\node (lay) at (6,.6) {{\scriptsize system}};

\draw[-stealth,decorate,decoration={snake, segment length=4.15mm, amplitude=.8mm},color=teal,very thick] (2.05,2.25) -- (4.95,2.25) node[above,midway,color=black] {{\scriptsize sound and complete}}
node[below,midway,color=black] {{\scriptsize reduction}};
\draw[-stealth,decorate,decoration={snake, segment length=4.15mm, amplitude=.8mm},color=orange,very thick] (1.8,.75) -- (4.95,.75)  node[above,midway,color=black] {{\scriptsize sound and complete}}node[below,midway,color=black] {{\scriptsize reduction~\cite{bertrand_2021}}};

\draw[gray!30!white, fill=gray!30!white, very thick] (8,1.5) rectangle (11,3);
\node (lay) at (9.5,2.7) {{\scriptsize Off-the-shelf}};
\node (lay) at (9.5,2.4) {{\scriptsize verification engines:}};
\node (lay) at (9.5,2.1) {{\scriptsize IC3, bounded MC}};
\node (lay) at (9.5,1.8) {{\scriptsize interpolation, $\ldots$}};
\draw[-latex,color=teal,very thick] (7.05,2.25) -- (7.95,2.25);

\colorlet{grorange}{orange!80!gray}
\draw[grorange!40!white,fill=grorange!40!white, very thick] (8,.25) rectangle (10,1.25);
\node (lay) at (9,.9) {{\scriptsize Customized}};
\node (lay) at (9,.6) {{\scriptsize CEGAR~\cite{PyLTA}}};
\draw[-latex,color=orange,very thick] (7.05,.75) -- (7.95,.75);
\end{tikzpicture}}
  \caption{Our reduction improves over the verification framework for layered $\ARDAs$ by~\cite{bertrand_2021,PyLTA} in both scope and the ability to reuse mature verification tools.
  }
  \label{fig:contribution}
\end{figure}

Traditionally, parameterized verification of $\ARDAs$ has largely relied on human-guided methods.
Interactive theorem provers such as Rocq, Isabelle/HOL, and TLA+ have been used to formalize distributed algorithms and establish their correctness through mechanized, expert-driven proofs~\cite{raft-coq,paxos-tla,DAGconsensus-tla}.
Deductive verification frameworks such as Ivy~\cite{Ivy} encode distributed algorithms in first-order logic and can prove correctness automatically once suitable inductive invariants are provided. However, discovering invariants remains challenging, motivating a large body of work on automatic inductive-invariant inference~\cite{ind-inf-2,ind-inf-3,inv-inf,ind-inv-4}. 

For fixed-instance verification, prior work~\cite{DTT-gandalf14,NTK-prdc12} applies finite-state model-checking tools to prove correctness for a fixed number of processes executing a bounded number of rounds. 
In contrast, our setting imposes no bound on the number of rounds. 
The work of~\cite{reduction-thm} also addresses fixed-instance verification of round-based consensus algorithms in the Heard-Of model~\cite{HO-Model}, using TLC to verify fixed instances of algorithms such as OneThirdRule and UniformVoting. 
However, their reduction is tailored to TLA+/TLC, whose specifications are written in the Temporal Logic of Actions, whereas our reduction targets standard $\LTL$ model checking. 
Moreover, our framework handles a broader class of algorithms, including Byzantine fault-tolerant consensus algorithms and leader-election algorithms.

\section{Modeling Asynchronous Round-Based Distributed Algorithms}
\label{sec:modelling}
This section introduces the process templates that serve as our formal model, together with the specification language and verification problems studied in the paper.

\subsection{Round-based Process Templates}
\label{subsec:model}
The syntax of our model is motivated by the typical structure of $\ARDAs$; see Fig.~\ref{alg:voting} (left) for an example. These algorithms are parameterized; processes communicate via broadcast, with choices governed by threshold guards; and the overall behavior of each process is composed of identical, bounded, communication-closed rounds. 
We now expand on these three aspects. 
(i) $\ARDAs$ are parameterized by integer variables (typically $n$ for number of processes and $t$ for an upper-bound on the number of faulty processes) and are designed to work when these variables satisfy a \emph{resilience condition}, typically expressed as a linear arithmetic formula (e.g., $t < n/2$). 
(ii) In $\ARDAs$, communication is typically by \emph{broadcast}. Moreover, since processes may be faulty, an $\ARDA$ cannot rely on receiving messages from specific processes. 
Accordingly, process updates are governed by \emph{threshold guards}, which depend only on the number of received messages of each type, rather than on the identities of the senders; see step 3 in Fig.~\ref{alg:voting} (left).
(iii) In $\ARDAs$, the behavior of a process in each round follows the same high-level structure: a finite sequence of broadcast, receive, and compute steps, potentially followed by a {progress step} to a future round. As a result, rounds are \emph{structurally identical}, and the control-flow within each round is \emph{bounded}. Messages from different rounds are distinguished using \emph{round tags},
which enforce \emph{communication closure} within rounds: messages tagged with round $r$ are used exclusively during computations in round $r$.

Exploiting the identical structure of rounds in $\ARDAs$, our round-based process template provides a round-independent representation capturing the behavior of a \emph{single process} within a \emph{single round}, together with the rules governing round progress. Similar representations for $\ARDAs$ have appeared in prior work, though in more restrictive forms \cite{bertrand_2021} or in a less direct form requiring additional modeling components (e.g., auxiliary locations)~\cite{round-rigid}.

\begin{figure}[t]
\centering
\begin{minipage}{0.38\textwidth}
\scriptsize
Parameters: $n, t$ , with $n > 2t$\\
Input: $v_0 \in \{0, 1\}$

Initialize: $v := v_0$, $d := \texttt{Nil}$, $r := 0$

At round $r$:
\begin{enumerate}
    \item \texttt{Broadcast} ($v$, $r$)
    \item \texttt{Wait} until $\geq (n - t)$ messages\\
    received with round tag $r$
    \item \texttt{Compute} if some value $v'$ appears \\
        in $> n/2$ messages: $d := v'$; \texttt{halt},\\
        else $v := 0$ or $v := 1$
    \item \texttt{Progress} $r := r + 1$
\end{enumerate}
 
\end{minipage}\hfill
\begin{minipage}{0.23\textwidth}
\begin{flushright}
\scriptsize
\begin{tikzpicture}[->, >=stealth, node distance=1.4cm, thick, state/.style={circle, draw, minimum size=0.4cm}]
  \node[state] (w0) {$\mathsf{w0}$};
  \node[state, right of=w0] (w1) {$\mathsf{w1}$};
  \node[state, below of=w0] (d0) {$\mathsf{d0}$};
  \node[state, below of=w1] (d1) {$\mathsf{d1}$};
   \path (w0) edge node[left] {$\mathsf{g1}$} (d0);
  \path (w1) edge node[right] {$\mathsf{g1}$} (d0);
  \path (w0) edge[bend right=20] node[left] {$\mathsf{g2}$} (d1);
  \path (w1) edge[bend left=20] node[right] {$\mathsf{g2}$} (d1);
  \path (w0) edge[dashed, bend left] node[above] {$\mathsf{g3}$} (w1);
  \path (w0) edge[dashed,loop above] node[above] {$\mathsf{g3}$} (w0);
  \path (w1) edge[dashed,loop above] node[above] {$\mathsf{g3}$} (w1);
  \path (w1) edge[dashed,bend left] node[below] {$\mathsf{g3}$} (w0);
\end{tikzpicture}
\end{flushright}
\end{minipage}\hfill
\begin{minipage}{0.36\textwidth}
\scriptsize
$\textsf{P}=\{ n, t\}$ \quad \textsf{rc} = $n > 2t$\\
\quad $\mathcal{L} = \{\textsf{w0, w1, d0, d1}\}$
$\mathcal{M} = \{\textsf{m0, m1}\}$\\

$\mathsf{g1} = \mathtt{Quorum} \wedge \textsf{m0} > n/2$\\
$\mathsf{g2} =\mathtt{Quorum} \wedge \textsf{m1} > n/2$\\
$\mathsf{g3} = \mathtt{Quorum} \wedge \textsf{m0} \leq n/2 \wedge \textsf{m1} \leq n/2$\\
where $\mathtt{Quorum}= \textsf{m0} + \textsf{m1} \geq n - t$\\

\tikz \draw[->, >=stealth, thick] (0,0) -- (0.6,0); : $\textsf{type} =0$\\
\tikz \draw[dashed,->,>=stealth, thick] (0,0) -- (0.6,0); : $\textsf{type} = 1$\\

\textsf{Bcast}: $\textsf{w0}\mapsto\textsf{m0}, \textsf{w1}\mapsto \textsf{m1},$\\ $\textsf{d0}\mapsto \perp, \textsf{d1}\mapsto\perp$\\
\end{minipage}
\caption{{\bf Left}: Pseudo-code of a simplified round-based voting algorithm.
{\bf Right}: Corresponding process template representation. 
In the guards, the variable \textsf{mi} denotes the number of messages of type \textsf{mi} received by the process. }
\label{fig:RPT-example}
 \label{alg:voting}
\end{figure}

Before proceeding to the formal definition of templates, we illustrate it on the algorithm in Fig.~\ref{alg:voting} (left) for which Fig.~\ref{fig:RPT-example} (right) gives a process template.
The set of parameters is $\mathsf{P} = \{n, t\}$, with admissible valuations satisfying the resilience condition $\mathsf{rc} = t <n/2$. 
The locations $\mathcal{L}= \{\mathsf{w0, w1, d0, d1}\}$ represent a process's position within a round: either in a \emph{wait location} ($\mathsf{w0}$ or $\mathsf{w1}$, depending on its current vote) or in a \emph{decide location} ($\mathsf{d0}$ or $\mathsf{d1}$, depending on its decision).
These locations correspond to the \emph{wait} and \emph{compute} steps in the pseudo-code. 
The \emph{broadcast} and \emph{progress} steps are treated implicitly: after updating its vote, a process immediately progresses to the next round, broadcasts its new vote, and re-enters a wait location. 
The message types $\mathcal{M} = \{\mathsf{m0}, \mathsf{m1}\}$ represent the possible votes a process may broadcast. The transition rules, represented by edges, are labeled with a threshold guard on received messages. For example, guard \textsf{g1} corresponds to the conjunction ``at least $n-t$ messages are received'' and ``more than $n/2$ are of type $\mathsf{m0}$''. Transitions are also equipped with a \textsf{type} that specifies how the round id evolves: type $0$ indicates that the process remains in its current round, while in general type $k$ indicates that it progresses by $k$ rounds. Our example only has type $0$ and type $1$ transitions represented by solid and dashed edges respectively. More generally, round-tag increases of up to $b \in \mathbb{N}$ are allowed, where $b$ is referred to as the jump bound.
A broadcast function \textsf{Bcast} maps each location to the message sent while entering it: here, for $i = 1,2$, \textsf{wi} maps to \textsf{mi}, and \textsf{di} maps to $\perp$ (i.e., no broadcast). 
As exhibited by this example, we assume that the subgraph formed by the solid edges (i.e., those with $\textsf{type} = 0$) is acyclic. 

\begin{definition}[Round-based process templates]\label{def:RPT}
Let $b \in \mathbb{N}$ be a \emph{jump bound}. A \emph{round-based process template}, or simply a \emph{template}, is a tuple
$\langle \mathsf{P}, \mathsf{rc}, \mathcal{L}, \mathcal{I}, \mathcal{M}, \mathsf{Bcast}, \mathsf{Rules} \rangle$
where:
\begin{itemize}
    \item $\mathsf{P}$ is a finite set of \emph{symbolic parameters}, including $n$, the number of processes;
    \item $\mathsf{rc} \in \mathsf{LinArith}(\mathsf{P})$ is a \emph{resilience condition};
    \item $\mathcal{L}$ is a finite set of \emph{locations}, and $\mathcal{I} \subseteq \mathcal{L}$ is the set of \emph{initial locations};
    \item $\mathcal{M}$ is a finite set of \emph{message types};
    \item $\mathsf{Bcast} : \mathcal{L} \to \mathcal{M} \cup \{\perp\}$ is the \emph{broadcast function};
    \item $\mathsf{Rules}$ is a finite set of \emph{transition rules}, where transition rule $\rho$ is a record of the form %
    $\rho = \{\mathsf{from} \in \mathcal{L},\ \mathsf{to} \in \mathcal{L},\ \mathsf{type} \in \llbracket 0,b \rrbracket,\ \mathsf{guard} \in \mathsf{LinArith}(\mathcal{M} \cup \mathsf{P})\}.$
    The set $\mathsf{Rules}$ is required to satisfy the following conditions: 
    \begin{enumerate}
        \item the subgraph induced by the rules $\rho$ with $\rho.\mathsf{type}=0$ is acyclic;
        \item no rule $\rho$ with $\rho.\mathsf{type}=0$ has $\rho.\mathsf{to} \in \mathcal{I}$;
        \item for all rules $\rho,\rho' \in \mathsf{Rules}$ such that $\rho.\mathsf{type}=0$ and $\rho'.\mathsf{from}$ is reachable from $\rho.\mathsf{to}$ by a finite sequence of type-$0$ rules, the guard $\rho'.\mathsf{guard}$ is monotone in the message variables occurring in $\rho.\mathsf{guard}$.
    \end{enumerate}
\end{itemize}
\end{definition}

In this definition, $\mathsf{rc}$ specifies the admissible parameter valuations; $\mathcal{L}$ and $\mathcal{M}$ denote, respectively, the locations a process may occupy and the message types it may send in a round; and each rule specifies a source location, a target location, a round increment of at most $b$, and the guard under which it is enabled. Condition~$1$ ensures that each process visits each location at most once within a round. Condition~$2$ ensures that processes enter initial locations only at the start of a new round. Condition~$3$ is a technical monotonicity condition adapted from~\cite{RMA-TA-ATVA20}. It is needed for the completeness proof of the received-message abstraction, one of the six transformations presented in \S~\ref{sec:abstractions}. Importantly, the overall reduction remains sound even without this condition. 
Moreover, as argued in~\cite{RMA-TA-ATVA20}, the monotonicity condition is often implicitly assumed  by distributed algorithms designers, and are thus satisfied by a wide range of distributed algorithms, such as the ones from~\cite{bracha,ben-or,song_bosco_2008,10.1145/226643.226647,DBLP:journals/dc/SrikanthT87}. %

Informally, Condition~$3$ ensures that once a process takes a type-$0$ rule, increasing the message counts that appear in that rule's guard cannot disable any rule that may be taken later in the same round. For example, if $\rho.\mathsf{guard}=\mathsf{m_1}+\mathsf{m_2}<\mathsf{n}/2$, then every rule $\rho'$ whose source location is reachable from $\rho.\mathsf{to}$ by a finite sequence of type-$0$ rules, and hence may be taken after $\rho$ within the same round, must have a guard that is monotone in $\mathsf{m_1}$ and $\mathsf{m_2}$. Here, a guard $g \in \mathsf{LinArith}(X)$ is \emph{monotone} in a set $Y \subseteq X$ if, for all valuations $\mu,\mu' : X \to \mathbb{N}$, whenever $\mu$ satisfies $g$, $\mu'(y)\geq \mu(y)$ for all $y\in Y$, and $\mu'(x)=\mu(x)$ for all $x\in X\setminus Y$, it also holds that $\mu'$ satisfies $g$.

\bigskip
While a template describes the symbolic parameters and the behavior of a single process, the semantics of several interacting processes is represented by an infinite-state transition system. We first define the fixed-instance semantics, for a given parameter valuation, and then lift it to the parameterized semantics over all admissible parameter valuations.

Given a template $\mathcal{T}$ and a parameter valuation $\nu : \mathsf{P} \to \mathbb{N}$ such that $\nu \models \mathsf{rc}$, 
the system of $\nu(n)$ identical processes executing $\mathcal{T}$ is represented by the action-labeled transition system $\RDTS(\mathcal{T},\nu) = \langle S_\nu, I_\nu, Act_\nu, Tr_\nu\rangle$. 
Its semantics is sketched here and detailed in App.~\ref{app:semantics-fixed}. Elements of $S_\nu$ are \emph{global states} (or configurations), each composed of a process state and a network state. 
The \emph{process state} $\PState :\llbracket 1, \nu(n) \rrbracket \rightarrow \langle \mathsf{loc}:\mathcal{L},\mathsf{rd}:\mathbb{N},\mathsf{rcvMsg}:\mathcal{M} \times \mathbb{N} \rightarrow \mathbb{N} \rangle$ stores for each process: its current round, location, and multiset of received message types. 
The latter records how many messages of a given type were received with a given round tag. The \emph{network state}  $\NState: \mathcal{M} \times \mathbb{N} \rightarrow \mathbb{N}$ collects the number of messages of each type and round tag that were broadcast. 
In every configuration, the number of received messages (in process state) is pointwise bounded by the number of broadcast messages (in network state). Initial global states $I_\nu$ are those where all processes are in initial locations at round 0 with empty set of received messages, and the network state contains no
message. 
Actions in $Act_\nu$ are of two types: (i) a reception $\textsf{Receive}(i,\langle m,r\rangle)$, where process $i$ receives a message of type $m$ tagged with round $r$, or (ii) an update $\Update(i,\rho,r)$, where process $i$ updates its state according to rule $\rho$ of the template at round $r$. A message reception is enabled for a process if its count of received messages (of that type and round tag) is strictly less than the number of similar broadcast messages in the network state; the process then increments its message count accordingly.  An update $\Update(i,\rho,r)$ is enabled if process $i$ is at location $\rho.\frm$ in round $r$, and its received messages with round tag $r$ satisfy the guard $\rho.\guard$; the process location and its round are then updated to $\rho.\textsf{to}$ and $r + \rho.\type$, and the process broadcasts a message $\textsf{Bcast}(\rho.\textsf{to})$ {(unless $\rho.\textsf{to} = \perp$)} tagged with round $r + \rho.\type$ {on the network state}.  An \emph{execution} of $\RDTS(\mathcal{T},\nu)$ is a finite or infinite alternating sequence of configurations and transitions starting from an initial configuration. 

To reason about all admissible parameter valuations simultaneously, we lift the fixed-instance transition system $\RDTS$ to its parameterized counterpart \(\PRDTS(\mathcal{T})\), which takes the disjoint union of all $\RDTS(\mathcal{T},\nu)$ for $\nu \models \mathsf{rc}$ (see App.~\ref{app:param-sem}).

\subsection{History State-Count Properties and Logic}\label{subsec:HSCL}
The choice of a property class is central to any verification framework, as it determines both the expressiveness of the specifications and the feasibility of the reductions. 
We therefore need a specification formalism that balances three requirements: 
(i) expressiveness, to capture standard correctness conditions for $\ARDAs$, as enumerated in Fig.~\ref{fig:prop-expressivity} (left); (ii) reduction-compatibility, to establish a sound and complete reduction from parameterized $\RDTS$ to finite-counter systems; (iii) $\LTL$-translatability, so that the final verification problem can be expressed as standard $\LTL$ model checking.

Previously introduced specification formalisms for threshold automata fall short in expressing standard correctness properties of $\ARDAs$. Specifically, \textsf{ELTL\textsubscript{FT}}~\cite{POPL-2017} lacks quantification over rounds and thus cannot express standard properties such as agreement.
The extension with universal round quantifiers~\cite{round-rigid} enables reasoning about properties local to individual rounds, but it still cannot express consensus properties that inherently span multiple rounds, such as agreement.
Finally,~\cite{bertrand_2021} does not define a formal specification logic and relies on assumptions about its successive abstractions that guarantee preservation of correctness.

{These limitations motivate our choice of specification formalism. We propose a semantic class of \emph{history state-count} ($\HSC$) properties, chosen to satisfy expressiveness and reduction-compatibility. We also introduce {$\HSC$ logic} ($\HSCL$), a syntactic fragment of temporal logic whose formulas denote a subset of $\HSC$ properties and are designed to be $\LTL$-translatable while retaining expressiveness. In contrast to prior state-based approaches~\cite{POPL-2017,bertrand_2021}, the $\HSC$ properties are action-based. This distinction is important for our reduction: although relevant action histories could, in principle, be encoded by auxiliary state predicates, the specifications and reduction are most naturally formulated in terms of the occurrence of transitions. Thus, instead of taking a restricted fragment of state-based $\LTL$ as the specification language, we use action-based $\HSCL$ and later translate it into $\LTL$ over the reduced finite-counter system.}

\subsubsection{History State-Count Properties}\label{subsubsec:HSC}
A History State-Count ($\HSC$) property applies to history state-count traces of executions.  For an execution $\pi$, its \emph{history state-count trace} (\hsc-trace) is a function $\hsc(\pi): \mathcal{L} \times \mathbb{N} \rightarrow \mathbb{N}$ that records, along $\pi$, the aggregate number of process visits to each location in every round. Formally, for a location $\ell$ and a round tag $r$, $\hsc(\pi)(\ell, r)$ denotes the number of update actions, with some rule $\rho$ and round tag $r'$ taken along $\pi$ such that $\rho.\textsf{to} = \ell$ and $r' + \rho.\textsf{type} = r$. Hence, the $\hsc$-trace is entirely determined by the multiset of pairs $(\rho, r)$ corresponding to the update actions occurring along the execution. Since each process can visit any location at most once per round (Condition~$1$, Def.~\ref{def:RPT}), the $\hsc$-trace equivalently captures, for every pair of a location and a round tag, the number of processes that have visited it.

For a template $\mathcal{T}$ with locations $\mathcal{L}$, a \emph{history state-count} ($\HSC$) property is a family \( \mathcal{P} = \{ \mathcal{P}_\nu \subseteq \mathsf{Dom}\ \mid\ \nu : \mathsf{P} \rightarrow \mathbb{N}\} \text{, where }\mathsf{Dom} = \{\, f : \mathcal{L}\times \mathbb{N} \to \mathbb{N} \,\}.\){ We write $\textsf{Traces}(\RDTS(\mathcal{T}, \nu)) = \{\textsf{hsc}(\pi)\ |\ \pi \text{ is an execution of $\RDTS(\mathcal{T}, \nu)$}\}$.} We write $\RDTS(\mathcal{T},\nu) \models \mathcal{P}$ iff $\textsf{Traces}(\RDTS(\mathcal{T}, \nu)) \subseteq \mathcal{P}_\nu$, and define $\PRDTS(\mathcal{T}) \models \mathcal{P}$ iff, for all valid parameter valuations $\nu$, $\RDTS(\mathcal{T},\nu) \models \mathcal{P}$.
\label{def:models-P}

\subsubsection{History State-Count Logic (\textsf{HSCL})}
Informally, \(\HSCL\) has two types of atomic constraints: (i) \emph{universal, round-local constraints} of the form $\forall r.\alpha_r$, which, for each round $r$, allow one to bound a weighted sum of visits to locations in that round by a threshold; and (ii) \emph{cumulative constraints}, written $\beta$, which allow one to bound the corresponding weighted sum aggregated over all rounds. Formulas in \(\HSCL\) are Boolean combinations (using $\neg,\land$) of these two types. Note that existential quantification over rounds is expressible as $\neg \forall r.\alpha_r \equiv \exists r.\neg \alpha_r$.

 \begin{definition}[$\HSC$]
   \label{def:HSCL}
   Let $\mathcal{X}$ be a finite set of locations and $\mathcal{Y}$ a
   finite set of parameters. Formulas of \emph{$\HSCL$} follow the grammar:
 \[\varphi \ ::= \ \forall r.\,{\alpha_r} \ \mid\ {\beta} \ \mid\ \neg \varphi \ \mid\ (\varphi \land \varphi)\]
 \[ \textrm{with }{\alpha_r} \ ::= \ \sum_{\ell \in \mathcal{X}} c_\ell \cdot \kappa(\ell,r)\ \le\ t, \quad {\beta} \ ::= \ \sum_{\ell \in \mathcal{X}} c_\ell \cdot \sum_{r \in \mathbb{N}} \kappa(\ell,r)\ \le\ t\]
 where $c_\ell \in \mathbb{N}$ are \emph{non-negative weights}; $\kappa(\ell,r)$ is a \emph{variable} (for $\ell \in \mathcal{X}$, $r \in \mathbb{N}$) interpreted as the number of process-visits to $\ell$ in round $r$; and the \emph{threshold} $t$ is a linear term over $\mathcal{Y}$ of the form $t = b_0 + \sum_{y \in \mathcal{Y}} b_y \cdot y$, with $b_0, b_y \in \mathbb{Z}$.
 We call instances of ${\alpha_r}$ \emph{round-local atoms} and instances of ${\beta}$ \emph{cumulative atoms}\footnote{Although the sum in $\beta$ may diverge, since $c_\ell \geq 0$ and $t$ is finite, satisfaction of $\beta$ is determined by a finite partial sum.}. 
 \end{definition} 

\label{sec:HSCLsemantics}
For a process template \(\mathcal{T}\) with a location set $\mathcal{L}$ and a parameter set $\mathsf{P}$, we interpret $\mathcal{X}$ as a subset of $\mathcal{L}$ and $\mathcal{Y}$ as a subset of $\mathsf{P}$. A $\HSCL$ formula $\varphi$ represents the $\HSC$ property $\mathcal{P}^\varphi = \{\mathcal{P}^\varphi_\nu \ \mid\ \nu :  \mathsf{P} \rightarrow \mathbb{N}\}$ where for a parameter valuation $\nu$, set $\mathcal{P}^\varphi_\nu= \{f : \mathcal{L}\times \mathbb{N} \to \mathbb{N}\ \mid\ f,\nu \models \varphi\}.$ Here, $f,\nu \models \varphi$ iff $\varphi$ evaluates to true under the substitution of each threshold term $t$ in $\varphi$ with its value under $\nu$ (i.e., $t [\mathcal{Y} \leftarrow \nu]$), and of each variable $\kappa(x,r)$ with $f(x,r)$.  We write $\RDTS(\mathcal{T},\nu)\models \varphi$ if $\RDTS(\mathcal{T},\nu)\models \mathcal{P}^\varphi$,and $\PRDTS(\mathcal{T})\models\varphi$ if $\PRDTS(\mathcal{T})\models \mathcal{P}^\varphi$.
\begin{figure}[t]
\begin{center}
\scriptsize 
\resizebox{\linewidth}{!}{
\begin{tabular}{l l}
      \multicolumn{2}{l}{\textbf{Agreement}: Either no correct process decides on 0, or no correct process decides on 1.} \\
      $\ \mathtt{A} := \forall r. \kappa(\textsf{d0}, r) \leq 0\ \vee \forall r. \kappa(\textsf{d1}, r) \leq 0$ \quad~& $\LTL(\mathtt{A})\ = \textbf{G}\bigl(\flocal(\textsf{d0}) \leq 0\bigr)\ \vee\ 
  \textbf{G}\bigl(\flocal(\textsf{d1}) \leq 0\bigr)$ \\[2mm]
     \multicolumn{2}{l}{\textbf{Validity}: If no correct process starts with 0, then no correct process decides on 0. } \\
      $\ \mathtt{V} := \forall r. \kappa(\textsf{d0}, r) \leq 0$ \quad~&  $\LTL(\mathtt{V})\ = \textbf{G}\bigl(\flocal(\textsf{d0}) \leq 0\bigr)$ \\[2mm]
     \multicolumn{2}{l}{
\textbf{Termination}: Eventually, all correct processes decide. } \\
      $\ \mathtt{T} :=\neg (\sum_{r} \kappa(\textsf{d0}, r) + \kappa(\textsf{d1}, r) \leq \texttt{N}_{\texttt{c} }-1)$ \quad~&  $\LTL(\mathtt{T})\ = \neg \textbf{G}\bigl(\cumul(\textsf{d0}) + \cumul(\textsf{d1}) \leq \texttt{N}_{\texttt{c}}\bigr)$ \\[2mm]
       \multicolumn{2}{l}{
\textbf{Restricted Termination}: If one correct process decides, then all correct processes decide.  } \\
      $\ \mathtt{RT}:=\neg (\sum_{r} \kappa(\textsf{d0}, r) + \kappa(\textsf{d1}, r) \leq 0) \implies \mathtt{T}$ \quad~&  $\LTL(\mathtt{RT})\ = 
      \neg \textbf{G}\bigl(\cumul(\textsf{d0}) + \cumul(\textsf{d1}) \leq 0\bigr) \ \implies\ \LTL(\mathtt{T})$ \\[2mm]
 \multicolumn{2}{l}{  
\textbf{Leader Uniqueness}:  At most one leader process per round.} \\
      $\ \mathtt{LU}:= \forall r. \kappa(\textsf{ldr}, r) \leq 1$ \quad~&  {$\LTL(\mathtt{LU})\ = \textbf{G}\bigl(\flocal(\textsf{ldr}) \leq 1\bigr)$} \\
\end{tabular}
}
\end{center}
\caption{Expressing correctness properties of $\ARDAs$ in $\HSCL$ and their $\LTL$ translations (see \S~\ref{subsec:LTL}). 
  $\mathcal{X} = \{\textsf{d0}, \textsf{d1}, \textsf{ldr}\}$ 
  denote decision-0, decision-1, and leader locations,
  respectively, and $\mathcal{Y}= \{\texttt{N}_c\}$ denotes the
   number of correct processes. Formulas for (restricted) termination assume that processes halt after making a decision. 
   }
\label{fig:prop-expressivity}
\end{figure}

$\HSCL$ can express the five standard correctness properties listed in Fig.~\ref{fig:prop-expressivity}. 
The same figure gives the corresponding $\HSCL$ formulas and their $\LTL$ translations.  We discuss the translation from $\HSCL$ to $\LTL$ in \S~\ref{subsec:LTL}. As we will see, $\HSCL$ formulas infact translate to restricted $\LTL$ properties, namely obligation properties~\cite{DBLP:conf/podc/MannaP89}.
These five properties guided the design of $\HSCL$, whose syntax is tailored to encode them effectively. Although $\HSCL$ is intentionally minimal, our reduction to a finite-counter system applies more broadly to arbitrary $\HSC$ properties. Thus, the logic is not intended as a general-purpose specification language, but rather as a concise intermediate representation that facilitates translation to $\LTL$.

\subsection{Verification Problems}\label{subsec:Verif-Problems} 

We can now define the four verification problems studied in this work. They are obtained by combining a choice of semantics, fixed-instance or parameterized, with a choice of specification formalism: either an arbitrary $\HSC$ property or a property expressed in $\HSCL$.

\begin{description}
    \item[P1 Fixed-instance $\HSC$-Verification] Given a template $\mathcal{T}$, an $\HSC$ property $\mathcal{P}$, and a parameter valuation $\nu$ with $\nu \models \mathsf{rc}$, does $\RDTS(\mathcal{T},\nu)\models\mathcal{P}$?
    \item[P2 Parameterized $\HSC$-Verification] Given a template $\mathcal{T}$ and an $\HSC$ property $\mathcal{P}$, does $\PRDTS(\mathcal{T})\models\mathcal{P}$?
    \item[P3 Fixed-instance $\HSCL$-Verification] Given a template $\mathcal{T}$, an $\HSCL$ formula $\varphi$, and a parameter valuation $\nu$ with $\nu \models \mathsf{rc}$, does $\RDTS(\mathcal{T},\nu)\models \varphi$?
    \item[P4 Parameterized $\HSCL$-Verification] Given a template $\mathcal{T}$ and an $\HSCL$ formula $\varphi$, does $\PRDTS(\mathcal{T})\models\varphi$?
\end{description}

Note that P$1$ and P$2$ are mathematical problems, whereas P$3$ and P$4$ are computational decision problems. 
The reductions developed in the next section relate these four problems to model checking problems over finite-state, one-counter, and finite-counter systems; see Table~\ref{tab:reductions}. In particular, the reduction for P$3$ yields $\LTL$ model checking over a finite-state system, and is therefore decidable. By contrast, the reduction for P$4$ yields $\LTL$ model checking over a finite-counter system.
Here, a finite-counter system \footnote{finite-counter systems are instances of Presburger counter systems~\cite{DBLP:journals/jancl/DemriFGD10} with quantifier-free predicates}, is a symbolic representation of a transition system whose states are valuations of finitely many variables, called \emph{counters}, ranging over the non-negative integers, and whose set of initial states and transition relation are definable by linear-arithmetic formulas over these counters. 
For model checking over such systems, we allow \(\LTL\) atoms to be linear-arithmetic formulas over the counters; see
App.~\ref{app:LTL-CS} for formal definitions.

$\LTL$ model checking over finite-counter systems is undecidable in general, thus the reduction above does not by itself give a decision procedure for P\(4\). We next show that P\(4\) is indeed undecidable, by extending the undecidability result of~\cite{DBLP:conf/tacas/StoilkovskaKWZ19} from synchronous round-based distributed algorithms to our asynchronous setting.

\begin{table}[bp]
  \centering
  \scriptsize
  \setlength{\tabcolsep}{5pt}
  \renewcommand{\arraystretch}{1.2}
  \caption{Reductions to model-checking (MC) problems over one-counter, finite-counter, and finite-state systems. All reductions are sound and complete.}
  \label{tab:reductions}
  \begin{tabular}{@{}ll c@{}}
    \toprule
    \textbf{Problem} & \textbf{Reduced problem} & \textbf{Ref.} \\
    \midrule
    P1 Fixed-instance $\HSC$-Verification 
      & MC of one-counter systems w.r.t. $\HSC$ 
      & Thm.~\ref{thm:FI-HSC} \\

    P2 Parameterized $\HSC$-Verification 
      & MC of finite-counter systems w.r.t. $\HSC$ 
      & Thm.~\ref{thm:P-HSC} \\

    P3 Fixed-instance $\HSCL$-Verification 
      &  MC of finite-state systems w.r.t. $\LTL$
      & Thm.~\ref{thm:FI-HSCL} \\

    P4 Parameterized $\HSCL$-Verification 
      & MC of finite-counter systems w.r.t. $\LTL$
      & Thm.~\ref{thm:P-HSCL} \\
    \bottomrule
  \end{tabular}
\end{table}

\begin{theorem} \label{thm:undec}
Parameterized $\HSCL$-Verification is undecidable.
\end{theorem}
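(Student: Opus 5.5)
We reduce from the non-halting problem for deterministic two-counter (Minsky) machines, adapting the construction of~\cite{DBLP:conf/tacas/StoilkovskaKWZ19} to the asynchronous setting. Given a two-counter machine $\mathcal{C}$ with instructions $q_0,\dots,q_h$ ($q_h$ halting), we build a template $\mathcal{T}_\mathcal{C}$ and an $\HSCL$ formula $\varphi_\mathcal{C}$ such that $\PRDTS(\mathcal{T}_\mathcal{C})\not\models\varphi_\mathcal{C}$ iff $\mathcal{C}$ halts. The parameter $n$ serves as an a priori bound on the counter values, so a halting run, which uses only bounded counter values, is simulated in a large enough instance. Undecidability of non-halting then transfers to P4.

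\textbf{Encoding.} One round of $\RDTS$ simulates one step of $\mathcal{C}$. The value $c_k$ of counter $k\in\{1,2\}$ at step $r$ is the number of processes that, in round $r$, occupy a location $\mathsf{one}_k$. All remaining processes occupy $\mathsf{zero}$. There is also a location $\mathsf{q}_j$ recording the current instruction. To avoid process identities, every process carries the instruction in its location: locations are pairs (instruction, role), and each process broadcasts its pair on entering a round. A process in round $r$ waits for $n$ messages of round $r-1$, which is possible because everyone broadcasts. Threshold guards then let it read, from round $r-1$, the current instruction (some instruction type has count $n$) and whether counter $k$ is zero ($\mathsf{one}_k$-count $=0$). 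It then moves, via a type-$1$ rule, to its new location. For an increment of counter $k$, exactly one additional $\mathsf{zero}$-process must switch to $\mathsf{one}_k$; for a decrement, exactly one $\mathsf{one}_k$ process must switch to $\mathsf{zero}$.

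\textbf{Main obstacle.} Guards cannot force that exactly one process makes the switch: under local nondeterminism, zero or several processes may do so. The plan is the one used in the synchronous case. Let processes choose freely, and push the checking into the specification. The formula $\varphi_\mathcal{C}$ has the shape $\mathsf{Bad}\vee\neg\mathsf{Halt}$, where
\[\mathsf{Halt}\;=\;\neg\forall r.\,\kappa(\mathsf{q}_h,r)\le 0,\]
and $\mathsf{Bad}$ is an existential (i.e.\ negated universal) round-local violation: $\exists r$ with an unfaithful counter update. Expressing $\mathsf{Bad}$ is the delicate part, because round-local atoms in $\HSCL$ only constrain sums within one round $r$, not differences between rounds $r$ and $r+1$. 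To circumvent this, the increment/decrement choice is recorded in the round where it is made, using dedicated locations $\mathsf{inc}_k,\mathsf{dec}_k$ that are visited only by switching processes. Then $\exists r.\,\neg(\kappa(\mathsf{inc}_k,r)\le 1)$, together with $\exists r.\,\neg(\ldots\le 0)$ for wrong-kind switches, captures the too-many and wrong-kind cases.

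The too-few case cannot be expressed directly, since $\HSCL$ has no lower-bound atoms. Instead, we design the template so that a process that would be required to switch but does not is blocked. Concretely, after an increment instruction every process waits in round $r+1$ until it sees at least one $\mathsf{inc}_k$ message from round $r$. If no process switched, the run deadlocks and never reaches $\mathsf{q}_h$; such a run therefore satisfies $\neg\mathsf{Halt}$ and is harmless. The jump bound $b=1$, acyclicity, Condition~2 (initial locations entered only via type-$1$ rules) and monotonicity (all thresholds are of the form $\ge$ or exact counts of $n$ messages) have to be checked on this gadget.

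\textbf{Correctness.} If $\mathcal{C}$ halts using counter values at most $B$, choose $n>B$ and let processes make the faithful choices. The resulting execution visits $\mathsf{q}_h$ and contains no bad round, so $\varphi_\mathcal{C}$ fails. Conversely, in any execution with no bad round and no deadlock, an induction on $r$ shows that the round-$r$ counts equal the configuration of $\mathcal{C}$ after $r$ steps. Hence visiting $\mathsf{q}_h$ implies that $\mathcal{C}$ halts.
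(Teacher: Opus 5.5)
Your strategy is the paper's. You reduce from non-halting of two-counter machines, take $\varphi_{\mathcal C}$ to be the negation of ``well-formed and halting'', cap increment/decrement switches per round with round-local atoms on dedicated intermediate locations, enforce ``at least one switch'' by blocking in the template, and express halting as $\neg\forall r.\,\kappa(\cdot,r)\le 0$. The one structural difference is that you replicate the instruction in every process's location instead of using a single controller process constrained by $\varphi_{\mathsf{ctrl}}$; that is a legitimate variant.

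The gap is that two of your guards read messages of a different round, and the model does not allow this. $\mathsf{Update}(i,\rho,r)$ is enabled only through $\mathsf{PState}(i).\mathsf{rcvMsg}(\cdot,r)$, so a process can test only messages tagged with its current round. Hence ``a process in round $r$ waits for $n$ messages of round $r-1$'' cannot be written as a template guard. More seriously, neither can ``every process waits in round $r+1$ until it sees at least one $\mathsf{inc}_k$ message from round $r$''. That blocking gadget is your only way of excluding the too-few case, which $\HSCL$ cannot express. So the faithfulness induction does not go through as stated.

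The repair is to wait \emph{before} the jump:
\begin{itemize}
\item processes evaluate the round-$r$ messages while still in round $r$;
\item the switcher takes a type-$0$ rule to $\mathsf{inc}_k$ (resp.\ $\mathsf{dec}_k$) in round $r$, broadcasting a message tagged $r$;
\item every type-$1$ rule by which a non-switcher leaves an increment (resp.\ decrement) round carries the conjunct $m_{\mathsf{inc}_k}\ge 1$ (resp.\ $m_{\mathsf{dec}_k}\ge 1$).
\end{itemize}
This is exactly the paper's $m_{\mathsf{ack}}>0$ guard on all type-$1$ rules.

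The converse direction should then be phrased as follows. A process visiting a $q_h$-location in round $R$ has left every round $r<R$. So each increment/decrement round below $R$ contains at least one switch, hence exactly one by $\neg\mathsf{Bad}$. Your hypothesis ``no deadlock'' is not the right one, since a halting run also stops.

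Two smaller fixes are also needed. First, round $0$ has an empty network: initial locations are not entered by an update and so broadcast nothing, so no process ever collects $n$ round-$0$ messages. Add a dummy initial location with a trivially true type-$1$ rule into $(q_0,\mathsf{zero})$. Second, $n$ must be at least the largest value of $c_1+c_2$ along the halting run (e.g.\ $n\ge 2B$), not merely larger than $B$.
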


\begin{proof}
We reduce from the non-halting problem for two-counter machines (known to be undecidable~\cite{PMCP_book}). %
Let $\mathcal C$ be a two-counter machine with a finite set of control states $S$, an initial state $s_{\mathsf{init}}$, a halting state $s_{\mathsf{halt}}$, and a  finite set of commands $\Delta$. Each command in $\Delta$ has the form $(s,\mathsf{op},s')$, where $\mathsf{op}$ is an operation of the form $\mathsf{inc}(c_i)$, $\mathsf{dec}(c_i)$, or $\mathsf{iszero}(c_i)$ for $i\in\{1,2\}$.
Commands with operations $\mathsf{dec}(c_i)$ and $\mathsf{iszero}(c_i)$ are enabled only when $c_i>0$ and $c_i=0$, respectively.
From $\mathcal C$ we construct a round-based process template $\mathcal T_{\mathcal C}$ and an $\HSCL$ formula $\varphi_{\mathcal C}$ such that $\mathcal C$ has no halting execution iff $\PRDTS(\mathcal T_{\mathcal C}) \models \varphi_{\mathcal C}$.

The construction maintains the following invariant. The $r$-th configuration $(s,v_1,v_2)$ of an execution of $\mathcal C$ is represented by a global configuration in which all processes have just jumped to round $r$, with one process, referred to as the controller, in location $l_s$, and $v_i$ many processes in location $l_{c_i}$ (for $i = 1,2$), and the rest of the processes are in $l_{\mathsf{res}}$ (``reserve''). Thus, initially, one process is in $l_{s_{\mathsf{init}}}$ and all other processes are in $l_{\mathsf{res}}$. 

A transition in $\mathcal C$ from configuration $(s,v_1,v_2)$ on a command, say, $(s,\mathsf{inc}(c_1),s')$, is simulated as follows. First, the controller moves from $l_s$ to an intermediate location $l_{(s,\mathsf{inc}(c_1),s')}$ in the same round and broadcasts message $m_{\mathsf{inc},c_1}$; this message enables a process to move from $l_{\mathsf{res}}$ to another intermediate location $l_{\mathsf{inc},c_1}$ of the same round and broadcast the message $m_{\mathsf{ack}}$; this acknowledgment then allows all processes to move to their corresponding locations in the next round, i.e., the controller to $l_{s'}$, the process in $l_{\mathsf{inc},c_1}$ to $l_{c_1}$, and the remaining processes to stay in their current locations. Upon entering the new round, each non-controller process broadcasts a ready message indicating whether it is in $l_{c_1}$, $l_{c_2}$, or $l_{\mathsf{res}}$. 
Once the controller enters the next round and receives all $n-1$ ready messages, the simulated transition is complete. The other commands are simulated similarly; in particular, the ready-messages counts are used to test the enabledness of decrement and zero-test commands.

The template semantics is an over-approximation of the intended simulation.
For example, several processes in $l_{\mathsf{res}}$ may receive $m_{\mathsf{inc},c_1}$ and move to $l_{\mathsf{inc},c_1}$ in the same round, thereby increasing the value of $c_1$ by more than one.
The formula $\varphi_{\mathcal C}:=\neg\psi_{\mathcal C}$ where $\psi_{\mathcal C}$ conjuncts conditions ensuring well-formedness of the simulation with a halting condition: in each round, at most one process may visit some location $l_s$ with $s\in S$, at most one process may visit an intermediate location $l_{\mathsf{inc},c_i}$ or $l_{\mathsf{dec},c_i}$, and some process eventually visits $l_{s_{\mathsf{halt}}}$.
Thus, $\varphi_{\mathcal C}$ holds on all executions of $\PRDTS(\mathcal T_{\mathcal C})$ iff no well-formed simulated execution of $\mathcal C$ reaches $s_{\mathsf{halt}}$.

The full definition of $T_\mathcal{C}$ and $\varphi_{\mathcal C}$ is given in App.~\ref{app:undecide}.
\end{proof}

\section{Reductions} %
\label{sec:abstractions}
This section presents the reductions summarized in Table~\ref{tab:reductions}. 
They are obtained by a sequence of sound and complete transformation steps. 
Due to space constraints, we present only the intuition behind each step and defer the formal definitions and proofs to the App.~\ref{app:abstractions}. %

\subsection{Reductions for \textsf{HSC} Properties}\label{subsec:CS}
In this subsection, we present a reduction pipeline for $\HSC$ properties, shown in Fig.~\ref{fig:abstractions}. The pipeline is sound and complete for $\HSC$ properties, and reduces fixed-instance semantics to one-counter systems and parameterized semantics to finite-counter systems.  At a high level, the counters of the reduced systems record, for each relevant round, (i) the number of processes at each location and (ii) the number of messages of each type that have been broadcast. For the fixed-instance semantics, given a template $\mathcal{T}$ and valuation $\nu$, the system $\RDTS(\mathcal{T},\nu)$ reduces to a finite-counter system in which all counters except one are bounded by the number of processes $\nu(n)$; hence, the target is a \emph{one-counter system}. 
For the parameterized semantics, we exploit the fact that our reductions for fixed instances guarantee that the set of counters is identical across parameter valuations. 
Taking the disjoint union over all admissible $\nu$ thus yields a \emph{finite-counter system}. %
The reduction pipeline consists of four steps. At each step $X$, we denote by $\RDTS^X(\mathcal{T}, \nu) = \langle S^X_\nu,I^X_\nu,Act^X_\nu,Tr^X_\nu\rangle$ the fixed-instance semantics after the step $X$.

\begin{figure}[ht]
\centering
\begin{tikzpicture}[
  >=Latex,
  font=\scriptsize,
  node distance=11mm and 16mm,
  box/.style={draw, rounded corners, fill=gray!7, inner sep=2pt,
              minimum width=12mm, minimum height=6mm, align=center},
  box1/.style={draw, rounded corners, fill=red!7, inner sep=2pt,
              minimum width=12mm, minimum height=6mm, align=center},
  box2/.style={draw, rounded corners, fill=blue!7, inner sep=2pt,
              minimum width=12mm, minimum height=6mm, align=center},
  arrAbstr/.style={-Latex, semithick},             
  arrEquiv/.style={<->, semithick},                 
  arrRefine/.style={-Latex, semithick, dotted},     
  arrRestr/.style={<->, semithick, dashed}         
]

\node[box1] (rdts) {Concrete\\ semantics};
\node[box, right=of rdts] (rma) {${\RMA}$\\ semantics};
\node[box, right=of rma] (pia) {${\PIA}$\\ semantics};

\node[box, right=of pia] (ssr) {${\SSR}$\\ semantics};
\node[box2, right=of ssr] (bwa) {${\BWA}$\\ semantics};

\draw[arrAbstr, bend left=15] (rdts) to
  node[midway, above=0.6mm, align=center]{Prop.~\ref{cor:RMA-sound-complete}\\{abstraction}}  (rma);

\draw[arrRefine, bend left=15] (rma) to
  node[midway, below=0.6mm, align=center]{Prop.~\ref{cor:RMA-sound-complete}\\refinement}
  (rdts);

\draw[arrEquiv] (rma) -- (pia)
  node[midway, above=0.6mm, align=center]{Prop.~\ref{lem:PIA-completeness}\\bisimulation };

\draw[arrRestr] (pia) -- (ssr)
  node[midway, above=0.6mm, align=center]{Prop.~\ref{lem:SSR-completeness}\\sound \\restriction};

\draw[arrEquiv] (ssr) -- (bwa)
  node[midway, above=0.6mm, align=center]{Prop.~\ref{lem:BWA-completeness}\\bisimulation};

\end{tikzpicture}
\caption{Reduction pipeline for fixed-instance and parameterized verification of $\HSC$ properties. All steps are sound and complete. $\RMA$ stands for received-message abstraction, $\PIA$ for process-identity abstraction, $\SSR$ for semi-synchronous restriction, and $\BWA$ for bounded-window abstraction. In the fixed-instance (resp. parameterized) setting, the target semantics $\BWA$ is a one-counter (resp. finite-counter) system.}
\label{fig:abstractions}
\end{figure}

\subsubsection*{Step 1. Received-Message Abstraction ($\RMA$)} 
The first step exploits the fact that the \textsf{hsc}-trace of an execution is insensitive to \(\textsf{Receive}\) transitions. 
Consequently, receive transitions can be treated as stutter steps. 
In the abstract system $\RDTS^\RMA(\mathcal{T}, \nu)$, the receive transitions are dropped and 
the elements of $S^\RMA_\nu$ remain pairs of process and network states, except that the process state 
is simplified to $\PState : \llbracket 1, \nu(n) \rrbracket \rightarrow \langle \mathsf{loc}:\mathcal{L}, \mathsf{rd}:\mathbb{N} \rangle$, omitting $\mathsf{rcvMsg}$. 
Since $\Update$ transitions in $\RDTS(\mathcal{T}, \nu)$ are enabled only when their guards are justified by received messages, enabledness must now be redefined: in the abstract system, guards are evaluated \emph{existentially} with respect to subsets of messages recorded in the network state~$\NState$. 
While abstraction of received messages has appeared in prior work~\cite{RMA-TA-ATVA20,bertrand_2021} in a state-based form, we give here an action-based formulation that treats receive transitions as stuttering steps. This yields refinement and forward simulation in the sense of Lynch et al.~\cite{simulations}.%

With $\textsf{Receive}$ transitions as stutter steps, we establish a forward simulation \cite[\S~3.2]{simulations} between the concrete semantics and the $\RMA$ semantics, proving that $\RMA$ is sound for $\HSC$ properties. 
The completeness of $\RMA$ is more subtle. The abstraction discards the concrete record of which messages a process has already \emph{received and used} to satisfy previous guards. 
As a result, in the abstract enabledness relation, the witness for one update may be $R_1 \le \NState$, while the witness for the very next update may be a different $R_2 \le \NState$, without respecting the concrete monotonicity constraint that the second guard must hold on a super-set of messages already used (i.e., $R_2 \supseteq R_1$). Consequently, the abstraction may suggest that two updates can be carried out consecutively, even though no message-delivery schedule in the concrete system can realize them.  Condition~$3$ in Def.~\ref{def:RPT} is imposed to exclude this phenomenon: once a process takes a type-$0$ rule, increasing the message counts used to justify its guard cannot disable any rule that may be taken later in the same round. Under this condition, the abstract witnesses for future updates can always be arranged consistently with a concrete receive schedule. Thus, with \textsf{Receive} transitions as stutter steps, we establish a refinement \cite[\S~3.1]{simulations} between the $\RMA$ semantics to the concrete semantics, proving that $\RMA$ is complete for $\HSC$ properties.

\begin{proposition}[Soundness and Completeness of $\RMA$]\label{cor:RMA-sound-complete} 
For every template $\mathcal{T}$ and parameter valuation $\nu$: $\mathsf{Traces}(\RDTS(\mathcal{T}, \nu)) =\mathsf{Traces}(\RDTS^\RMA(\mathcal{T}, \nu))$.
\end{proposition}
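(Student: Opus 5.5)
We prove the two inclusions separately, using the fact that the $\hsc$-trace of an execution is determined by the multiset of pairs $(\rho,r)$ of its $\Update$ actions. It therefore suffices to relate executions of $\RDTS(\mathcal{T},\nu)$ and $\RDTS^\RMA(\mathcal{T},\nu)$ that have the same sequence of update actions. In both directions, $\mathsf{Receive}$ transitions are treated as stutter steps.

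\emph{Soundness} ($\subseteq$). Define the abstraction map $h$ that forgets $\mathsf{rcvMsg}$ and keeps $\mathsf{loc}$, $\mathsf{rd}$ and $\NState$. We check three things:
\begin{itemize}
\item $h$ maps initial states to initial states.
\item A $\mathsf{Receive}$ step leaves $h$ unchanged, so it is matched by the empty step.
\item An enabled $\Update(i,\rho,r)$ in the concrete system is enabled in the abstraction. The witness is $R := \mathsf{rcvMsg}_i(\cdot,r)$, which satisfies $R \le \NState(\cdot,r)$ by the global invariant and satisfies $\rho.\guard$. The effects on location, round and network coincide.
\end{itemize}
Hence $h$ is a forward simulation, and by induction over finite and infinite executions every concrete execution maps to an abstract one with the same update sequence, hence the same $\hsc$-trace.

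\emph{Completeness} ($\supseteq$). Given an abstract execution, we build a concrete one with the same updates by inserting $\mathsf{Receive}$ actions just before each $\Update(i,\rho,r)$. We maintain the invariant that, for each process $i$ and its current round $r$, $\mathsf{rcvMsg}_i(\cdot,r)$ is some multiset $R^i$ with $R^i\le \NState(\cdot,r)$. Before an update with abstract witness $W\le\NState(\cdot,r)$, we let process $i$ receive up to $R' := \max(R^i, W)$ (pointwise). This is feasible because both $R^i$ and $W$ are bounded by the current network state, and messages are never removed.

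The key point is that $\rho.\guard$ still holds on $R'$. There are two cases:
\begin{itemize}
\item If this is the first update of $i$ in round $r$, then $R^i$ is empty. Receptions tagged $r$ cannot occur before $i$ reaches round $r$: we only ever deliver round-$r$ messages to $i$ while it sits in round $r$, so we choose not to deliver anything earlier. Hence $R'=W$.
\item Otherwise, $i$ reached $\rho.\frm$ in round $r$ via a chain of type-$0$ rules $\rho_1,\dots,\rho_k$ taken in round $r$. Condition~3 of Def.~\ref{def:RPT} says $\rho.\guard$ is monotone in the message variables of each $\rho_j.\guard$. We pass from $W$ to $R'$ by raising only coordinates on which $R^i$ exceeds $W$. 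The hard part is showing that every such coordinate is a variable occurring in some earlier $\rho_j.\guard$.
\end{itemize}

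To make that hard case go through, we strengthen the construction: when justifying $\rho_j$, we deliver only $\max(R^i,W_j)$ restricted to the variables of $\rho_j.\guard$, and we leave other coordinates unchanged. Inductively, the coordinates where $R^i$ is nonzero are then contained in the union of variables of earlier guards in the round. Monotonicity then lets us raise $W$ coordinate by coordinate to $R'$ while preserving the guard. Variables not occurring in the guard are irrelevant to its truth, since guards are linear formulas over $\mathcal{M}\cup\mathsf{P}$.

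For infinite abstract executions, applying this construction step by step yields an infinite concrete execution with the same update sequence. The inserted receives are finite between consecutive updates, so no fairness issue arises for trace equality. I expect the main obstacle to be this bookkeeping of which message coordinates were ``used'', so that Condition~3 applies exactly. That includes the case where a round is entered via a type-$k>0$ rule: there $R^i$ for the new round must start empty, which follows from delivering nothing tagged with a round before the process reaches it.
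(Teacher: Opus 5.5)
Your proposal is correct. The soundness half is the paper's forward-simulation argument (Lemma~\ref{lem:rma-forward-simulation}), with the same witness $\mathsf{rcvMsg}_i$. For completeness, however, you take a genuinely different route.

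\textbf{The paper's route.} The paper defines a memoryless refinement mapping $\beta_{\RMA}$ from abstract to concrete configurations (Lemma~\ref{lm:RMA-refinement}) that \emph{saturates} receptions. In the image, each process has received all broadcasts tagged with earlier rounds. In its current round it has received all broadcasts of every type in $M_i$, which collects the message variables of the guards of \emph{all} type-$0$ rules from which its current location is reachable in the type-$0$ graph. Each abstract step is then matched by $\Recv^*\,a\,\Recv^*$ between canonical images. Condition~3 lets the witness $R$ be raised to the saturated value $\NState$ on $F\cap M_{i_0}$. A realignment phase afterwards delivers messages to every process, for past rounds and current rounds, to restore the canonical form.

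\textbf{Your route.} You build the concrete execution inductively along the abstract one. You deliver only to the updating process, only up to $\max(R^i,W_j)$, and only on that guard's own variables. You then track the variables of the guards actually taken in the current round, so Condition~3 is applied only along the realized type-$0$ path.

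\textbf{Comparison.} Your construction is leaner: no saturation, no realignment, and no bookkeeping for other processes or past rounds. The cost is that it is a history-dependent construction on executions rather than a per-state correspondence. That is enough for trace equality, but the paper's mapping $\beta_{\RMA}$ yields a reusable state-level refinement at the price of heavier bookkeeping.

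Both arguments hinge on the same invariant: current-round receptions must be confined to variables of earlier type-$0$ guards in that round. This is why your restriction to the guard's variables, rather than the naive $\max(R^i,W)$ you first wrote, is essential. An unrestricted $W$ may be nonzero on coordinates in which later guards are not monotone.
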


\subsubsection*{Step 2. Process-Identity Abstraction ($\PIA$)}
The second step exploits the fact that the \(\hsc\)-trace of an execution is insensitive to the identities of the processes performing \(\Update\) transitions. 
Consequently, process identities can be dropped, resulting in an abstraction in which the exact states of processes are no longer recorded, only their counts.
In the resulting abstract system $\RDTS^\PIA(\mathcal{T}, \nu)$, the elements of $S^\PIA_\nu$ remain pairs of process and network states, except that the process state is simplified to $\PState : \mathcal{L} \times \mathbb{N} \rightarrow \mathbb{N}$.
Here $\PState(\ell, r)$ denote the number of processes in location $\ell$ of round $r$.
Elements of $Act^\PIA_\nu$ are of the form $\Update(\rho, r)$, again omitting process identities.
Since the seminal paper~\cite{GS-jacm92}, this counting abstraction is classical in the verification of systems composed of identical anonymous processes. The process-identity abstraction induces an action-based bisimulation~\cite[Def.~7.15]{book:PMC} between $\RDTS^{\RMA}(\mathcal{T}, \nu)$ and $\RDTS^{\PIA}(\mathcal{T}, \nu)$, which entails the following result:

\begin{proposition}
[Soundness and Completeness of $\PIA$]\label{lem:PIA-completeness}
For every template $\mathcal{T}$ and parameter valuation $\nu$,
\(\mathsf{Traces}(\RDTS^\PIA(\mathcal{T}, \nu)) = \mathsf{Traces}(\RDTS^\RMA(\mathcal{T}, \nu)).\)
\end{proposition}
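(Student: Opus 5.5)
The plan is to define the counting map and show it is an action-based bisimulation that preserves enabledness and the multiset of update labels. The map $h : S^\RMA_\nu \to S^\PIA_\nu$ sends a configuration $(\PState,\NState)$ to $(\widehat{\PState},\NState)$, where $\widehat{\PState}(\ell,r) = |\{ i \in \llbracket 1,\nu(n)\rrbracket \mid \PState(i).\loc=\ell,\ \PState(i).\rd = r\}|$. On actions, the label map sends $\Update(i,\rho,r)$ to $\Update(\rho,r)$. The relation will be $\mathcal{R} = \{(s,h(s)) \mid s \in S^\RMA_\nu\}$.

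First, initial states. An initial $\RMA$ configuration has every process in some location of $\mathcal{I}$ at round $0$ and an empty network. Its image is therefore a $\PIA$ configuration supported on $\mathcal{I}\times\{0\}$ with total count $\nu(n)$ and an empty network, which is exactly a $\PIA$ initial configuration. Conversely, any such count vector is the image of some initial assignment, obtained by distributing the identities $1,\dots,\nu(n)$ arbitrarily according to the counts.

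Second, the forward direction of transfer. Suppose $\Update(i,\rho,r)$ is enabled in $s$. Then process $i$ sits at $(\rho.\frm,r)$, so $\widehat{\PState}(\rho.\frm,r)\ge 1$. Moreover, the $\RMA$ guard condition (there exists $R \le \NState$ restricted to round tag $r$ satisfying $\rho.\guard$) refers only to $\NState$, $\rho$ and $r$, not to $i$. Hence $\Update(\rho,r)$ is enabled in $h(s)$. The effect in $\RMA$ moves one process from $(\rho.\frm,r)$ to $(\rho.\textsf{to},r+\rho.\type)$ and adds $\textsf{Bcast}(\rho.\textsf{to})$ with tag $r+\rho.\type$ to $\NState$ (unless it is $\perp$). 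The $\PIA$ effect decrements and increments exactly those counters and performs the same network update, so $h$ commutes with the step.

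Third, the backward direction of transfer. Suppose $\Update(\rho,r)$ is enabled in $h(s)$. Then $\widehat{\PState}(\rho.\frm,r)\ge 1$, so we pick any $i$ with $\PState(i)=(\rho.\frm,r)$. The guard is satisfied for the same reason as before, and the resulting state is mapped by $h$ to the $\PIA$ successor. Together these give a bisimulation whose matched transitions carry the same $(\rho,r)$.

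Finally, induction over finite or infinite executions shows two things: every $\RMA$ execution has a $\PIA$ execution with the same sequence of $(\rho,r)$ labels, and every $\PIA$ execution lifts to such an $\RMA$ execution. For infinite executions, the lifting is built step by step, choosing identities successively. Since the $\hsc$-trace depends only on the multiset of pairs $(\rho,r)$ of update actions, as noted in \S\ref{subsubsec:HSC}, the two trace sets coincide.

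The main obstacle is making sure the $\RMA$ enabledness condition really does not depend on the identity $i$, i.e., that the existential witness ranges over the global $\NState$ and not over per-process data. This should hold by construction of the received-message abstraction, since $\mathsf{rcvMsg}$ was removed in Step 1. I would verify it against the formal definition in the appendix before writing the rest.
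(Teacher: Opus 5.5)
Your proposal is correct and follows essentially the same route as the paper. The paper also defines the counting projection $\alpha_\PIA$ together with the relabeling $\delta(\Update(i,\rho,r))=\Update(\rho,r)$, and proves an action-based bisimulation covering initial states, forward transfer and backward transfer. The key observation in both is the one you flag: the $\RMA$ guard witness depends only on $\NState$ and not on the process identity. The paper then projects and lifts executions to conclude that the $\hsc$-traces coincide.
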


The system $\RDTS^{\PIA}(\mathcal{T},\nu)$ is a counter system with \emph{infinitely many} counters. 
Indeed, each round $r \in \mathbb{N}$ introduces $|\mathcal{L}|+|\mathcal{M}|$ counters: a location counter $\PState(\ell,r)$ for every $\ell \in \mathcal{L}$ and a message counter $\NState(m,r)$ for every $m \in \mathcal{M}$. 
To obtain a finite-counter reduction, a natural question is whether all counters must be tracked at all times. 
Intuitively, counters become redundant once they can no longer influence future updates.
To capture this, we define the \emph{relevant window} of a configuration as the interval of round tags whose counters may be nonzero and can affect future updates. 
Under the $\PIA$ semantics, for a configuration $\cfg$, this window spans from the smallest ($\rmin(\cfg)$) to the largest ($\rmax(\cfg)$) round occupied by a process in $\cfg$.
A uniformly bounded relevant window across all configurations enables a direct reduction to a finite number of counters, as discussed in Step~4.
However, due to asynchrony, the \emph{round drift} $\rmax(\cfg) - \rmin(\cfg)$ is unbounded across configurations of $\RDTS^{\PIA}(\mathcal{T},\nu)$, preventing such a reduction. 
To this end, our next step constrains the enabledness of updates to enforce a uniformly bounded relevant window while preserving soundness for $\HSC$ properties.

\subsubsection*{Step 3. Semi-Synchronous Restriction ($\SSR$)}
The third step exploits the fact that the $\hsc$-trace of an execution is insensitive to the order of $\Update$ transitions. 
Consequently, verification can be restricted to a subset of executions of $\RDTS^\PIA(\mathcal{T}, \nu)$ such that every execution has a counterpart within this subset that executes the same multiset of $\Update$ actions. 
To achieve this, we impose a \emph{semi-synchronous restriction} (\SSR), which permits an action $a = \Update(\rho, r)$ from a configuration $\cfg$ only if its target round $\tgt(a) = r + \rho.\type$ is at least $\rmax(\cfg)$, i.e., the highest round currently occupied by a process in $\cfg$, also called the \emph{frontier round} of $\cfg$. 
The $\SSR$ therefore restricts executions by forbidding transitions that would move a process below the frontier round. 
As a result, since round jumps are bounded by $b$, the set of relevant counters in a configuration $\cfg$ is confined to the interval $\llbracket \rmax(\cfg) - b,\, \rmax(\cfg)\rrbracket$, which has a uniform size of $b{+}1$ across all configurations. To establish the soundness of $\SSR$ for $\HSC$ properties, we use commutativity arguments similar to the one used in~\cite{reduction-thm}.

An execution $\pi=\cfg_1\,a_1\,\cfg_2\,a_2\cdots$ of \(\RDTS^\PIA(\mathcal{T}, \nu)\) is called \emph{semi-syn\-chronous} iff its target sequence \(\tgt(a_1)\tgt(a_2)\dots\) is nondecreasing. This notion precisely characterizes the executions of $\RDTS^{\SSR}(\mathcal{T},\nu)$. 
The equivalence follows from the observation that, for every transition $(\cfg,a,\cfg')\in Tr_\nu^\PIA$, the frontier round  evolves as $\rmax(\cfg')=\max(\rmax(\cfg),\tgt(a))$. 

\begin{lemma}
\label{lm:SSR-soundness} 
In $\RDTS^{\PIA}(\mathcal{T},\nu)$, for every execution $\pi$ there exists a semi-syn\-chronous execution $\pi'$ that starts from the same initial configuration as $\pi$ and contains the same multiset of update actions.
\end{lemma}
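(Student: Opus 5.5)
We plan a commutation (reordering) argument in the style of~\cite{reduction-thm}. The key fact is that in $\RDTS^{\PIA}(\mathcal{T},\nu)$ an action $\Update(\rho,r)$ reads and writes only counters of rounds $r$ and $\tgt(a)=r+\rho.\type\ge r$. Concretely, it reads $\PState(\rho.\frm,r)$ and evaluates $\rho.\guard$ existentially on $\NState(\cdot,r)$. It decrements $\PState(\rho.\frm,r)$, increments $\PState(\rho.\textsf{to},\tgt(a))$, and may increment $\NState(\textsf{Bcast}(\rho.\textsf{to}),\tgt(a))$. Network counters only grow, and the RMA-style existential guard is monotone in the available messages: if $R\le \NState$ witnesses the guard, it still does after $\NState$ grows.

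\textbf{Swap lemma.} The central claim is this: if $\cfg\xrightarrow{a}\cfg_1\xrightarrow{a'}\cfg_2$ with $\tgt(a)>\tgt(a')$, then $\cfg\xrightarrow{a'}\cfg_1'\xrightarrow{a}\cfg_2$ for some $\cfg'_1$. To prove it, let $r'$ be the source round of $a'$. Since $r'\le\tgt(a')<\tgt(a)$, action $a$ writes nothing in round $r'$ (it writes only in round $\tgt(a)$). The process counter $\PState(\rho'.\frm,r')$ and the messages of round $r'$ are therefore the same in $\cfg$ as in $\cfg_1$, so $a'$ is enabled at $\cfg$. In the other direction, $a'$ only adds processes and messages in round $\tgt(a')$ and removes a process from round $r'$. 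We must check that this cannot disable $a$, whose source round $r$ may coincide with $r'$ or with $\tgt(a')$. Added messages cannot hurt, by monotonicity of the existential guard. If $r=r'$ and $\rho.\frm=\rho'.\frm$, both counts were available in $\cfg$, because $a$ and then $a'$ each consumed one process from that counter. Hence $\PState(\rho.\frm,r)\ge 2$ in $\cfg$, and $a$ stays enabled after $a'$. Finally, counter updates are additive, so both orders reach the same $\cfg_2$.

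\textbf{Global reordering, finite case.} For a finite execution we use bubble sort. Whenever two adjacent actions have decreasing targets, we swap them using the swap lemma. Each swap strictly decreases the number of inversions of the target sequence, keeps the initial configuration, and preserves the multiset of actions. The process terminates with a nondecreasing target sequence, which by definition is a semi-synchronous execution.

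\textbf{Global reordering, infinite case (main obstacle).} Here bubble sort need not converge, and it must not postpone some action forever. We will instead build $\pi'$ as a limit of ever-longer sorted prefixes. Take the order that stably sorts all actions of $\pi$ by target round. This requires each target value to occur only finitely often before larger targets are needed. For each $k$, apply the finite case to the prefix $\pi_k$ of $\pi$, and extend the result by the remainder of $\pi$. The sorted prefixes consisting of actions with target $\le R$ then stabilize once $k$ is large enough, provided every round $R$ receives finitely many updates in $\pi$. We expect to prove that provision as follows. Each process visits each location at most once per round, by the acyclicity in Condition~1 of Def.~\ref{def:RPT}. Since there are $\nu(n)$ processes, round $R$ hosts at most $\nu(n)\cdot|\mathcal{L}|$ updates, and so do all rounds $\le R$ in total.

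There is one further case: infinitely many actions may share a bounded set of targets, which happens when rounds eventually stop increasing. In that case only finitely many actions have targets below the eventual maximum target, so $\pi'$ just places those actions first and keeps the infinite tail in its original order. Either way, we take $\pi'$ to be the limit of these stabilized prefixes. Consistency of the stabilized prefixes makes $\pi'$ a valid execution from the same initial configuration. Every action of $\pi$ appears exactly once in $\pi'$ because each action's position stabilizes, so $\pi'$ has the same multiset of update actions and is semi-synchronous.
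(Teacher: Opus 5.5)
Your proof is correct and follows essentially the paper's route. Both use a commutation claim for adjacent out-of-order updates: the paper states it under the weaker hypothesis $\tgt(a)\neq r'$, and your $\tgt(a)>\tgt(a')$ is exactly the case used. Both also rely on finiteness of updates per target round and take a limit of stabilizing sorted prefixes; you index the limit by prefix length rather than by round, but both yield the same stable sort by target.

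Two small touch-ups:
\begin{itemize}
\item $a$ also decrements $\PState(\rho.\frm,r)$, which may be $a'$'s source counter. So that counter is not ``the same'' in $\cfg$ and $\cfg_1$; it is larger in $\cfg$, which still gives enabledness of $a'$.
\item Your ``further case'' is vacuous. The per-round bound you prove forces target rounds to be unbounded along any infinite execution.
\end{itemize}
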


The proof proceeds by transforming any execution $\pi$ into a semi-synchronous execution through adjacent swaps of out-of-order update actions, thereby eliminating inversions in the target-round sequence. 
Each swap is justified by the \emph{commutativity property}: two consecutive updates commute whenever the source round of the later differs from the target round of the earlier. 
To ensure convergence, swaps are performed in a disciplined manner that gives a monotone sequence of prefixes converging to a well-defined limiting execution,  which is semi-synchronous.

The same commutativity property gives a stronger result that further reduces nondeterminism in semi-synchronous executions, which we exploit in the experiments reported in \S~\ref{sec:evaluation}.
For every execution, there exists a semi-synchronous execution (with the same initial configuration and the same multiset of update actions) with the additional property that we call \emph{strong}: for each round $r$, all \emph{jump updates} targeting $r$ (whose source round is smaller than $r$) happen before all \emph{local updates} targeting $r$ (whose source round equals $r$).
Moreover, all jump updates are executed as a multiset in a fixed order which can alternatively be viewed as occurring synchronously in a single step.

\begin{proposition}[Soundness and Completeness of $\SSR$]\label{lem:SSR-completeness} 
For every template $\mathcal{T}$ and parameter valuation $\nu$,
\(\mathsf{Traces}(\RDTS^\SSR(\mathcal{T}, \nu)) = \mathsf{Traces}(\RDTS^\PIA(\mathcal{T}, \nu)).\)
\end{proposition}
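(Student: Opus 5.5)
The equality splits into two inclusions. For $\mathsf{Traces}(\RDTS^\SSR(\mathcal{T},\nu)) \subseteq \mathsf{Traces}(\RDTS^\PIA(\mathcal{T},\nu))$, I would show that $\RDTS^\SSR(\mathcal{T},\nu)$ is a restriction of $\RDTS^\PIA(\mathcal{T},\nu)$. The two systems have the same states, the same initial states and the same actions. The only difference is the extra enabledness requirement $\tgt(a) \ge \rmax(\cfg)$, and the effect of an enabled action is unchanged. Consequently every execution of the $\SSR$ semantics is literally an execution of the $\PIA$ semantics, and since the $\hsc$-trace is computed from the execution, its trace is a $\PIA$ trace.

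For the converse inclusion, fix an execution $\pi$ of $\RDTS^\PIA(\mathcal{T},\nu)$, finite or infinite. By Lemma~\ref{lm:SSR-soundness} there is a semi-synchronous execution $\pi'$ of $\RDTS^\PIA(\mathcal{T},\nu)$ from the same initial configuration with the same multiset of update actions $\Update(\rho,r)$. As noted in \S~\ref{subsubsec:HSC}, $\hsc(\pi)$ depends only on the multiset of pairs $(\rho,r)$ of update actions, since $\hsc(\pi)(\ell,r')$ counts the occurrences with $\rho.\textsf{to}=\ell$ and $r+\rho.\type=r'$. Hence $\hsc(\pi')=\hsc(\pi)$, and for infinite executions this still holds because counting multiplicities in a countable multiset is insensitive to rearrangement.

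It remains to show that a semi-synchronous $\PIA$ execution is an $\SSR$ execution. I would argue by induction along $\pi'=\cfg_1 a_1 \cfg_2 a_2\cdots$ that $\rmax(\cfg_{k})=\max(\rmax(\cfg_1),\tgt(a_1),\dots,\tgt(a_{k-1}))$, using the stated evolution $\rmax(\cfg')=\max(\rmax(\cfg),\tgt(a))$. Initially $\rmax(\cfg_1)=0\le\tgt(a_1)$, because all processes start in round $0$. Monotonicity of the target sequence then gives $\tgt(a_k)\ge\rmax(\cfg_k)$, so every step satisfies the $\SSR$ guard and is therefore a transition of $\RDTS^\SSR(\mathcal{T},\nu)$.

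The real difficulty lies inside Lemma~\ref{lm:SSR-soundness}, namely the limit construction for infinite executions, but the paper states that lemma earlier and I take it as given. The remaining care point is the frontier-evolution identity. The identity $\rmax(\cfg')=\max(\rmax(\cfg),\tgt(a))$ requires that a process leaving round $r$ never lowers the maximum. This holds because an update only moves a process from round $r$ to round $r+\rho.\type\ge r$, so if the moved process was the unique occupant of $\rmax(\cfg)$, its target round is still at least $\rmax(\cfg)$.
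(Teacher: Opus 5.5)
Your proposal is correct and is essentially the paper's proof. Lemma~\ref{lm:SSR-soundness} supplies a semi-synchronous execution with the same multiset of updates, hence the same $\hsc$-trace, and identifying $\SSR$ executions with semi-synchronous $\PIA$ executions closes the argument; you reprove that identification inline from $\rmax(\cfg')=\max(\rmax(\cfg),\tgt(a))$, just as in Lemma~\ref{prop:SSRexec=SSexec}, while getting the inclusion of $\SSR$ traces into $\PIA$ traces by plain restriction, and your base case $\rmax(\cfg_1)=0$ is the correct one (the paper's appendix writes $1$).
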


\subsubsection*{Step 4. Bounded-Window Abstraction ($\BWA$)}
The fourth abstraction step exploits the observation that configurations need not maintain counters that can no longer influence future updates.
Since in $\RDTS^\SSR(\mathcal{T}, \nu)$ the size of the relevant window is bounded by $b+1$ across all configurations, the $\BWA$ forgets counters outside this window and encodes the relevant portion as a sliding window of width $b+1$. 
In the resulting abstract system $\RDTS^\BWA(\mathcal{T}, \nu)$, each configuration in $S_\nu^\BWA$ is a triple consisting of a frontier round $\rmax \in \mathbb{N}$, a process state $\PState: \mathcal{L} \times \llbracket 0, b\rrbracket \rightarrow\mathbb{N}$ and a network state $\NState: \mathcal{M} \times \llbracket 0, b\rrbracket \rightarrow\mathbb{N}$. 
The process and network states record counters only for rounds ranging from the current frontier round down to $b$ rounds below.
The $\BWA$ induces an \emph{action-based bisimulation} \cite[Def.~7.15]{book:PMC} between $\RDTS^{\SSR}(\mathcal{T},\nu)$ and $\RDTS^{\BWA}(\mathcal{T},\nu)$, establishing the following result:

\begin{proposition}[Soundness and Completeness of $\BWA$]\label{lem:BWA-completeness} 
For every template $\mathcal{T}$ and parameter valuation $\nu$,
\(\mathsf{Traces}(\RDTS^\BWA(\mathcal{T}, \nu)) = \mathsf{Traces}(\RDTS^\SSR(\mathcal{T}, \nu)).\)
\end{proposition}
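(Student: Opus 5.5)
The plan is to exhibit an explicit relation between configurations of $\RDTS^\SSR(\mathcal{T},\nu)$ and $\RDTS^\BWA(\mathcal{T},\nu)$, show that it is an action-based bisimulation that relates initial configurations and preserves action labels $\Update(\rho,r)$, and then observe that matching executions have the same $\hsc$-trace, since the $\hsc$-trace is determined by the multiset of pairs $(\rho,r)$ of update actions. The relation $\mathcal{R}$ relates an $\SSR$ configuration $\cfg=(\PState,\NState)$ to the $\BWA$ configuration $(\rmax(\cfg),\PState',\NState')$ with $\PState'(\ell,i)=\PState(\ell,\rmax(\cfg)-b+i)$ and $\NState'(m,i)=\NState(m,\rmax(\cfg)-b+i)$ for $i\in\llbracket 0,b\rrbracket$. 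Indices with negative rounds are treated as zero; equivalently, we fix the convention the $\BWA$ semantics uses near round $0$. The $\BWA$ transition relation is defined so that an action $\Update(\rho,r)$ requires $r+\rho.\type\ge\rmax$ and $r\ge\rmax-b$. When $r+\rho.\type>\rmax$, the effect first shifts the window left by $r+\rho.\type-\rmax$ (dropping the lowest counters and filling the new top slots with zeros), and only then updates the counters.

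The first key step is a \emph{window invariant} on reachable $\SSR$ configurations. For every reachable $\cfg$, all nonzero process counters $\PState(\ell,r)$ satisfy $r\in\llbracket \rmax(\cfg)-b,\rmax(\cfg)\rrbracket$. Moreover, no enabled update ever reads or writes a counter outside this window. The proof is by induction along semi-synchronous executions. Consider a process at round $r<\rmax-b$. Its round was reached by some update targeting $r$, but the frontier only reached its current value through updates targeting rounds $>r+b$; that is impossible by nondecreasing targets together with the jump bound. Concretely, such a process could never again move, since any of its updates targets at most $r+b<\rmax$. Hence I will strengthen the invariant: every process is in round $\ge\rmax-b$, or it has no enabled outgoing update forever and therefore never affects anything. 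To make the window exact, I expect to use the characterization already noted: $\rmax(\cfg')=\max(\rmax(\cfg),\tgt(a))$ and executions are semi-synchronous.

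The second step is the two transfer conditions. Given $\cfg\,\mathcal{R}\,w$ and an $\SSR$ transition $\cfg\xrightarrow{\Update(\rho,r)}\cfg'$, the enabledness check in $\SSR$ inspects only $\PState(\rho.\frm,r)$ and $\NState(\cdot,r)$. By the invariant, $r\ge\rmax-b$, so these counters lie in the window and coincide with those in $w$; hence the same action is enabled in $w$. Computing the effect, including the shift when the frontier increases, yields $w'$ with $\cfg'\,\mathcal{R}\,w'$. The converse direction is symmetric: a $\BWA$ step is enabled only on window data, and the $\SSR$ guard requires $\tgt\ge\rmax$, which matches the $\BWA$ condition.

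The main obstacle is the shift: we must verify that the counters dropped from the window are never needed again. Message counters of dropped rounds $r'<\rmax'-b$ can only be consulted by processes in round $r'$, and there are none by the invariant. Process counters at dropped rounds would have to be zero, or else belong to frozen processes. For the $\hsc$-trace this is harmless, but it must be reconciled with $\mathcal{R}$ by relating configurations only modulo the out-of-window part. If the invariant as stated fails, I will weaken $\mathcal{R}$ to require agreement only inside the window and argue that outside counters are never read. Finally, since bisimilar executions carry identical action sequences, the sets of $\hsc$-traces coincide.
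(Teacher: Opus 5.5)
Your argument is essentially the paper's: your $\mathcal{R}$ is, up to reindexing, the projection $\alpha_{\BWA}$. The paper also proves it is an action-based bisimulation from the observation that an $\SSR$-enabled $\Update(\rho,r)$ only reads and writes counters inside the window, so configurations with equal projections have the same enabled actions and equally projected successors. (The paper defines $Tr_\nu^{\BWA}$ as the existential lift of $Tr_\nu^{\SSR}$, which coincides with your explicit shift rule on $S_\nu^{\BWA}$.)

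Drop the reachability ``window invariant'', though. As first stated it is false: nothing forces a process to move, so processes can stay forever in rounds below $\rmax-b$, and your ``impossible'' argument does not go through. It is also not needed, since $r\ge\rmax-b$ follows directly from the $\SSR$ condition $\tgt(a)\ge\rmax$ and $\rho.\type\le b$, while $\mathcal{R}$ already ignores out-of-window counters.
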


The system $\RDTS^{\BWA}(\mathcal{T},\nu)$ is a finite-counter system.  %
Indeed, it maintains a single frontier-round counter $\rmax$, and for each depth $d \in \llbracket 0, b \rrbracket$, there are $|\mathcal{L}| + |\mathcal{M}|$ counters: a location counter $\PState(\ell,d)$ for every $\ell \in \mathcal{L}$ and a message counter $\NState(m,d)$ for every $m \in \mathcal{M}$. 
Moreover, $\RDTS^{\BWA}(\mathcal{T},\nu)$ can be viewed as a \emph{one-counter system} with a \emph{nondecreasing} counter. This is because the total number of processes $\nu(n)$ is bounded, and each process can send only a bounded number of messages ($\nu(n) \times |\mathcal{L}|$) per round. 
Consequently, all location and message counters are bounded, and the only unbounded counter is $\rmax$, which increases monotonically. 
Thus, from Prop.~\ref{cor:RMA-sound-complete},~\ref{lem:PIA-completeness},~\ref{lem:SSR-completeness}, and~\ref{lem:BWA-completeness}, we have:

\begin{theorem}\label{thm:FI-HSC}
For every template $\mathcal{T}$, parameter valuation $\nu$, and $\HSC$ property $\mathcal{P}$, the system $\RDTS^\BWA(\mathcal{T}, \nu)$ is a one-counter system with a nondecreasing counter $\rmax$, and $\RDTS^\BWA(\mathcal{T}, \nu) \models \mathcal{P}$  iff
$\RDTS(\mathcal{T}, \nu) \models \mathcal{P}$.

\end{theorem}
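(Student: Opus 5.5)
The theorem has two parts: a structural claim (that $\RDTS^\BWA(\mathcal{T},\nu)$ is a one-counter system whose only unbounded counter $\rmax$ is nondecreasing) and a semantic equivalence. I would prove them separately.

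\emph{Semantic equivalence.} This follows from chaining the four trace equalities:
\[
\mathsf{Traces}(\RDTS(\mathcal{T},\nu)) = \mathsf{Traces}(\RDTS^\RMA(\mathcal{T},\nu)) = \mathsf{Traces}(\RDTS^\PIA(\mathcal{T},\nu)) = \mathsf{Traces}(\RDTS^\SSR(\mathcal{T},\nu)) = \mathsf{Traces}(\RDTS^\BWA(\mathcal{T},\nu)).
\]
The equalities are given by Prop.~\ref{cor:RMA-sound-complete}, Prop.~\ref{lem:PIA-completeness}, Prop.~\ref{lem:SSR-completeness} and Prop.~\ref{lem:BWA-completeness}, in that order. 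Satisfaction of an $\HSC$ property $\mathcal{P}$ is defined purely as $\mathsf{Traces}(\cdot)\subseteq\mathcal{P}_\nu$, so equal trace sets immediately give $\RDTS^\BWA(\mathcal{T},\nu)\models\mathcal{P}$ iff $\RDTS(\mathcal{T},\nu)\models\mathcal{P}$. The one point to check is that the $\hsc$-trace of a $\BWA$ execution is defined consistently, i.e.\ that window-relative updates are read with absolute target rounds recovered from $\rmax$. I take this to be part of the bisimulation underlying Prop.~\ref{lem:BWA-completeness}.

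\emph{Structural claim, monotonicity of $\rmax$.} In $\SSR$, every transition $(\cfg,a,\cfg')$ satisfies $\rmax(\cfg')=\max(\rmax(\cfg),\tgt(a))$. The $\BWA$ bisimulation preserves the frontier round, so $\rmax$ never decreases in $\BWA$ either.

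\emph{Structural claim, boundedness of the other counters.} The main work is showing every location and message counter is bounded by a constant depending only on $\nu$ and $\mathcal{T}$.

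1. For location counters, $\PState$ counts processes in $\PIA$ (number preserved by each update), so $\sum_{\ell,d}\PState(\ell,d)\le\nu(n)$ and each entry lies in $\llbracket 0,\nu(n)\rrbracket$.

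2. For message counters, a process broadcasts only when entering a location via an update. It sends at most one message per location entry.

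3. By Condition~1 of Def.~\ref{def:RPT}, each process enters each location at most once per round. Any path within a round is acyclic on type-$0$ rules and is entered at most once per round via a jump or initialization, which forces at most one visit per location.

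4. Hence each process contributes at most $|\mathcal{L}|$ broadcasts tagged with a given round, and $\NState(m,d)\le\nu(n)\cdot|\mathcal{L}|$.

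5. Therefore the finite-counter state space modulo $\rmax$ is finite, and the system is a finite-state control augmented with the single counter $\rmax$, which only increments (by at most $b$).

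I expect the main subtlety to be step 3: making precise that a process cannot re-enter a location in the same round via a jump landing in that round after already visiting it. I would handle it with Condition~2 together with the round strictly increasing on positive-type rules. A process reaches round $r$ at most once via a jump, then follows only type-$0$ rules within $r$, and acyclicity bounds those visits.
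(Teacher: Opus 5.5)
Your proposal is correct and follows the paper's argument. The equivalence comes from chaining Propositions~\ref{cor:RMA-sound-complete}, \ref{lem:PIA-completeness}, \ref{lem:SSR-completeness} and \ref{lem:BWA-completeness}, and the one-counter claim rests on the same bounds ($\nu(n)$ for location counters, $\nu(n)\cdot|\mathcal{L}|$ for message counters via acyclicity of type-$0$ rules), with $\rmax$ the only unbounded, nondecreasing counter. Your extra care in step~3 is sound, though Condition~2 is not needed there: the strict round increase on positive-type rules together with acyclicity already gives at most one visit per location per round.
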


We define the parameterized $\BWA$ semantics, denoted $\PRDTS^\BWA(\mathcal{T})$, as the disjoint union of all fixed-instance systems $\RDTS^\BWA(\mathcal{T}, \nu)$ over admissible parameter valuations $\nu$.
Since the set of counter variables in $\RDTS^\BWA(\mathcal{T}, \nu)$ is independent of $\nu$, the states of $\PRDTS^\BWA(\mathcal{T})$ can be represented using finitely many counters: the frontier-round counter; the location and message counters inherited from $\RDTS^\BWA(\mathcal{T}, \nu)$; and an additional set of $|\mathsf{P}|$ counters encoding the parameter valuation~$\nu$.
Note, however, that the location and message counters in parameterized states need not be bounded, since the parameter $\nu(n)$ itself may range over unbounded values.
Recall from \S~\ref{subsubsec:HSC} that the parameterized semantics satisfies an $\HSC$ property iff all its admissible fixed-instance semantics do. Hence, from Thm.~\ref{thm:FI-HSC}, we have:

\begin{theorem}\label{thm:P-HSC}
For every template $\mathcal{T}$ and $\HSC$ property $\mathcal{P}$, the  system  $\PRDTS^\BWA(\mathcal{T})$ is a finite-counter system, and $\PRDTS^\BWA(\mathcal{T}) \models \mathcal{P}$ iff $\PRDTS(\mathcal{T}) \models \mathcal{P}$.
\end{theorem}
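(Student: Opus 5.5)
The proof has two parts: the finite-counter claim and the equivalence. For the first, I will exhibit a finite set of counters and linear-arithmetic formulas, independent of $\nu$, that define the initial states and the transition relation of $\PRDTS^\BWA(\mathcal{T})$. The counters are the frontier round $\rmax$, the location counters $\PState(\ell,d)$ for $\ell\in\mathcal{L}$ and $d\in\llbracket 0,b\rrbracket$, the message counters $\NState(m,d)$ for $m\in\mathcal{M}$ and $d\in\llbracket 0,b\rrbracket$, and one parameter counter $p$ for each $p\in\mathsf{P}$.

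The parameter counters are frozen by every transition, so each state carries its own $\nu$, and the system is exactly the disjoint union over $\nu$. The initial formula conjoins $\mathsf{rc}$ (over the parameter counters) with $\rmax=0$, empty messages, zero counts at all locations outside $\mathcal{I}$, and $\sum_{\ell\in\mathcal{I}}\PState(\ell,0)=n$. The transition relation is a finite disjunction over rules $\rho$ and source depths $d$. Each disjunct decrements a location counter and increments another, possibly shifts the window, and increments a message counter; it also existentially checks the guard $\rho.\guard$ against a sub-multiset of the message counters at the relevant depth.

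The one point needing care is that this existential witness and the window shift must be expressible in linear arithmetic. I will handle it as follows:
\begin{itemize}
\item The witness is a quantified formula over $\mathcal{M}\cup\mathsf{P}$, and Presburger quantifier elimination yields an equivalent quantifier-free formula (possibly with divisibility predicates). Since the paper's App.~\ref{app:LTL-CS} allows Presburger-definable transitions, it suffices to note that the formula lies in Presburger arithmetic.
\item The window shift by $k\le b$ is an explicit finite case split on $k$.
\end{itemize}
Because the guard formulas and the resilience condition $\mathsf{rc}$ are linear with $\nu$ entering only through the parameter counters, the same formula describes $\RDTS^\BWA(\mathcal{T},\nu)$ for every admissible $\nu$ simultaneously.

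For the equivalence, I use the facts that the state space of the disjoint union is partitioned by the value of the parameter counters, and that executions never change them. Hence the executions of $\PRDTS^\BWA(\mathcal{T})$ are exactly the executions of the $\RDTS^\BWA(\mathcal{T},\nu)$ with $\nu\models\mathsf{rc}$, and the same holds for $\PRDTS(\mathcal{T})$. By the definition in \S~\ref{subsubsec:HSC}, with the analogous definition for the $\BWA$ side, $\PRDTS^\BWA(\mathcal{T})\models\mathcal{P}$ holds iff $\RDTS^\BWA(\mathcal{T},\nu)\models\mathcal{P}$ for all admissible $\nu$. By Thm.~\ref{thm:FI-HSC}, this holds iff $\RDTS(\mathcal{T},\nu)\models\mathcal{P}$ for all admissible $\nu$, which is exactly $\PRDTS(\mathcal{T})\models\mathcal{P}$.

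The main obstacle I expect is one subtlety on the $\BWA$ side: the $\hsc$-trace of a $\BWA$ execution must be read off from its actions $\Update(\rho,r)$ with absolute round $r$. This holds because $\rmax$ is retained, so the absolute round equals $\rmax$ minus the window depth. I would check this carefully so that Thm.~\ref{thm:FI-HSC} applies verbatim, fibre by fibre.
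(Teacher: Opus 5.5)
Your proposal is correct and follows essentially the same route as the paper. The paper represents $\PRDTS^\BWA(\mathcal{T})$ by the $\nu$-independent window counters, $\rmax$, and $|\mathsf{P}|$ frozen parameter counters, and it obtains the equivalence fibre by fibre from Thm.~\ref{thm:FI-HSC}, using that parameterized satisfaction means satisfaction for every admissible $\nu$. Your explicit argument that the transitions are linear-arithmetic definable goes beyond what the paper argues at this point and is closer to the encoding in App.~\ref{app:counter-system}; there, make sure each disjunct for source depth $d$ includes the $\SSR$ condition $\rho.\type \ge d$ and shifts the window by $\rho.\type - d$ rather than by $\rho.\type$.
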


\subsection{Further Reductions for \textsf{HSCL} Formulas}\label{subsec:LTL}
We now present a further reduction pipeline for $\HSCL$ formulas, shown in Fig.~\ref{fig:hscl-to-ltl}. %

\begin{figure}[ht]
\centering
\begin{tikzpicture}[
  >=Latex,
  font=\scriptsize,
  node distance=11mm and 24mm,
  box/.style={draw, rounded corners, fill=gray!7, inner sep=2pt,
              minimum width=22mm, minimum height=8mm, align=center},
  box2/.style={draw, rounded corners, fill=blue!7, inner sep=2pt,
              minimum width=22mm, minimum height=8mm, align=center},
   box3/.style={draw, rounded corners, fill=green!7, inner sep=2pt,
              minimum width=22mm, minimum height=8mm, align=center},
  arrAbstr/.style={-Latex, semithick},              
  arrEquiv/.style={<->, semithick},                 
]

\node[box2] (bwa) {${\BWA}$ semantics\\{with $\HSCL$ $\varphi$}};
\node[box, right=of bwa] (hr) {${\mathsf{HRE}}$ semantics\\{ with $\LTL$ $\psi$}};
\node[box3, right=of hr] (ra) {$\RIA$ semantics \\{with $\LTL$ $\psi$}};

\draw[arrEquiv] (bwa) -- (hr)
  node[midway, above=0.6mm, align=center]
  {Prop.~\ref{lem:LTL-transaltion}\\ extension};

\draw[arrEquiv] (hr) -- (ra)
  node[midway, above=0.6mm, align=center]
  {Prop.~\ref{lem:RIA-completeness}\\abstraction};
  
\end{tikzpicture}
\caption{
Further reduction pipeline for fixed-instance and parameterized verification of $\HSCL$ formulas. 
All steps are sound and complete. 
$\BWA$ stands for bounded-window abstraction, $\HRE$ for history-record extension, and $\RIA$ for round-identity abstraction. 
In the fixed-instance (resp. parameterized) setting, the target semantics $\RIA$ is a finite-state (resp. finite-counter) system.
}
\label{fig:hscl-to-ltl}
\end{figure}

\subsubsection*{Step 5. History-Record Extension ($\HRE$)} The fifth step evaluates \(\HSCL\) formulas from the state point of view by extending each \(\BWA\) configuration with a ``history record'' that retains exactly those fragments of the execution history that are relevant to \(\HSCL\).
Given an \(\HSCL\) formula $\varphi$ over locations $\mathcal{X}$ and parameters $\mathcal{Y}$, we construct an extended system $\RDTS^\HRE(\mathcal{T},\nu,\varphi)$ by augmenting every state in \(S_\nu^\BWA\) with two counters for each $\ell \in \mathcal{X}$:
a cumulative counter \(\cumul(\ell)\), recording the number of visits to $\ell$ across all rounds; and
a local counter \(\local(\ell)\), recording the number of visits to $\ell$ in the current frontier round. Only the frontier round requires a dedicated local counter, since under \(\BWA\) no update targets a round strictly below the frontier round. Thus, once a round is left behind, its local counts of process visits become fixed. 

Observe that local counters are bounded by the total number of processes.
In contrast, cumulative counters may grow unboundedly. However, for a fixed parameter valuation $\nu$ and formula $\varphi$, each cumulative counter can be capped at one more than the largest threshold appearing in the cumulative atoms of $\varphi$.
Counts beyond this bound are irrelevant to satisfaction of \(\HSCL\) formulas and can therefore be safely truncated. 
As a result, the system $\RDTS^\HRE(\mathcal{T},\nu, \varphi)$ remains a single-counter system with a nondecreasing counter $\rmax$. 

Next, we give a recursive procedure to translate the $\HSCL$ formula $\varphi$ into an $\LTL$ formula $\LTL(\varphi)$.  We use a variant of the standard $\LTL$ syntax in which the only temporal operator is "globally", written $\mathbf{G}$, and atomic propositions are linear-arithmetic formulas over the counter variables;  see App.~\ref{app:LTL-CS} for the formal syntax and semantics in our setting. 
For universal round local atoms $\forall_r.\alpha_r$ where  $\alpha_r = \sum_{\ell\in\mathcal X} c_\ell \cdot \kappa(\ell,r) \le t$, define the translation $\LTL(\forall_r.\alpha_r) = \mathbf{G}(\sum_{\ell \in \mathcal{X}} c_\ell \cdot \mathsf{local}(\ell) \leq t)$. For cumulative atoms $\beta = \sum_{\ell\in \mathcal{X}} c_\ell \cdot \sum_r \kappa(\ell,r) \le t$, define the translation $\LTL(\beta) = \mathbf{G}(\sum_{\ell \in \mathcal{X}} c_\ell \cdot \mathsf{cumul}(\ell) \leq t)$. The Boolean cases are handled compositionally: $\LTL(\neg \varphi) = \neg \LTL(\varphi)$ and $\LTL(\varphi \wedge \varphi') = \LTL(\varphi) \wedge \LTL(\varphi')$. %
Fig.~\ref{fig:prop-expressivity} provides $\LTL$ translations for core $\HSCL$ formulas. %

\begin{proposition}[Soundness and Completeness of $\HRE$]
\label{lem:LTL-transaltion}
    For every $\HSCL$ formula $\varphi$, 
    \(\RDTS^\HRE(\mathcal{T},\nu, \varphi)\models \LTL(\varphi)\text{ iff }\RDTS^\BWA(\mathcal{T}, \nu) \models \varphi.\)
\end{proposition}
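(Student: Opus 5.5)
I would prove the statement execution by execution, via a trace-level correspondence. The plan is to show that for every execution $\pi$ of $\RDTS^\BWA(\mathcal{T},\nu)$ there is a matching execution $\hat\pi$ of $\RDTS^\HRE(\mathcal{T},\nu,\varphi)$, and conversely. The two agree on their update actions, and the extra counters of $\hat\pi$ along its states are determined by the $\hsc$-trace of prefixes of $\pi$. The extension only adds counters that are deterministically updated by each $\Update(\rho,r)$ and never constrain enabledness, so this correspondence is a bijection on executions. The universal quantification over executions on both sides then reduces the proposition to a per-execution claim:
\[
\hsc(\pi),\nu \models \varphi \quad\text{iff}\quad \hat\pi \models \LTL(\varphi).
\]
I would prove this per-execution claim by structural induction on $\varphi$. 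The Boolean cases are immediate, because $\LTL$ is defined compositionally and satisfaction on both sides is per-execution, so negation commutes. Note that the statement is about individual traces, not "all traces satisfy", so negation is harmless here.

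\textbf{Invariants.} First I would establish invariants along $\hat\pi$. After any prefix $\pi_k$, $\cumul(\ell)=\min\bigl(\sum_r \hsc(\pi_k)(\ell,r),\,K\bigr)$, where $K$ is the cap. I also need $\local(\ell)=\hsc(\pi_k)(\ell,\rmax)$ at the current frontier $\rmax$. This requires specifying how $\local$ is reset when $\rmax$ increases: it is set to the number of visits contributed by the jump update itself in the new frontier round. Both invariants follow by induction on $k$ from the update rule. The local one additionally uses the semi-synchronous property: in $\BWA$, every update targets a round $\ge \rmax$ (Step 3). Hence visits to a round $r$ occur only while $r$ is the frontier, and once $\rmax$ exceeds $r$, $\hsc(\pi_k)(\ell,r)$ is frozen. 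It follows that for each round $r$ reached, the value $\hsc(\pi)(\ell,r)$ equals the value of $\local(\ell)$ at the last state of $\hat\pi$ with frontier $r$. Moreover, $\local$ is nondecreasing while the frontier stays at $r$.

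\textbf{Atom cases.} For a round-local atom $\forall r.\alpha_r$, consider first the direction from $\hsc$ to $\LTL$. Suppose $\alpha_r$ holds for all $r$. Each state of $\hat\pi$ has $\local$ equal to a prefix count of its frontier round. Since $c_\ell\ge 0$ and counts are monotone within a round, the weighted sum at that state is bounded by the final value for that round, which is at most $t[\nu]$. Hence $\mathbf{G}$ holds. Conversely, if $\alpha_r$ fails for some round $r$, then $\sum_\ell c_\ell\,\hsc(\pi)(\ell,r)>t$. This round must have been visited, since otherwise all counts are $0$. If $t<0$, any state violates $\mathbf{G}$ anyway; if $t \ge 0$, $r$ appears as a frontier. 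The total for $r$ is reached after finitely many steps, at the last state with frontier $r$ (or at some finite point if $r$ stays the frontier forever), and that state violates the $\LTL$ atom.

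For a cumulative atom the argument is the same, using the cumulative counter and monotonicity in time. The truncation at $K = 1+\max t$ does not change any comparison $\sum c_\ell\,\cumul(\ell)\le t$. Once any $\cumul(\ell)$ with $c_\ell>0$ reaches $K>t$, the atom is false both truncated and untruncated. Weights $c_\ell=0$ are irrelevant.

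\textbf{Main obstacle.} I expect the main obstacle to be the precise handling of the $\local$ counter at frontier changes, including jumps of size up to $b$ that skip rounds. For skipped rounds all counts are $0$, so $\alpha_r$ holds there whenever $t\ge0$. When $t<0$ the atom fails on both sides immediately. I would handle this by treating the case $t[\nu]<0$ separately, and otherwise showing that the skipped-round values are dominated by those observed.
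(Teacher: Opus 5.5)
Your proposal is correct and follows essentially the same route as the paper: a bijective correspondence between $\BWA$ and $\HRE$ executions (the paper's projection $\gamma$ and its lift), then structural induction on $\varphi$ with trivial Boolean cases, and for round-local atoms the key step that $\local$ at the last state whose frontier is $r_0$ equals the full count for round $r_0$. Your explicit treatment of negative thresholds, skipped or never-reached rounds, and the saturation of $\cumul$ at the cap makes precise points the paper's proof leaves implicit; in particular, the paper simply assumes that a last position with frontier $r_0$ exists.
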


\subsubsection*{Step 6. Round-Identity Abstraction ($\RIA$)} The sixth and final step exploits the fact that the properties expressible in $\HSCL$ are insensitive to the round identifiers. Consequently, the frontier-round counter $r_{\max}$ can be dropped from $\HRE$ configurations.
The resulting abstract system $\RDTS^\RIA(\mathcal{T}, \nu, \varphi)$ induces a \emph{state-based bisimulation}  \cite[Def.~7.7]{book:PMC} with respect to $\RDTS^\HRE(\mathcal{T}, \nu, \varphi)$, giving:%

\begin{proposition}[Soundness and Completeness of $\RIA$]
\label{lem:RIA-completeness}
    For every $\HSCL$ formula $\varphi$,\\ $\RDTS^\RIA(\mathcal{T}, \nu,\varphi) \models \LTL(\varphi)$ iff $ \RDTS^\HRE(\mathcal{T}, \nu,\varphi) \models \LTL(\varphi)$.
\end{proposition}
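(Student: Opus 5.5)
The plan is to define the projection $h$ that maps an $\HRE$ configuration $(\rmax,\PState,\NState,\cumul,\local)$ to the $\RIA$ configuration $(\PState,\NState,\cumul,\local)$ by forgetting $\rmax$. I will show that the relation $\{(c,h(c))\}$ is a state-based bisimulation in the sense of \cite[Def.~7.7]{book:PMC}. The labelling is the set of linear-arithmetic atoms occurring in $\LTL(\varphi)$, and by construction these atoms mention only $\local(\ell)$, $\cumul(\ell)$ and the parameters, never $\rmax$. Related states therefore satisfy the same atoms. Since bisimilar systems satisfy the same $\LTL$ formulas (the standard result, applicable because $\LTL(\varphi)$ uses only $\mathbf{G}$ and Boolean connectives), the proposition follows once the bisimulation conditions are checked.

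Initial states first. In $\HRE$ every initial state has $\rmax=0$, and $\RIA$ initial states are defined as their images. Thus $h$ is a bijection on initial states, and each initial state is related to exactly one initial state on the other side.

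Transfer of transitions is the key point. An $\HRE$ step from $c$ under $\Update(\rho,d)$, where the rule is addressed relative to the window depth, either keeps the window ($\tgt=\rmax$) or shifts it by $k=\tgt-\rmax\in\llbracket 1,b\rrbracket$. In the second case, depths are re-indexed, counters falling out of the window are discarded, fresh depths are zeroed, and $\local$ is reset to the count of visits in the new frontier before adding the new visit. I will argue that the enabledness condition and the effect on $(\PState,\NState,\cumul,\local)$ are determined by the window-relative data and by $k$ alone. This holds because guards are evaluated on $\NState$ at a relative depth, the $\SSR$ constraint $\tgt\ge\rmax$ becomes ``target depth $\ge 0$'', and $\rmax$ is only ever incremented by $k$. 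Hence the $\RIA$ transition relation is the image of the $\HRE$ one:
\[
(c,c')\in Tr^{\HRE} \ \Rightarrow\ (h(c),h(c'))\in Tr^{\RIA},
\]
and conversely every $\RIA$ step from $h(c)$ lifts to an $\HRE$ step from $c$ by setting $\rmax':=\rmax+k$. This gives both simulation directions. I will also handle the capping of $\cumul$ and terminal or stuttering behaviour identically on both sides.

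The main obstacle is making this ``$\rmax$ is never read'' argument rigorous against the formal definitions of $\BWA$ and $\HRE$ in the appendix. One must verify that no guard, no resilience-dependent quantity, and no counter update mentions the absolute round value, and that the shift amount $k$ is recoverable from relative indices. If some definition does use absolute rounds, for instance through round-$0$ special treatment of initial locations, I would first try to add a Boolean flag ``$\rmax=0$'' to $\RIA$ states, which keeps the system finite and preserves the bisimulation.
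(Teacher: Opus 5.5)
Your proposal is correct and follows essentially the same route as the paper: the paper also projects away the frontier round, gets the forward transfer directly from $Tr^{\RIA}$ being the existential lift of $Tr^{\HRE}$, and handles the backward transfer by shifting the witnessing action $\Update(\rho,r)$ to $\Update(\rho,r+\Delta)$, where $\Delta$ is the difference of the two frontier rounds (your depth-relative addressing of rules). It then concludes because $\alpha_{\RIA}$ preserves the labels given by $\local$ and $\cumul$. Your fallback flag for $\rmax=0$ is not needed, since absolute round $0$ enters only the initialization of the history counters, where every initial state already has $\rmax=0$.
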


Note that the system $\RDTS^\RIA(\mathcal{T}, \nu, \varphi)$ is a finite-counter system, all of whose counters are bounded by a finite value determined by the parameter valuation $\nu$, and is therefore a finite-state system. By Thm.~\ref{thm:FI-HSC} and Prop.~\ref{lem:LTL-transaltion} and~\ref{lem:RIA-completeness}, we have:

\begin{theorem}\label{thm:FI-HSCL}
For every template $\mathcal{T}$, parameter valuation $\nu$, and an $\HSCL$ formula $\varphi$, 
the system $\RDTS^\RIA(\mathcal{T}, \nu, \varphi)$ is a finite-state system, and 
$\RDTS^\RIA(\mathcal{T}, \nu, \varphi) \models \LTL(\varphi) $ iff $ \RDTS(\mathcal{T}, \nu) \models \varphi$.
\end{theorem}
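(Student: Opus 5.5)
The proof has two parts: finiteness of $\RDTS^\RIA(\mathcal{T},\nu,\varphi)$, and the equivalence, which we obtain by chaining the results already stated.

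\emph{Finiteness.} Recall that a $\BWA$ configuration consists of $\rmax$, the location counters $\PState(\ell,d)$, and the message counters $\NState(m,d)$ for $d\in\llbracket 0,b\rrbracket$. Step~5 adds $\local(\ell)$ and $\cumul(\ell)$ for $\ell\in\mathcal{X}$, and Step~6 removes $\rmax$. We bound each remaining counter.
\begin{itemize}
\item Location counters are at most $\nu(n)$, since they count processes.
\item By Condition~1 of Def.~\ref{def:RPT}, each process visits each location at most once per round. Each visit broadcasts at most one message tagged with that round, so $\NState(m,d)\le \nu(n)\cdot|\mathcal{L}|$.
\item By the same visit-once argument, $\local(\ell)\le\nu(n)$.
\item By the truncation in Step~5, $\cumul(\ell)$ is capped at $1+\max_\beta t_\beta[\mathcal{Y}\leftarrow\nu]$, where $\beta$ ranges over the cumulative atoms of $\varphi$. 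This maximum is finite because $\varphi$ has finitely many atoms.
\end{itemize}
Hence the state space is a subset of a finite product of bounded intervals, so it is finite. The initial states and transition relation are then finite sets, and $\RDTS^\RIA$ is a finite-state system. We should check one point here: the $\RIA$ transition relation, which is obtained by projecting away $\rmax$, must not depend on $\rmax$. This holds because guards and updates in $\BWA$ only refer to window-relative depths.

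\emph{Equivalence.} Chain three results:
\[
\RDTS^\RIA(\mathcal{T},\nu,\varphi)\models\LTL(\varphi)
\iff \RDTS^\HRE(\mathcal{T},\nu,\varphi)\models\LTL(\varphi)
\iff \RDTS^\BWA(\mathcal{T},\nu)\models\varphi
\iff \RDTS(\mathcal{T},\nu)\models\varphi .
\]
The first step is Prop.~\ref{lem:RIA-completeness} and the second is Prop.~\ref{lem:LTL-transaltion}. The third step is Thm.~\ref{thm:FI-HSC} applied to $\mathcal{P}=\mathcal{P}^\varphi$, using the definitions $\RDTS(\mathcal{T},\nu)\models\varphi$ iff $\RDTS(\mathcal{T},\nu)\models\mathcal{P}^\varphi$, and likewise for $\BWA$. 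For the $\BWA$ case we need $\models$ on $\RDTS^\BWA$ to be defined via $\hsc$-traces, so that it is the trace-inclusion notion preserved by Props.~\ref{cor:RMA-sound-complete}--\ref{lem:BWA-completeness}.

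The main obstacle is the bookkeeping in the finiteness claim. We must confirm that the capped $\cumul$ counters and the bounded message counters are indeed the only remaining state components. We must also confirm that the Step~5 truncation is performed inside the $\HRE$ system itself, rather than only argued about semantically, so that Prop.~\ref{lem:LTL-transaltion} applies to the truncated system. If the truncation is only semantic, we would first show that truncating $\cumul$ at $1+\max_\beta t_\beta$ preserves the truth of every atom $\sum_\ell c_\ell\,\cumul(\ell)\le t$. This holds because the $c_\ell$ are non-negative integers: once any $\cumul(\ell)$ with $c_\ell\ge 1$ exceeds $t$, the atom is false whether or not that counter is truncated.
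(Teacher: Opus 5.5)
Your proposal is correct and follows the paper's argument. The paper notes that every counter of $\RDTS^{\RIA}(\mathcal{T},\nu,\varphi)$ is bounded by a value determined by $\nu$, so the system is finite-state, and then chains Prop.~\ref{lem:RIA-completeness}, Prop.~\ref{lem:LTL-transaltion} and Thm.~\ref{thm:FI-HSC} exactly as you do. Your worry about truncation is settled by the paper's definition, which builds the saturation of both $\local$ and $\cumul$ at $B_\nu^\varphi$ into $Tr_\nu^{\HRE}$ itself, so Prop.~\ref{lem:LTL-transaltion} applies directly to the truncated system.
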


The parameterized $\RIA$ semantics, denoted $\PRDTS^\RIA(\mathcal{T}, \varphi)$, is defined as the disjoint union of all fixed-instance systems $\RDTS^\RIA(\mathcal{T}, \nu, \varphi)$ over all admissible parameter valuations $\nu$. Since the set of counters in $\RDTS^\RIA(\mathcal{T}, \nu, \varphi)$ does not depend on $\nu$, each state of $\PRDTS^\RIA(\mathcal{T}, \varphi)$ can be represented with finitely many counters: those of $\RDTS^\RIA(\mathcal{T}, \nu, \varphi)$ together with an additional $|\mathsf{P}|$ counters encoding the parameter valuation $\nu$. Hence, $\PRDTS^\RIA(\mathcal{T}, \varphi)$ is a finite-counter system. The atoms of the $\LTL$ formula $\LTL(\varphi)$ are interpreted over these parameter counters and over the local/cumulative visit counters of $\PRDTS^\RIA(\mathcal{T}, \varphi)$. 
By the definition of the disjoint union, $\PRDTS^\RIA(\mathcal{T}, \varphi) \models \LTL(\varphi)$ iff, for all $\nu \models \textsf{rc}$, $\RDTS^\RIA(\mathcal{T}, \nu, \varphi) \models \LTL(\varphi).$ Therefore, by Thm.~\ref{thm:FI-HSCL}, we obtain:

\begin{theorem}\label{thm:P-HSCL}
For every template $\mathcal{T}$ and $\HSCL$ formula $\varphi$, the system $\PRDTS^\RIA(\mathcal{T}, \varphi)$ is a finite-counter system, and $\PRDTS^\RIA(\mathcal{T}, \varphi) \models \LTL(\varphi) $ iff $ \PRDTS(\mathcal{T}) \models \varphi.$
\end{theorem}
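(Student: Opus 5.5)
The proof splits into two parts, matching the two claims of Thm.~\ref{thm:P-HSCL}: first that $\PRDTS^\RIA(\mathcal{T},\varphi)$ is a finite-counter system, and then the semantic equivalence. The equivalence will be obtained by lifting Thm.~\ref{thm:FI-HSCL} pointwise over admissible valuations, using the disjoint-union definitions on both sides.

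\textbf{Finite-counter representation.} Fix $\varphi$. The counter variables of $\RDTS^\RIA(\mathcal{T},\nu,\varphi)$ are the following:
\begin{itemize}
\item the location counters $\PState(\ell,d)$ and message counters $\NState(m,d)$ for $d\in\llbracket 0,b\rrbracket$;
\item the history counters $\local(\ell)$ and $\cumul(\ell)$ for $\ell\in\mathcal{X}$.
\end{itemize}
None of these depends on $\nu$. Only the truncation cap on $\cumul(\ell)$, namely one more than the largest cumulative threshold evaluated at $\nu$, depends on $\nu$, and it is a linear term in the parameters.

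I will add $|\mathsf{P}|$ frozen counters $p_y$ holding $\nu(y)$. The initial condition is then written as a linear-arithmetic formula: $\mathsf{rc}(p)$ holds, all locations at depth $0$ have count zero outside the initial locations, the sum of the initial-location counts equals $p_n$, and all message and history counters are zero. Each transition rule $\Update(\rho,d)$ of the $\BWA$ and $\HRE$ semantics becomes a quantifier-free linear formula in the current and next counters. Its guard is evaluated existentially over message subsets bounded by $\NState$. This existential quantifier may need eliminating, or it can be handled by introducing fresh nondeterministic counters constrained by $\le\NState$. The window shift for jump updates and the capped update $\cumul'(\ell)=\min(\cumul(\ell)+1,\ \mathrm{cap}(p))$ are both expressible linearly. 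Every transition also has the conjunct $p'=p$. This yields a Presburger-definable finite-counter system whose reachable part with $p=\nu$ is exactly $\RDTS^\RIA(\mathcal{T},\nu,\varphi)$.

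\textbf{Semantic equivalence.} Because the $p$-counters are frozen, every execution of $\PRDTS^\RIA(\mathcal{T},\varphi)$ lies entirely within one component $\nu\models\mathsf{rc}$. The atoms of $\LTL(\varphi)$ are evaluated on that component with thresholds instantiated at $p=\nu$. Hence $\PRDTS^\RIA(\mathcal{T},\varphi)\models\LTL(\varphi)$ holds iff $\RDTS^\RIA(\mathcal{T},\nu,\varphi)\models\LTL(\varphi)$ for all $\nu\models\mathsf{rc}$. By Thm.~\ref{thm:FI-HSCL}, this holds iff $\RDTS(\mathcal{T},\nu)\models\varphi$ for all admissible $\nu$. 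By the definition of $\PRDTS(\mathcal{T})\models\varphi$ in \S~\ref{subsubsec:HSC}, the latter is exactly $\PRDTS(\mathcal{T})\models\varphi$.

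\textbf{Main obstacle.} The main obstacle is representability, specifically the $\nu$-dependent truncation of $\cumul$. If the linear-term cap proves awkward, I would try not truncating at all in the parameterized system. Leaving $\cumul$ unbounded is harmless, since satisfaction of a cumulative atom depends only on whether the count exceeds its threshold. This is the same observation that justified truncation in Step~5.
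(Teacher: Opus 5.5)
Your proposal is correct and follows the paper's argument. The counters of $\RDTS^\RIA(\mathcal{T},\nu,\varphi)$ do not depend on $\nu$, so adding $|\mathsf{P}|$ frozen parameter counters represents the disjoint union as a finite-counter system; each execution then stays in one component, so $\PRDTS^\RIA(\mathcal{T},\varphi)\models\LTL(\varphi)$ reduces to $\RDTS^\RIA(\mathcal{T},\nu,\varphi)\models\LTL(\varphi)$ for all $\nu\models\mathsf{rc}$, and Thm.~\ref{thm:FI-HSCL} together with the definition of $\PRDTS(\mathcal{T})\models\varphi$ finishes the proof.

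Your extra detail on Presburger-definability matches the explicit encoding in App.~\ref{app:counter-system}, with one small slip in your initial condition. To match $\RDTS^\RIA$ as the paper defines it via $\HRE$, the history counters of initial locations in $\mathcal{X}$ must start at $\min(\PState(\ell,0),B_\nu^\varphi)$ rather than $0$; this is still linear in the counters and changes nothing else.
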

{The above six transformation steps establish the correctness of our reduction. App.~\ref{app:counter-system} gives a direct construction of the reduced finite-state and finite-counter system,
\(\RDTS^\RIA(\mathcal{T}, \nu, \varphi)\) and
\(\PRDTS^\RIA(\mathcal{T}, \varphi)\),
from the input template \(\mathcal{T}\) and formula \(\varphi\).}

\section{Case Studies}\label{sec:evaluation} 
We evaluate the practical usefulness of our reduction on four case studies: (i) Ben-Or's consensus with crash faults~\cite[Algo A]{ben-or}, (ii) Ben-Or's consensus with Byzantine faults~\cite[Algo B]{ben-or}, (iii) Bracha's consensus with Byzantine faults~\cite[Fig. 4]{bracha}, and (iv) Raft's leader-election~\cite[Sec. 5.2]{raft}. 
The anonymous repository~\cite{AnonPaperX2025} provides source files and instructions to reproduce all experiments.  For each case study,  the repository includes an \texttt{.smv} file whose initial commented block specifies the corresponding process template, followed by the $\RIA$ semantics used as input to nuXmv. The models abstract randomization by nondeterminism; as a result, termination is no longer guaranteed, although the remaining properties are unaffected.
Byzantine faults are modeled explicitly using dedicated fault locations, from which faulty processes may nondeterministically broadcast messages of arbitrary types.
The model checker nuXmv is called via its IC3 engine (\texttt{check\_ltlspec\_ic3}) to check whether the translated $\LTL$ specifications hold on the $\RIA$ semantics. Table~\ref{tab:ic3-results} reports, for each case study and property, the verification time and the maximal IC3 depth. %
Across all experiments, IC3 reaches shallow depths and terminates in under $15$ seconds.
To stress-test our reduction, we introduced intentional faults into the algorithm models by weakening the resilience condition or altering key guard predicates. 
As expected, for these faulty variants nuXmv produced counterexamples within seconds.

\begin{table}[t]
\centering
\caption{Case-study results with the IC3 engine of nuXmv.
Columns $b$, $|\mathcal{L}|$, $|\mathsf{Rules}|$, and $\textsf{rc}$ give the jump bound, number of locations, number of rules, and resilience condition of the process template. For each property in Fig.~\ref{fig:prop-expressivity}, the rest of columns report verification time in seconds, with the maximal IC3 depth in parentheses. For \textsf{T}, the  entry reports the time and depth  to find a counterexample.
}
\label{tab:ic3-results}
\resizebox{\linewidth}{!}{
\begin{tabular}{|l||c|c|c|c||c|c|c|c|c|}
\hline
\textbf{Protocol} & b & $\mathbf{|\mathcal{L}|}$ & $\mathbf{|\mathsf{Rules}|}$ & \textsf{rc} & \textbf{A} & \textbf{V} & \textbf{RT} & $\textbf{T}$ & \textbf{LU} \\
\hline\hline
Ben-Or (Crash)     & 1 & 9 &  26  & $n > 2t$ & 1.4s (13) & 0.4s (9)  & 3.1s (8) & 0.5s (3) & -- \\
\hline
Ben-Or (Byzantine)   & 1 & 10 & 27 & $n > 5t$ & 7.0s (11) & 1.2s (7)  & 4.3s (7)  & 0.6s (3) & -- \\
\hline
Bracha (Byzantine)  & 1 & 12 & 31 & $n > 3t$ & 14.0s (14)& 1.8s (8)  & 6.5s (11) & 0.7s (3) & -- \\
\hline
Raft leader election  & 2 & 11 & 25 & $n > 2t$ & --        & --        & --        & --       & 1.8s (8) \\
\hline
\end{tabular}
}
\end{table}

\section{Conclusion}\label{sec:conclusion}
Asynchronous round-based distributed algorithms underpin many modern distributed systems, including many consensus algorithms at the core of blockchain technologies~\cite{8972381}. Our reduction offers a formal step toward their verification via mature symbolic model checkers and suggests a number of challenging extensions.

First, an important direction is to extend our reduction to incorporate probabilistic behavior and so allow verification of almost-sure termination.
While existing probabilistic model checkers~\cite{prism,storm} can efficiently verify almost-sure reachability conditions on \emph{finite} MDPs, obtaining such a model is far from immediate, even for a fixed parameter valuation. The main challenge arises in the $\SSR$ step as a probabilistic semantics requires to reorder not only a trace, but the whole execution tree.

Second, a compelling challenge is to extend our reduction to handle algorithms whose number of locations within a round grows with the round index, as is the case for variants of Paxos~\cite{Paxos,fastpaxos} and DAG-based consensus~\cite{BL-coins20,KKNS-podc21,KeidarNPS23,GagolLSS19,lachesis} (where the number of locations also grows with the number of processes).
Interestingly, the reduction we presented in this paper does not assume that the set of locations within each round is finite; however, the reduced system would then have infinitely many counters.

\bibliography{concur}

\appendix

\section{Details on Section~\ref{sec:modelling}}
\subsection{Fixed-instance semantics}\label{app:semantics-fixed}
Let $\mathcal{T} = \langle \mathsf{P}, \mathsf{rc}, \mathcal{L}, \mathcal{I}, \mathcal{M}, \mathsf{Bcast}, \mathsf{Rules} \rangle$ be a template
and $\nu : \mathsf{P} \to \mathbb{N}$ a \emph{parameter valuation}
such that $\nu \models \mathsf{rc}$.  In the system consisting of
$\nu(n)$ identical processes, each process execute the behavior
specified by
$\langle \mathcal{L}, \mathcal{I},\mathcal{M},  \mathsf{Bcast}, \mathsf{Rules}
\rangle$.

The \emph{fixed-instance semantics} of the template $\mathcal{T}$ for a parameter valuation $\nu$, denoted $\RDTS(\mathcal{T}, \nu)$, is an action-labeled transition system: %
\[
  \RDTS(\mathcal{T},\nu)
  \;=\;
  \langle S_\nu,\; I_\nu,\; Act_\nu,\; Tr_\nu\rangle ,
\]
where $S_\nu$ denotes the set of \emph{configurations}, $I_\nu \subseteq S_\nu$ the set of \emph{initial configurations}, $Act_\nu$ the \emph{action set}, and $Tr_\nu \subseteq S_\nu \times Act_\nu \times S_\nu$ the \emph{transition relation}.

\noindent \textbf{Configurations.} A \emph{configuration} is a pair $\cfg = \langle \PState, \NState \rangle$ consisting of a process state and a network state. 
The process state is a function $\PState : \llbracket 1, \nu(n) \rrbracket \rightarrow \langle \mathsf{loc}:\mathcal{L},\mathsf{rd}:\mathbb{N},\mathsf{rcvMsg}:\mathcal{M} \times \mathbb{N} \rightarrow \mathbb{N} \rangle$ %
that assigns to each process a location in $\mathcal{L}$, a round number in $\mathbb{N}$, and a multiset of received messages without sender identities. Specifically, $\mathsf{rcvMsg}(m, r)$ records how many times a message of type $m \in \mathcal{M}$ was received with round tag $r$. 
This representation abstracts away sender identities and retains only message counts --- a standard abstraction known to preserve the temporal properties typically verified for fault-tolerant distributed algorithms~\cite{counting-abs}. {The reason is that correctness properties typically depend on the control locations occupied by processes, rather than on the specific messages they send or receive. Moreover, the guards that determine these locations are threshold-based and independent of process identifiers.}
The network state $\NState : \mathcal{M} \times \mathbb{N} \rightarrow \mathbb{N}$ records how many messages of each type and round tag were broadcast, irrespective of the sender’s identity.
The acyclicity assumption in Definition~\ref{def:RPT} ensures that no process visits the same location more than once per round. Consequently, each process broadcasts at most $|\mathcal{L}|$ messages in a round, and the range of $\NState$ is therefore bounded above by $\nu(n) \times |\mathcal{L}|$. %
Moreover, if all processes are in a round with number less than $r$ (formally, $\forall i : \PState(i).\mathsf{rd} < r$), then $\NState(\cdot, r)$ must be the zero function ${\bf 0}$, mapping every argument to $0$. %
Finally, in every configuration, the number of received messages is \emph{pointwise bounded} by the number of broadcast messages:
\begin{equation}\label{equ:rec-bounded-bcast}
   \forall i,\ \forall (m,r) \in \mathcal{M} \times \mathbb{N}:\quad
\PState(i).\mathsf{rcvMsg}(m,r) \ \le\ \NState(m,r). 
\end{equation}

\noindent A configuration $\langle \PState, \NState \rangle$ is \emph{initial} if every process is in an initial location at round~0 with an empty set of received messages, and the network contains no broadcasts. Formally:
for all $i \in \llbracket 1, \nu(n) \rrbracket$, $\PState(i).\mathsf{loc} \in \mathcal{I}$, $\PState(i).\mathsf{rd} = 0$, and  $\PState(i).\mathsf{rcvMsg} = \bf 0$; and $\NState = \bf 0$.

\noindent \textbf{Actions and Transitions.} 
A process $i \in \llbracket 1, \nu(n) \rrbracket$ can perform two types of actions: (i) $\Recv(i, \langle m, r\rangle)$, which represents the reception of a message of type $m$ tagged with round~$r$; (ii) $\Update(i, \rho, r)$, which represents updating its location by applying rule $\rho \in \mathsf{Rules}$ at round $r \in \mathbb{N}$.  
These form the action set $Act_\nu$, in which each action represents a step executed by a single process.
The transition relation $Tr_\nu$ consists of all triples $(\cfg, a, \cfg')$ such that
$a \in Act_\nu$ is \emph{enabled} in the configuration $\cfg \in S_\nu$ (denoted $\enabled(\cfg, a)$), and executing $a$ at $\cfg$ yields $\cfg'$ (denoted $\effect(\cfg, a) = \cfg'$). We now formally define $\enabled$ and $\effect$ for \emph{receive} and \emph{update} actions.\\

\noindent \textsf{Receive}. \label{receive-tr} In a configuration $\cfg = \langle \PState, \NState \rangle$, the action $\Recv(i, \langle m, r \rangle)$ is enabled when process $i$ has not yet received all broadcasts of type $m$ tagged with round~$r$. Formally,
\[\enabled(\cfg, \Recv(i, \langle m, r\rangle)) := \PState(i).\mathsf{rcvMsg}(m, r) < \NState(m, r).\]
Executing $\Recv(i, \langle m, r\rangle)$ delivers one message of type $m$ tagged with round $r$ to process $i$ and leaves all other components unchanged. Formally, \\

\noindent $\effect(\cfg, \Recv(i, \langle m, r\rangle)) := \langle \PState', \NState'\rangle$, where:
\begin{itemize}
    \item For every process $j \neq i$, the state remains unchanged: $\PState'(j) = \PState(j)$.
    \item Location and round component of process~$i$ state remain unchanged:\\ $\PState'(i).\mathsf{loc} = \PState(i).\mathsf{loc}$ and $\PState'(i).\mathsf{rd} = \PState(i).\mathsf{rd}$.
    \item The receive component of process~$i$ state is incremented at $\langle m, r\rangle$:\\ $\PState'(i).\mathsf{rcvMsg}(m,r) = \PState(i).\mathsf{rcvMsg}(m, r) + 1$ and for all\\ $\langle m', r'\rangle \neq \langle m, r\rangle$, $\PState'(i).\mathsf{rcvMsg}(m',r') = \PState(i).\mathsf{rcvMsg}(m', r')$.
    \item The network state remains unchanged: $\NState' = \NState$.
\end{itemize}

\noindent \textsf{Update.} In a configuration $\cfg = \langle \PState, \NState \rangle$, the action $\Update(i, \rho, r)$ is enabled when process $i$ is in location $\rho.\mathsf{frm}$ at round~$r$, and its received messages in round $r$ satisfy the guard $\rho.\mathsf{guard}$. Formally, 
\begin{align*}
\enabled(\cfg,\Update(i, \rho, r)) :=\ 
& \PState(i).(\mathsf{loc}, \mathsf{rd}) = (\rho.\mathsf{frm}, r)\ \text{and} \\
& \PState(i).\mathsf{rcvMsg}(\_, r) \models \rho.\mathsf{guard}[\mathsf{P} \leftarrow \nu].
\end{align*}%
\noindent Executing $\Update(i, \rho, r)$ has two effects. First, process $i$ moves from location $\rho.\mathsf{frm}$ to $\rho.\mathsf{to}$ and updates its round counter to $r + \rho.\mathsf{type}$. Second, it broadcasts a message of type $\mathsf{Bcast}(\rho.\mathsf{to})$, tagged with round~$r + \rho.\mathsf{type}$. Formally,\\

\smallskip
\noindent $\effect(\cfg, \Update(i, \rho, r)) = \langle \PState', \NState'\rangle$, where:
\begin{itemize}
    \item For every process $j \neq i$, the state remains unchanged:  $\PState'(j) = \PState(j)$.
    \item The location component of process~$i$ is updated to $\rho.\mathsf{to}$:\\ $\PState'(i).\mathsf{loc} = \rho.\mathsf{to}$.
    \item The round component of process~$i$ state is incremented by $\rho.\mathsf{type}$: \\$\PState(i)'.\mathsf{rd} = r + \rho.\mathsf{type}$
    \item The receive component of process~$i$ remains unchanged:\\ $\PState'(i).\mathsf{rcvMsg}(m,r) = \PState(i).\mathsf{rcvMsg}(m, r)$.
    \item The network state is incremented at $\langle \mathsf{Bcast}(\rho.\mathsf{to}), r + \rho.\mathsf{type} \rangle$: \\$\NState'(m', r') = \NState(m', r') + 1$ if $\langle m', r' \rangle = \langle \mathsf{Bcast}(\rho.\mathsf{to}), r + \rho.\mathsf{type} \rangle$, and $\NState'(m', r') = \NState(m', r')$ otherwise.
\end{itemize}

An \emph{execution} of $\RDTS(\mathcal{T}, \nu)$ is a finite or infinite alternating sequence $\pi = \cfg_0 a_1 \cfg_1 a_2 \dots,$ such that $\cfg_0 \in I_\nu$ is an initial configuration and, for all $k$, $(\cfg_k,a_{k+1},\cfg_{k+1}) \in Tr_\nu$. {We adopt this definition of executions uniformly for all subsequent semantics and do not restate it.}

%
%

%
%
%
%
%

%

%

%
%
%

%

%
%

\subsection{Parameterized semantics}\label{app:param-sem}
The parameterized semantics is a disjoint union of all fixed-instance semantics $\RDTS(\mathcal{T},\nu)$ for $\nu \models \mathsf{rc}$.
Fix a template $\mathcal{T}$, and let $\mathsf{Val} =\{\nu: \mathsf{P} \rightarrow\mathbb{N}\ |\ \nu \models \mathsf{rc}\}$ denote the set of all admissible parameter valuations. 
The parameterized semantics of $\mathcal{T}$ is an action-labeled transition system, called the \emph{parameterized} $\RDTS$ ($\PRDTS$), defined as 
\[\PRDTS(\mathcal{T})= \langle S,\; I,\; Act,\; Tr\rangle,\]
which is the \emph{disjoint union} of $\RDTS(\mathcal{T}, \nu)$ over all $\nu \in \mathsf{Val}$, i.e.:
\begin{align*}
  &S   = \bigcup_{\nu\in \mathsf{Val}}
         \{\nu\}\times S_\nu, \quad 
  I   = \bigcup_{\nu\in\mathsf{Val}}
         \{\nu\}\times I_\nu, \quad 
  Act = \bigcup_{\nu\in\mathsf{Val}}
         \{\nu\}\times Act_\nu,\\
  &Tr = \bigl\{
           (\nu,\cfg)\xrightarrow{\,(\nu,a)\,}(\nu,\cfg')
           \;\bigm|\;
           \nu\in\mathsf{Val},\,
           (\cfg,a,\cfg')\in Tr_\nu
         \bigr\}.
\end{align*}
{For every execution $\pi = (\nu,\cfg_0)(\nu,a_1)(\nu,\cfg_1)\dots$ of $\PRDTS(\mathcal{T})$, let  $\pi_{\downarrow_\nu} = \cfg_0 a_1 \cfg_1\dots$. Clearly $\pi_{\downarrow_\nu}$ is an execution of $\RDTS(\mathcal{T},\nu)$. Moreover, for every execution $\pi$ of $\RDTS(\mathcal{T},\nu)$, there exists execution $\pi'$ of $\PRDTS(\mathcal{T})$ such that $\pi = \pi'_{\downarrow_\nu}$.}\label{def:PRDTS-proj}

\subsection{\LTL Model Checking over Finite-Counter Systems}\label{app:LTL-CS}

We use a notion of finite-counter system that can be directly represented as a
nuXmv model. A \emph{finite-counter system} $\mathcal C$ is a tuple
\((V,\mathsf{INVAR},\mathsf{INIT},\mathsf{TRAN})\), where
\(V=\{x_1,\ldots,x_k\}\) is a finite set of variables (also called counters) ranging over
\(\mathbb{N}\), \(\mathsf{INVAR}\) and \(\mathsf{INIT}\) are linear-arithmetic formulas over \(x_1,\ldots,x_k\), and
\(\mathsf{TRAN}\) is a linear-arithmetic formula over
\(x_1,\ldots,x_k,x'_1,\ldots,x'_k\). Here, \(x'_i\) denotes the next-state copy
of \(x_i\).

A finite-counter system \(\mathcal C = (V,\mathsf{INVAR},\mathsf{INIT},\mathsf{TRAN})\)
induces a transition system \(\mathcal S_C :=(Q,Q_0,\to)\), where \(Q\) is the set of
valuations \(s:V\to\mathbb{N}\) satisfying \(\mathsf{INVAR}\),
\(Q_0\subseteq Q\) is the set of valuations satisfying \(\mathsf{INIT}\), and
\(\to\subseteq Q\times Q\) contains exactly the pairs \((s,s')\) satisfying
\(\mathsf{TRAN}\), i.e., those for which
\(\mathsf{TRAN}\) is true after substituting each \(x_i\) by \(s(x_i)\) and
each \(x'_i\) by \(s'(x_i)\). Intuitively, \(\mathsf{INVAR}\) defines
the admissible counter values, \(\mathsf{INIT}\) defines the admissible initial
counter values, and \(\mathsf{TRAN}\) relates current counter values to their possible next values.

We use a fragment of the syntax of \LTL~\cite{book:PMC}, defined according to the following grammar:
\[
\varphi ::= \mathit{true} \mid p \mid \varphi \land \varphi \mid \lnot \varphi \mid  \mathbf{G} \varphi
\]
where $\mathbf{G}$ is the "globally" operator (we do not use the "next" or "until" operators), and $p$ varies over the  \emph{atomic propositions} which are taken to be the linear-arithmetic formulas over \(V\) (rather than abstract symbols). Note that we can interpret atoms in states of the transition system $\mathcal{S_C}$ (this implicitly
defines a labeling function mapping each state to the set of
linear-arithmetic formulas true in that state, but we will not use this labeling function explicitly).

We inductively define what it means for an infinite sequence {\(\pi=\pi_0\pi_1\cdots\), with \(\pi_i:V\to\mathbb{N}\) for all \(i\),} to \emph{satisfy} an \LTL formula, as usual~\cite{book:PMC}:
\begin{itemize}
\item $\pi \models \mathit{true}$;
\item $\pi \models p$ if the atom $p$ is true under the valuation $\pi_0$;
\item $\pi \models \varphi_1 \land \varphi_2$ if $\pi \models \varphi_1$ and $\pi \models \varphi_2$;
\item $\pi \models \lnot \varphi$ if $\pi \not \models \varphi$;
\item $\pi \models \mathbf{G} \varphi$ if for all $j \geq 0$, it holds that $\pi_j \pi_{j+1} \cdots  \models \varphi$.
\end{itemize}

A \emph{path} of $\mathcal{S_C}$ is a sequence $\pi=s_0s_1s_2\cdots$ such that $s_i \in {Q}$ for all $i$, \(s_0\in Q_0\), and
\((s_i,s_{i+1}) \in \to \) for every \(i\).  A \emph{maximal} path is either an infinite path or a finite path that ends in a \emph{terminal} state $s$, i.e., for which there is no state $s' \in Q$ such that $(s,s') \in \to$. A finite maximal path  
\(s_0s_1\cdots s_n\) is said to \emph{satisfy} \(\varphi\) iff its infinite stuttering extension
\(s_0s_1\cdots s_n s_n s_n\cdots\) satisfies \(\varphi\). 

Finally, given a finite-counter system
\(\mathcal C\) and an $\LTL$ formula
\(\varphi\), we write \(\mathcal C\models\varphi\), and say that $\mathcal{C}$ \emph{satisfies} $\varphi$, iff every maximal path of the induced transition system
\(\mathcal S_{\mathcal C}\) satisfies \(\varphi\).

\subsection{Undecidability of Parameterized \textsf{HSCL}-Verification}
\label{app:undecide}

We give the detailed construction of $\mathcal T_{\mathcal C}$ and $\varphi_{\mathcal C}$ used in the proof of Theorem~\ref{thm:undec}.

\noindent The template $\mathcal T_{\mathcal C}$ has jump bound $b=1$ and is defined as follows.
\begin{itemize}
    \item \textbf{Controller locations and messages.}
    For each control state $s\in S$, there is a location $l_s$, with no broadcast message.
    For each command $\mathsf{cmd}=(s,\mathsf{op},s')\in\Delta$, there is an intermediate location $l_{\mathsf{cmd}}$.
    If $\mathsf{op}=\mathsf{inc}(c_i)$, then $l_{\mathsf{cmd}}$ broadcasts $m_{\mathsf{inc},c_i}$; if $\mathsf{op}=\mathsf{dec}(c_i)$, then it broadcasts $m_{\mathsf{dec},c_i}$; and if $\mathsf{op}=\mathsf{iszero}(c_i)$, then it broadcasts $m_{\mathsf{ack}}$.

    \item \textbf{Non-controller locations and messages.}
    There are three main non-controller locations $l_{c_1}$, $l_{c_2}$, and $l_{\mathsf{res}}$, which broadcast a corresponding ready messages $m_{\mathsf{rdy},c_1}$, $m_{\mathsf{rdy},c_2}$, and $m_{\mathsf{rdy},\mathsf{res}}$.
    For each $i\in\{1,2\}$ and each $x\in\{\mathsf{inc},\mathsf{dec}\}$, there is an intermediate location $l_{x,c_i}$, which broadcasts $m_{\mathsf{ack}}$.

    \item \textbf{Initial locations.}
    The initial locations are $l_{s_{\mathsf{init}}}$ and $l_{\mathsf{res}}$.

    \item \textbf{Type-$0$ rules.}
    For each command $\mathsf{cmd}=(s,\mathsf{op},s')\in\Delta$, the template has a rule $l_s\to l_{\mathsf{cmd}}$ with guard $m_{\mathsf{rdy},c_1}+m_{\mathsf{rdy},c_2}+m_{\mathsf{rdy},\mathsf{res}}=n-1$.
    Additionally, if $\mathsf{op}=\mathsf{iszero}(c_i)$, then the guard is conjugated with $m_{\mathsf{rdy},c_i}=0$.
    For each $i\in\{1,2\}$, there is a rule $l_{\mathsf{res}}\to l_{\mathsf{inc},c_i}$ with guard $m_{\mathsf{inc},c_i}>0$, and a rule $l_{c_i}\to l_{\mathsf{dec},c_i}$ with guard $m_{\mathsf{dec},c_i}>0$.

    \item \textbf{Type-$1$ rules.}
    All type-$1$ rules have guard $m_{\mathsf{ack}}>0$.
    For each command $\mathsf{cmd}=(s,\mathsf{op},s')\in\Delta$, there is a rule $l_{\mathsf{cmd}}\to l_{s'}$.
    For each $i\in\{1,2\}$, there are rules $l_{\mathsf{inc},c_i}\to l_{c_i}$, $l_{\mathsf{dec},c_i}\to l_{\mathsf{res}}$, $l_{c_i}\to l_{c_i}$, and $l_{\mathsf{res}}\to l_{\mathsf{res}}$.
\end{itemize}

\noindent The formula is $\varphi_{\mathcal C}:=\neg(\varphi_{\mathsf{ctrl}}\land\varphi_{\mathsf{step}}\land\varphi_{\mathsf{halt}})$, where
\[
\varphi_{\mathsf{ctrl}}:=\forall r.\ \sum_{s\in S}\kappa(l_s,r)\le 1,\qquad
\varphi_{\mathsf{step}}:=\forall r.\ \sum_{i\in\{1,2\}}\bigl(\kappa(l_{\mathsf{inc},c_i},r)+\kappa(l_{\mathsf{dec},c_i},r)\bigr)\le 1,
\]
and $\varphi_{\mathsf{halt}}:=\neg(\forall r.\ \kappa(l_{s_{\mathsf{halt}}},r)\le 0)$.
Here, $\varphi_{\mathsf{ctrl}}$ enforces that at most one process visits a controller location in each round, $\varphi_{\mathsf{step}}$ enforces that at most one process performs an increment/decrement move in each round, and $\varphi_{\mathsf{halt}}$ requires that the halting location is eventually visited.
\section{Details on Section \ref{sec:abstractions}}\label{app:abstractions}

\subsection{Receive-message abstraction}
The system $\RDTSRMA (\mathcal{T},\nu)$ removes the per-process records of received messages and eliminates $\mathsf{Receive}$ actions, while deeming $\Update(i,\rho,r)$ enabled iff there exists a receive multiset $R: \mathcal{M}\times \mathbb{N} \rightarrow \mathbb{N}$ such that (i) $0\le R\le \NState$ (pointwise), and (ii) if process $i$ had received exactly $R$, the same update would be permitted in the concrete semantics. %
Equivalently, guards are evaluated \emph{existentially}  with respect to subset of the $\NState$. %

\noindent \textbf{Projection.}
For a configuration $\cfg = \langle \PState, \NState \rangle$, we define the $\RMA$ projection $\alpha_{\RMA}(\cfg) = \langle \PState_{\RMA},\ \NState_{\RMA} \rangle,$ where \[\PState_\RMA : \llbracket 1, \nu(n) \rrbracket \rightarrow \langle \mathsf{loc}:\mathcal{L},\mathsf{rd}:\mathbb{N} \rangle\] is given, for each process $i$, by \[ \PState_{\RMA}(i) = \langle \loc \mapsto \PState(i).\loc,\ \rd \mapsto \PState(i).\rd \rangle,\] and $\NState_{\RMA} = \NState$.

The $\RMA$ semantics $\RDTSRMA(\mathcal{T},\nu) = \langle S_\nu^{\RMA},I_\nu^{\RMA},Act_\nu^{\RMA}, Tr_\nu^{\RMA}\rangle$, where 
$S_\nu^{\RMA} = \{\alpha_{\RMA}(\cfg) \mid \cfg \in S_\nu\}$, 
$I_\nu^{\RMA} = \{\alpha_{\RMA}(\cfg) \mid \cfg \in I_\nu\}$, 
the actions in $Act_\nu^{\RMA}$ are exactly the $\Update(i,\rho,r)$ actions of the concrete system, 
and the transition relation $Tr_\nu^{\RMA}$ is determined by the following enabledness and effect conditions. For $\cfg_{\RMA} = \langle \PState_\RMA, \NState_{\RMA}\rangle$ and $a = \Update(i,\rho,r)$,
\begin{align*}
\enabled_{\RMA}\big(\cfg_{\RMA}, a\big)
=\
& \PState_{\RMA}(i).(\textsf{loc, rd}) = (\rho.\frm, r) \ \ \wedge \\
& \exists R: \ 0 \le R \le \NState_{\RMA} \ \wedge \ R(\cdot, r) \models \rho.\mathsf{guard}[\mathsf{P}\leftarrow \nu].
\end{align*}
The abstract effect updates the process's location and round exactly as in the corresponding concrete effect, and modifies \(\NState_{\RMA}\) in the same way (see Appendix \ref{app:semantics-fixed}). 
{For $\pi=\cfg_0 a_1 \cfg_1 a_2 \dots$ an execution of $\RDTS^\RMA(\mathcal{T},\nu)$, we define its $\textsf{hsc}$-trace exactly as in \S~\ref{subsubsec:HSC} , i.e.,
\[
\textsf{hsc}(\pi)(\ell,r)\;=\;\bigl|\{\,k \mid a_k=\Update(i,\rho,r'),\ \rho.\mathsf{to}=\ell,\ r=r'+\rho.\mathsf{type}\,\}\bigr|.
\]}
Below we establish a forward simulation between the concrete semantics and the $\RMA$ semantics. %

\begin{lemma}[Forward simulation]\label{lem:rma-forward-simulation}
For every configuration $\cfg \in S_\nu$, if $\cfg \in I_\nu$ then $\alpha_{\RMA}(\cfg) \in I_\nu^\RMA$. Moreover if $(\cfg,a,\cfg') \in Tr_\nu$, the following hold:
\begin{enumerate}
\item If $a = \Recv(i,\langle m,r\rangle)$, then $\alpha_{\RMA}(\cfg) = \alpha_{\RMA}(\cfg')$.
\item If $a = \Update(i,\rho,r)$, then $(\alpha_{\RMA}(\cfg),a,\alpha_{\RMA}(\cfg')) \in Tr_\nu^\RMA$.
\end{enumerate}
\end{lemma}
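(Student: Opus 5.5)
The argument is a direct unfolding of the definitions of $\alpha_{\RMA}$, $I^\RMA_\nu$ and $Tr^\RMA_\nu$. We check the three claims in order: initial configurations, then $\Recv$ transitions, then $\Update$ transitions.

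\emph{Initial configurations.} This part is immediate, because $I_\nu^\RMA$ is defined as the image $\{\alpha_\RMA(\cfg)\mid \cfg\in I_\nu\}$. So $\cfg\in I_\nu$ gives $\alpha_\RMA(\cfg)\in I^\RMA_\nu$ by definition.

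\emph{Receive transitions.} Let $(\cfg,\Recv(i,\langle m,r\rangle),\cfg')\in Tr_\nu$. By the effect of $\Recv$ in App.~\ref{app:semantics-fixed}:
\begin{itemize}
\item for every $j\neq i$, $\PState'(j)=\PState(j)$;
\item the $\loc$ and $\rd$ components of process $i$ are unchanged;
\item $\NState'=\NState$.
\end{itemize}
Only $\PState(i).\rcvMsg$ changes, and $\alpha_\RMA$ discards that component. Hence $\alpha_\RMA(\cfg)=\alpha_\RMA(\cfg')$.

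\emph{Update transitions.} Let $(\cfg,\Update(i,\rho,r),\cfg')\in Tr_\nu$. First we show that $a$ is enabled in $\alpha_\RMA(\cfg)$ under $\enabled_\RMA$. Concrete enabledness gives two facts:
\begin{itemize}
\item $\PState(i).(\loc,\rd)=(\rho.\frm,r)$, and this pair is preserved by $\alpha_\RMA$;
\item $\PState(i).\rcvMsg(\_,r)\models\rho.\guard[\mathsf{P}\leftarrow\nu]$.
\end{itemize}
For the existential witness, take $R:=\PState(i).\rcvMsg$. Then $0\le R$ holds trivially. The bound $R\le \NState=\NState_\RMA$ is exactly the configuration invariant~\eqref{equ:rec-bounded-bcast}. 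Finally, $R(\cdot,r)$ satisfies the guard by the second fact. So $\enabled_\RMA(\alpha_\RMA(\cfg),a)$ holds.

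Next we show the effects match. The abstract effect updates process $i$'s location to $\rho.\mathsf{to}$ and its round to $r+\rho.\type$, leaves all other processes unchanged, and increments $\NState$ at $\langle\mathsf{Bcast}(\rho.\mathsf{to}),r+\rho.\type\rangle$ (unless the broadcast is $\perp$). This is exactly the concrete effect restricted to the components kept by $\alpha_\RMA$. The one component where the two effects could differ is $\rcvMsg$, which the concrete update leaves unchanged and $\alpha_\RMA$ forgets anyway. Therefore the abstract successor equals $\alpha_\RMA(\cfg')$, and $(\alpha_\RMA(\cfg),a,\alpha_\RMA(\cfg'))\in Tr_\nu^\RMA$.

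The only step that is more than bookkeeping is choosing the existential witness $R$. It depends on invariant~\eqref{equ:rec-bounded-bcast} holding in every concrete configuration. That invariant is built into the definition of $S_\nu$, and it is also preserved by transitions: a receive is enabled only when $\rcvMsg<\NState$ at that entry, and an update only increases $\NState$. Since there is no genuine obstacle, the proof is a routine case analysis.
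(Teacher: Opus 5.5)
Your proposal is correct and follows essentially the same route as the paper. In the receive case, $\alpha_{\RMA}$ forgets exactly the $\rcvMsg$ component; in the update case, $R=\PState(i).\rcvMsg$ is the existential witness, justified by invariant~\eqref{equ:rec-bounded-bcast}, and the two effects agree on the retained components. You also make explicit the initial-configuration clause, which the paper leaves implicit because $I_\nu^{\RMA}$ is defined as the image of $I_\nu$ under $\alpha_{\RMA}$.
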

\begin{proof}%
1. A receive transition changes only the received messages (\textsf{rcvMsg}) by process $i$; $\alpha_{\RMA}$ forgets exactly this component thus $\alpha_{\RMA}(\cfg)=\alpha_{\RMA}(\cfg')$.

\noindent 2. In the $\RDTS(\mathcal{T}, \nu)$, $a = \Update(i, \rho, r)$ enabledness requires at \textsf{cfg}:
\[\PState(i).(\textsf{loc},\textsf{rd})=(\rho.\frm,r)\text{ and }\PState(i).\mathsf{RecMsg}(\cdot, r) \models \rho.\mathsf{guard}[\mathsf{P}\leftarrow \nu].\] By equation~\ref{equ:rec-bounded-bcast},  $\PState(i).\mathsf{RecMsg}\leq \NState$. Since the $\alpha_\RMA$ projection preserves $\textsf{PState}.(\textsf{loc},\textsf{rd})$ and $\NState$, $R= \PState(i).\mathsf{RecMsg}$ (at \textsf{cfg}) witnesses abstract enabledness. The abstract effect mirrors the concrete one on the components it keeps. Therefore $(\alpha_{\RMA}(\cfg),a,\alpha_{\RMA}(\cfg')) \in Tr_\nu^\RMA$. 
\end{proof}

Next, we establish Refinement between the $\RMA$ semantics and the concrete semantics using Condition~$3$ from Def.~\ref{def:RPT}.
\begin{lemma}[Refinement]\label{lm:RMA-refinement}
For every template \(\mathcal{T}\), there exists 
a mapping %
\(\beta_{\RMA}:S_\nu^\RMA\to S_\nu\) such that:
\begin{enumerate}
    \item For every \(\cfg_\RMA\in I_\nu^\RMA\), we have that
          \(\beta_\RMA(\cfg_\RMA)\in I_\nu\).
    \item For every \((\cfg_\RMA,a,\cfg'_\RMA)\in Tr_\nu^\RMA\), 
          there exists a concrete fragment
            $\cfg_1 \xrightarrow{a_1}\cdots\xrightarrow{a_n}\cfg_{n+1}$
          with \(\beta_\RMA(\cfg_\RMA)=\cfg_1\), 
          \(\beta_\RMA(\cfg'_\RMA)=\cfg_{n+1}\), 
          where for some $1 \leq j\leq n:$ \(a_j=a\) and for all \(i\neq j\), $a_i$ is \(\Recv\) action.
\end{enumerate}
\end{lemma}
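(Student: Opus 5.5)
The plan is to define $\beta_\RMA$ canonically from the abstract state: each process is assumed to have received \emph{as much as possible}, but only in message types where receiving more can never hurt it. Every abstract update is then simulated by three phases: a few receives, the update itself, and a top-up of receives.

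\textbf{Definition of $\beta_\RMA$.} For a location $\ell$, let $V(\ell)\subseteq\mathcal M$ be the set of message variables occurring in $\rho.\mathsf{guard}$, over all type-$0$ rules $\rho$ such that $\ell$ is reachable from $\rho.\mathsf{to}$ by a finite sequence of type-$0$ rules (this includes the rules with $\rho.\mathsf{to}=\ell$). By Condition~$3$ of Def.~\ref{def:RPT}, every guard of a rule leaving $\ell$, or leaving any location reachable from $\ell$ by type-$0$ rules, is monotone in each such variable. Since monotonicity in each of several variable sets implies monotonicity in their union, these guards are monotone in all of $V(\ell)$. By Condition~$2$, $V(\ell)=\emptyset$ for $\ell\in\mathcal I$. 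Given $\cfg_\RMA=\langle\PState_\RMA,\NState\rangle$, let $\beta_\RMA(\cfg_\RMA)$ keep $\NState$ and all $\mathsf{loc},\mathsf{rd}$ components. For process $i$ with $(\ell,r_i)=\PState_\RMA(i).(\mathsf{loc},\mathsf{rd})$, set
\[
\mathsf{rcvMsg}(m,r')=\NState(m,r') \text{ if } r'<r_i \text{, or if } r'=r_i \text{ and } m\in V(\ell),
\]
and $\mathsf{rcvMsg}(m,r')=0$ otherwise. Then Eq.~\eqref{equ:rec-bounded-bcast} holds, so $\beta_\RMA(\cfg_\RMA)\in S_\nu$. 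On initial abstract states $\NState=\mathbf 0$, so all receive counts are $0$ and the image is initial, which gives item~1.

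\textbf{Simulating a step.} Let $(\cfg_\RMA,\Update(i,\rho,r),\cfg'_\RMA)\in Tr_\nu^\RMA$ with abstract witness $R\le\NState$, and let $\ell=\rho.\mathsf{frm}$. Define $R'(m)=\NState(m,r)$ for $m\in V(\ell)$ and $R'(m)=R(m,r)$ otherwise. Since $R'\ge R(\cdot,r)$ only on $V(\ell)$ and $\rho.\mathsf{guard}$ is monotone in $V(\ell)$, $R'$ satisfies the guard. The fragment then proceeds as follows.
\begin{itemize}
\item Starting from $\beta_\RMA(\cfg_\RMA)$, process $i$ receives $R(m,r)$ messages for each $m\notin V(\ell)$ occurring in $\rho.\mathsf{guard}$. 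This is enabled because the current count is $0$ and $R\le\NState$. Its round-$r$ receive vector now agrees with $R'$ on all variables of the guard, so $\Update(i,\rho,r)$ is enabled concretely; we perform it.
\item Afterwards, let $\ell'=\rho.\mathsf{to}$, $r'=r+\rho.\mathsf{type}$, and let $\NState'$ be the new network state.
\item For process $i$, we issue receives raising its counts to $\NState'$ on all rounds below $r'$ and on $V(\ell')\times\{r'\}$.
\item For every other process $j$, we raise its counts to match $\beta_\RMA(\cfg'_\RMA)$. Only the single new broadcast may be missing, so at most one receive is needed.
\end{itemize}
All these receives are enabled, since counts only grow up to $\NState'$.

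\textbf{Main obstacle.} The delicate point is that $\beta_\RMA(\cfg'_\RMA)$ must be reached \emph{exactly}: receives cannot be undone, so no counts may overshoot. In the type-$0$ case, the variables we received into in the first phase lie in $\mathrm{vars}(\rho.\mathsf{guard})$. Since $\rho$ is a type-$0$ rule into $\ell'$, these are contained in $V(\ell')$, and $V(\ell)\subseteq V(\ell')$ as well, so nothing outside $V(\ell')$ is nonzero in round $r$. In the jump case, round $r$ becomes a past round, whose target value is the full $\NState'$, so any partial receipt there is harmless. Receive counts in the new round $r'$ were $0$ before, because $r'>r_i$, and are raised only on $V(\ell')$. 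Condition~$3$ is used exactly in the enabledness argument, via $R'$, while the careful choice of $V$ guarantees the exact match. I expect verifying these two inclusions, including the case where a jump lands in a non-initial location, to be the crux.
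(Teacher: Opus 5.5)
Your proposal is correct and follows essentially the same route as the paper. Your $V(\ell)$ is the paper's $M_i$ and your $\beta_{\RMA}$ is identical; your three phases are the paper's Steps~1--3: receive $R(m,r)$ on guard variables outside $V(\rho.\mathsf{frm})$, fire the update, then top up to $\beta_{\RMA}(\cfg'_{\RMA})$, using $V(\rho.\mathsf{frm})\cup\mathrm{vars}(\rho.\mathsf{guard})\subseteq V(\rho.\mathsf{to})$ in the type-$0$ case. Your small additions make explicit what the paper leaves implicit: deriving item~1 directly from $\NState=\mathbf{0}$, and showing that monotonicity in each $\mathrm{vars}(\rho_0.\mathsf{guard})$ gives monotonicity in their union.
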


\begin{proof}[Sketch] For an $\RMA$ configuration $\cfg_\RMA = \langle \PState_\RMA, \NState_\RMA \rangle$, define the refinement mapping $\beta_{\RMA}(\cfg_{\RMA})=\cfg=\langle\PState,\NState\rangle$ with (i) $\NState = \NState_\RMA$, (ii) for process $i$ $\PState(i).(\loc, \rd) = \PState_\RMA(i).(\loc, \rd)$, and (iii)
for process \(i\) and round \(r'\):
\begin{itemize}
    \item \(\PState(i).\mathsf{rcvMsg}(\cdot,r')=\NState_\RMA(\cdot,r')\) if \(r'<\PState_\RMA(i).\mathsf{rd}\)
    \item \(\PState(i).\mathsf{rcvMsg}(\cdot,r')= 0\) if $r'>\PState_\RMA(i).\mathsf{rd}$
    \item \(\PState(i).\mathsf{rcvMsg}(m,\PState_\RMA(i).\mathsf{rd})=\NState_\RMA(m,\PState(i).\mathsf{rd})\) if \(m\in M_i(\cfg_\RMA)\) and \(0\) otherwise, where $M_i(\cfg_\RMA)$ denotes the set of message variables that occur in the guards of type-0 rules leading (via type-0 rules) to $\PState_\RMA(i).\mathsf{loc}$.
\end{itemize}

To see that $\beta_\RMA$ satisfies Part 1, note that by Def.~\ref{def:RPT} no type-0 rule targets initial control locations. Consequently, for every $\cfg_\RMA \in I_\nu^\RMA$, it follows that $M_i(\cfg_\RMA)=\emptyset$.
The key observation for Part $2$ is that if an update $a=\Update(i,\rho,r)$ is enabled in $\cfg_{\RMA}$, then by Condition~$3$ in Def.~\ref{def:RPT}, $\rho.\guard$ is monotone in the variables $M_i(\cfg_\RMA)$. Hence, assuming all such messages have been received cannot disable the update. For variables outside $M_i(\cfg_\RMA)$, receive counts are zero and thus an exact number of these can be received to execute $a$. Finally, after applying $a$ additional messages can be received to reach $\beta_{\RMA}(\cfg_\RMA')$, for $(\cfg_{\RMA}, a, \cfg_{\RMA}') \in Tr_\nu^\RMA$. Full details are given in Appendix~\ref{apdx:ref-RMA}. 
\end{proof}

\subsubsection{Proof of Proposition \ref{cor:RMA-sound-complete}}
 \begin{proof}
 \emph{Soundness.} By Lemma~\ref{lem:rma-forward-simulation}, for every execution $\pi$ of $\RDTS(\mathcal{T},\nu)$ there exits execution $\pi_{\RMA}$ of $\RDTS^\RMA(\mathcal{T},\nu)$ with the \emph{same} sequence of $\Update$ actions. Hence $\textsf{hsc}(\pi_{\RMA})=\textsf{hsc}(\pi)$. \emph{Completeness.} By Lemma~\ref{lm:RMA-refinement}, for every execution $\pi_{\RMA}$ of $\RDTS^\RMA(\mathcal{T},\nu)$ there exits execution  $\pi$ of $\RDTS(\mathcal{T},\nu)$ with the \emph{same} sequence of $\Update$ actions, interleaved with $\Recv$ actions. Since $\Recv$ actions are stuttering steps for $\textsf{hsc}$-trace, we obtain $\textsf{hsc}(\pi)=\textsf{hsc}(\pi_{\RMA})$. 
 \end{proof}

\subsection{Process-identity abstraction}

In the system \(\RDTS^\PIA(\mathcal{T},\nu)\), the \(\PState_\RMA\) component of the \(\RMA\) configuration is abstracted: instead of recording the exact location and round of each process, it records only the \emph{counts} of processes at each location and round.

\noindent \textbf{Projection.}
For a $\RMA$  configuration
$\cfg_{\RMA}=\langle \PState_{\RMA},\, \NState_{\RMA}\rangle$, we define the $\PIA$ projection $\alpha_{\PIA}(\cfg_{\RMA})= \big\langle \PState_{\PIA},\, \NState_{\PIA}\big\rangle,$ where $\PState_{\PIA} : \mathcal{L} \times \mathbb{N} \to\mathbb{N}$ is given by \[\PState_{\PIA}(\ell,r)
\;=\;
\big|\{\, i\in \llbracket 1, N\rrbracket:
\PState_{\RMA}(i).(\loc,\rd)
= (\ell,r)\,\}\big|,\] and $\NState_{\PIA} = \NState_{\RMA}$.

The $\PIA$ semantics \[\RDTS^{\PIA}(\mathcal{T},\nu) = \langle S_\nu^{\PIA}, I_\nu^{\PIA}, Act_\nu^{\PIA}, Tr_\nu^{\PIA}\rangle,\] where $S_\nu^{\PIA} = \{\, \alpha_{\PIA}(\cfg) \mid \cfg\in S_\nu^{\RMA} \,\}$, $I_\nu^{\PIA} = \{\, \alpha_{\PIA}(\cfg) \mid \cfg\in I_\nu^{\RMA} \,\},$ the action set $Act_\nu^{\PIA} = \{\, \Update(\rho,r) \mid \rho\in\mathsf{Rules},\ r\in\mathbb{N} \,\}$ (no process ids), and the transition relation $Tr_\nu^{\PIA}$ is determined by the following enabledness and effect conditions. 

For \(\cfg_{\PIA} = \alpha_{\PIA}(\cfg_{\RMA})\) 
for some \(\RMA\) configuration \(\cfg_{\RMA}\), the abstract action 
\(a=\Update(\rho,r)\) is enabled iff there exists a process \(i\) at $\cfg_{\PIA}$
\((\rho.\frm,r)\) in \(\cfg_{\RMA}\) such that \(\Update(i,\rho,r)\) is enabled 
in \(\cfg_{\RMA}\). Equivalently, for  \(\cfg_{\PIA}=\langle \PState_{\PIA},\NState_{\PIA}\rangle\),
\begin{align*}
\enabled_{\PIA}\!\big(\cfg_{\PIA}, a\big)
=\ 
& \PState_{\PIA} (\rho.\frm,\, r) > 0\ \ \wedge \\
& \exists R:\  0\le R\le \NState_{\PIA}\ \wedge\ R(\cdot, r) \models \rho.\mathsf{guard}[\mathsf{P}\leftarrow \nu].
\end{align*}
\noindent Upon executing \(a\) at \(\cfg_{\PIA}\), the $\PState_{\PIA}$ is updated to reflect move of one process from \((\rho.\frm,r)\) to \((\rho.\mathsf{to},\, r+\rho.\mathsf{type})\), and 
\(\NState_{\PIA}\) is updated exactly as in the \(\RMA\) semantics (see Appendix~\ref{app:semantics-fixed}). 
Formally, 

\smallskip
\noindent $\effect_{\PIA}(\cfg_{\PIA},\,\Update(\rho, r)) = \langle \PState'_{\PIA},\, \NState'_{\PIA} \rangle$, where:
\begin{itemize}
    \item $\PState'_{\PIA}(\ell, r') = \PState_{\PIA}(\ell, r') - 1$ if $(\ell, r') = (\rho.\frm, r)$; \\
    $\PState'_{\PIA}(\ell, r') = \PState_{\PIA}(\ell, r') + 1$ if $(\ell, r') = (\rho.\mathsf{to},\, r+\rho.\mathsf{type})$;\\
    and $\PState'_{\PIA}(\ell, r') = \PState_{\PIA}(\ell, r')$ otherwise. 
    \item $\NState'_{\PIA}(m', r') = \NState_{\PIA}(m', r') + 1$ if $(m', r') = (\mathsf{Bcast}(\rho.\mathsf{to}),\, r+\rho.\mathsf{type})$; and $\NState'_{\PIA}(m', r') = \NState_{\PIA}(m', r')$ otherwise.
\end{itemize}
{For $\pi=\cfg_0 a_1 \cfg_1 a_2 \dots$ an execution of $\RDTS^\PIA(\mathcal{T},\nu)$, define its $\textsf{hsc}$-trace analogously to as defined in \S~\ref{subsubsec:HSC}, except that here updates no longer carry process identifiers:  
\[
\textsf{hsc}(\pi)(\ell,r)\;=\;\bigl|\{\,k \mid a_k=\Update(\rho,r'),\ \rho.\mathsf{to}=\ell,\ r=r'+\rho.\mathsf{type}\,\}\bigr|.
\]}\label{def:HSC-PIA}

Define the relabeling $\delta : Act_\nu^{\RMA} \to Act_\nu^{\PIA}$ by $\delta(\Update(i,\rho,r)) = \Update(\rho,r)$. 
We now establish an action-based bisimulation, modulo $\delta$, between the $\RMA$ and $\PIA$ semantics. 

\begin{lemma}[Action-based bisimulation]\label{lem:PIA-bisim} 
\begin{enumerate}
 \item  $\forall\ \cfg \in I_\nu^{\RMA}:$  $\alpha_{\PIA}(\cfg) \in I_\nu^{\PIA}$ and
  $ \forall\ \cfg_{\PIA} \in I_\nu^{\PIA}:$  $\exists\,\cfg \in I_\nu^{\RMA}$ with $\alpha_{\PIA}(\cfg)=\cfg_{\PIA}$.
\item For all $\cfg \in S_\nu^\RMA$ it holds:
\begin{itemize}
    \item[$\Rightarrow$] If \((\cfg,a,\cfg')\in Tr_\nu^{\RMA}\), then \((\alpha_{\PIA}(\cfg),\delta(a),\alpha_{\PIA}(\cfg')) \in Tr_\nu^{\PIA}\).
    \item[$\Leftarrow$] If $(\alpha_{\PIA}(\cfg),a',\cfg'_{\PIA})\in Tr_\nu^{\PIA}$, then $\exists\ \cfg'\in S_\nu^{\RMA}, a \in Act_\nu^\RMA:$ $(\cfg,a,\cfg')\in Tr_\nu^{\RMA}$, $\alpha_{\PIA}(\cfg')=\cfg'_{\PIA}$, and $\delta(a) = a'$.
\end{itemize}
\end{enumerate}
\end{lemma}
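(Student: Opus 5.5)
The plan is to check the two parts of Lemma~\ref{lem:PIA-bisim} directly from the definitions of $\alpha_{\PIA}$, $\enabled_{\PIA}$ and $\effect_{\PIA}$. No earlier lemma is needed beyond the fact that $S_\nu^{\PIA}$ and $I_\nu^{\PIA}$ are defined as images under $\alpha_{\PIA}$.

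\textbf{Part 1.} The first half holds by definition, since $I_\nu^{\PIA}=\{\alpha_{\PIA}(\cfg)\mid \cfg\in I_\nu^{\RMA}\}$. The second half is the same equality read in the other direction: every element of $I_\nu^{\PIA}$ is, by definition, the image of some $\cfg\in I_\nu^{\RMA}$.

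\textbf{Part 2, direction $\Rightarrow$.} Let $(\cfg,\Update(i,\rho,r),\cfg')\in Tr_\nu^{\RMA}$. Enabledness in $\RMA$ gives $\PState_\RMA(i).(\loc,\rd)=(\rho.\frm,r)$, so $\PState_{\PIA}(\rho.\frm,r)\ge 1$. It also provides a witness $R$ with $0\le R\le \NState_\RMA=\NState_{\PIA}$ and $R(\cdot,r)\models\rho.\guard[\mathsf{P}\leftarrow\nu]$. Hence $\enabled_{\PIA}(\alpha_{\PIA}(\cfg),\Update(\rho,r))$ holds.

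For the effect, the $\RMA$ step changes only the pair $(\loc,\rd)$ of process $i$, from $(\rho.\frm,r)$ to $(\rho.\mathsf{to},r+\rho.\type)$. Counting processes per (location, round), this decrements the first cell and increments the second. If the two cells coincide, the type-$0$ acyclicity (Condition~1) rules out self-loops of type $0$. For type $\ge1$ the rounds differ, so the cells never coincide. This matches $\PState'_{\PIA}$ exactly. The network update is literally the same in both semantics. So $\alpha_{\PIA}(\cfg')=\effect_{\PIA}(\alpha_{\PIA}(\cfg),\delta(a))$.

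\textbf{Part 2, direction $\Leftarrow$.} Suppose $\Update(\rho,r)$ is enabled at $\alpha_{\PIA}(\cfg)$. Then $\PState_{\PIA}(\rho.\frm,r)>0$, so by the definition of $\alpha_{\PIA}$ there is some process $i$ with $\PState_\RMA(i).(\loc,\rd)=(\rho.\frm,r)$. The same witness $R$ works, because $\NState$ is shared by the two configurations, so $a=\Update(i,\rho,r)$ is enabled in $\cfg$. Let $\cfg'=\effect_\RMA(\cfg,a)$. This lies in $S^\RMA_\nu$ because the concrete counterpart exists: take any concrete preimage, deliver messages until the witness is realized, then update. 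The check that $\alpha_{\PIA}(\cfg')=\cfg'_{\PIA}$ is the same counting computation as in the forward direction, and $\delta(a)=a'$ holds trivially.

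\textbf{Main obstacle.} The only delicate point is that $Tr^\RMA_\nu$ is defined over $S^\RMA_\nu=\alpha_\RMA(S_\nu)$. One might worry that the successor leaves this set, in particular the invariant $\NState(\cdot,r)=\mathbf 0$ above all occupied rounds. I would handle this by observing that the $\RMA$ effect is defined uniformly on any pair $(\PState_\RMA,\NState_\RMA)$. The sketch of Lemma~\ref{lm:RMA-refinement} already shows that every enabled $\RMA$ step is realizable concretely, and this keeps the successor inside $S^\RMA_\nu$. Everything else is bookkeeping.
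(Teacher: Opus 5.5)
Your proposal is correct and follows the paper's proof essentially step for step. Part~1 is by definition of $I_\nu^{\PIA}$; the forward direction transfers the witness $R$ via $\NState_{\PIA}=\NState_{\RMA}$ and counts the moved process; the backward direction picks a process $i$ at $(\rho.\frm,r)$ and reuses the same witness. Your extra points (using Condition~1 to rule out coinciding source and target cells, and closure of $S_\nu^{\RMA}$ under the effect, which the paper takes for granted) are harmless additions. The phrase ``take any concrete preimage, deliver messages until the witness is realized'' is imprecise, because receives cannot be undone; if you want that argument, use the canonical preimage $\beta_{\RMA}$ together with Condition~3, as in Lemma~\ref{lm:RMA-refinement}.
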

\begin{proof}
By definition of \(I_\nu^{\PIA}=\{\alpha_{\PIA}(\cfg)\mid \cfg\in I_\nu^{\RMA}\}\), thus we have part 1 of the lemma.

\noindent  ($\Rightarrow$)
Let $(\cfg,a,\cfg')\in Tr_\nu^{\RMA}$ with $a=\Update(i,\rho,r)$.
Enabledness of $a$ in $\RMA$ semantics requires that in $\cfg$:
\begin{itemize}
    \item $\PState_{\RMA}(i).(\loc,\rd)=(\rho.\frm,r)$, hence $\PState_{\PIA}(\rho.\frm,r)>0$ in $\alpha_{\PIA}(\cfg)$.
    \item there exists $R$ with $0\le R\le \NState_{\RMA}$ and $R(\cdot, r)\models \rho.\guard[\mathsf P\leftarrow\nu]$.
\end{itemize}
Since $\NState_{\PIA}=\NState_{\RMA}$ under $\alpha_{\PIA}$, the same $R$ witnesses enabledness of $a'=\delta(a)=\Update(\rho,r)$ in $\alpha_{\PIA}(\cfg)$.
For the effect, $a$ moves one process from $(\rho.\frm,r)$ to $(\rho.\mathsf{to},r+\rho.\type)$ and (if applicable) broadcasts to $\NState$. Thus $\alpha_{\PIA}(\cfg')$ decreases $\PState_{\PIA}(\rho.\frm,r)$ by $1$, increases $\PState_{\PIA}(\rho.\mathsf{to},r+\rho.\type)$ by $1$, and updates $\NState$ exactly as in $\effect_{\PIA}$.
Hence $(\alpha_{\PIA}(\cfg),a',\alpha_{\PIA}(\cfg'))\in Tr_\nu^{\PIA}$.

\noindent ($\Leftarrow$)
Let $(\alpha_{\PIA}(\cfg),a',\cfg'_{\PIA})\in Tr_\nu^{\PIA}$ with $a'=\Update(\rho,r)$.
Enabledness of $a'$ in $\PIA$ semantics requires that in $\alpha_{\PIA}(\cfg)$:
\begin{itemize}
    \item $\PState_{\PIA}(\rho.\frm,r)>0$, hence there exists $i$ with $\PState_{\RMA}(i).(\loc,\rd)=(\rho.\frm,r)$ in $\cfg$.
    \item there exists $R$ with $0\le R\le \NState_{\PIA}$ and $R(\cdot, r)\models \rho.\guard[\mathsf P\leftarrow \nu]$.
\end{itemize}
Since $\NState_{\PIA}=\NState_{\RMA}$ under $\alpha_{\PIA}$, the same $R$ witnesses enabledness of $a=\Update(i,\rho,r)$ in $\cfg$.
Let \(\cfg'\) be the \(\RMA\) configuration upon executing \(a=\Update(i,\rho,r)\) from $\cfg$. 
Executing $a$ moves one process from $(\rho.\frm,r)$ to $(\rho.\mathsf{to},r+\rho.\type)$ and (if applicable) broadcasts to $\NState$. Thus $\alpha_{\PIA}(\cfg')$ decreases $\PState_{\PIA}(\rho.\frm,r)$ by $1$, increases $\PState_{\PIA}(\rho.\mathsf{to},r+\rho.\type)$ by $1$, and updates $\NState$ exactly as in $\effect_{\PIA}$.  Hence $(\cfg,a,\cfg')\in Tr_\nu^{\RMA}$ with $\delta(a)=a'$ and \(\alpha_{\PIA}(\cfg')=\cfg'_{\PIA}\).
\end{proof}

\subsubsection{Proof of Proposition \ref{lem:PIA-completeness}}
 \begin{proof}%
 Let $\pi = \cfg_1 a_1 \cfg_2 a_2 \cdots$ be an execution of $\RDTS^\RMA(\mathcal{T}, \nu)$.  
 Define its projection 
 \[\Delta(\pi) = \alpha_\RMA(\cfg_1)\ \delta(a_1)\ \alpha_\RMA(\cfg_2)\ \delta(a_2)\ \cdots.\]  
 By the direction ($\Rightarrow$) of Lem.~\ref{lem:PIA-bisim},  
 $\Delta(\pi)$ is an execution of $\RDTS^\PIA(\mathcal{T}, \nu)$. Moreover $\textsf{hsc}(\Delta(\pi)) = \textsf{hsc}(\pi).$
 Thus, soundness follows.
 For completeness, apply direction ($\Leftarrow$) of Lem.~\ref{lem:PIA-bisim}:  
 for every execution $\pi_\PIA$ of $\RDTS^\PIA(\mathcal{T}, \nu)$ there exists an execution $\pi$ of $\RDTS^\RMA(\mathcal{T}, \nu)$ such that $\Delta(\pi)=\pi_\PIA$. 
\end{proof}

\subsection{Semi-synchronous restriction}
\label{app:SSR}
\subsubsection{\SSR}
The system  $\RDTS^{\SSR}(\mathcal{T},\nu) = \langle S_\nu^{\SSR}, I_\nu^{\SSR}, Act_\nu^{\SSR}, Tr_\nu^{\SSR} \rangle,$  where $S_\nu^{\SSR} = S_\nu^{\PIA}$, $I_\nu^{\SSR} = I_\nu^{\PIA}$, $Act_\nu^{\SSR} = Act_\nu^{\PIA}$, and \[Tr_\nu^{\SSR} = \{(\cfg,a,\cfg')\in Tr_\nu^{\PIA}\ |\ \tgt(a) \geq \rmax(\cfg)\}.\] 
Equivalently, enabledness and effect are given by \[\enabled_{\SSR}(\cfg,a) = \enabled_{\PIA}(\cfg,a) \wedge \tgt(a) \geq \rmax(\cfg)\] and $\effect_{\SSR}(\cfg,a) = \effect_{\PIA}(\cfg,a)$. 

{The $\textsf{hsc}$-trace of executions of $\RDTS^{\SSR}(\mathcal{T},\nu)$ is defined identically to that of executions of $\RDTS^{\PIA}(\mathcal{T},\nu)$ (see Appendix~\ref{def:HSC-PIA}). Since the action sets of the two systems coincide, this is well defined.}

Recall that every rule $\rho \in \Rules$ satisfies $\rho.\type \leq b$, where $b \in \mathbb{N}$ is the \emph{jump bound} of the template $\mathcal{T}$. Hence, in a configuration $\cfg$, no update $\Update(\rho,r)$ with $r < \rmax(\cfg) - b$ is enabled under $\SSR$. Consequently, in configuration $\cfg$ all counters below $\rmax(\cfg) - b$ can be discarded, while counters above $\rmax(\cfg)$ are zero by definition. Thus the relevant window of any $\SSR$ configuration $\cfg$ is $\llbracket \rmax(\cfg) - b, \rmax(\cfg)\rrbracket$, of fixed size $b+1$, uniformly bounded across all configurations. 
For an update action $a=\Update(\rho,r)$, let $\tgt(a)=r+\rho.\type$. 

An execution \(\pi=\cfg_1\,a_1\,\cfg_2\,a_2\cdots\) of the \(\RDTS^\PIA(\mathcal{T}, \nu)\) is called \emph{semi-synchronous} if its target sequence \(\tgt(a_1)\tgt(a_2)\dots\) is non-decreasing. This notion precisely characterizes executions of $\RDTS^{\SSR}(\mathcal{T},\nu)$, as formalized and proved in following lemma. %

\begin{lemma}\label{prop:SSRexec=SSexec}
An execution $\pi=\cfg_1a_1\cfg_2a_2\dots$ of $\RDTS^\PIA(\mathcal{T}, \nu)$ is semi synchronous iff it is also an execution of $\RDTS^\SSR(\mathcal{T}, \nu)$.
\end{lemma}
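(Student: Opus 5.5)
The plan rests on a single invariant linking the frontier round to the targets seen so far. For every execution $\pi=\cfg_1a_1\cfg_2a_2\dots$ of $\RDTS^\PIA(\mathcal{T},\nu)$ and every $k$, I will show
\[
\rmax(\cfg_{k+1})=\max\bigl(\rmax(\cfg_k),\tgt(a_k)\bigr),
\]
and hence $\rmax(\cfg_{k+1})=\max(0,\tgt(a_1),\dots,\tgt(a_k))$. With this in hand, both directions reduce to unfolding the definition $Tr_\nu^\SSR=\{(\cfg,a,\cfg')\in Tr_\nu^\PIA \mid \tgt(a)\ge\rmax(\cfg)\}$. Note also that $S^\SSR_\nu=S^\PIA_\nu$ and $I^\SSR_\nu=I^\PIA_\nu$, so being an execution of either system differs only in which transitions are allowed.

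\textbf{Step 1: the invariant.} This is the main technical point, and it is small.
\begin{itemize}
\item In an initial configuration every process is at round $0$, so $\rmax(\cfg_1)=0$.
\item For a transition $(\cfg,\Update(\rho,r),\cfg')$, $\effect_\PIA$ moves exactly one process from $(\rho.\frm,r)$ to $(\rho.\mathsf{to},\tgt(a))$. The round set occupied in $\cfg'$ therefore contains $\tgt(a)$ and is contained in (occupied rounds of $\cfg$) $\cup\{\tgt(a)\}$. This gives $\rmax(\cfg')\le\max(\rmax(\cfg),\tgt(a))$ and $\rmax(\cfg')\ge\tgt(a)$.
\item The remaining inequality, $\rmax(\cfg')\ge\rmax(\cfg)$, has one delicate case: the moved process was the only one at round $\rmax(\cfg)$, i.e.\ $r=\rmax(\cfg)$. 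Since $\rho.\type\ge 0$, we get $\tgt(a)=r+\rho.\type\ge r=\rmax(\cfg)$, and $\tgt(a)$ is occupied in $\cfg'$. Rounds never decrease.
\end{itemize}
This case is the only obstacle. It relies on $\tgt(a)\ge r$, which holds because $\type\in\llbracket 0,b\rrbracket$, and on processes never disappearing, because $\PState$ totals are preserved.

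\textbf{Step 2: ($\Leftarrow$).} Assume $\pi$ is an execution of $\RDTS^\SSR(\mathcal{T},\nu)$. Each step satisfies $\tgt(a_k)\ge\rmax(\cfg_k)$. By Step 1, for $k\ge2$ we have $\rmax(\cfg_k)\ge\tgt(a_{k-1})$. Combining the two gives $\tgt(a_k)\ge\tgt(a_{k-1})$, so the target sequence is nondecreasing.

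\textbf{Step 3: ($\Rightarrow$).} Assume the target sequence of $\pi$ is nondecreasing. Each step is already in $Tr^\PIA_\nu$, so it remains to check the frontier condition.
\begin{itemize}
\item For $k=1$: $\rmax(\cfg_1)=0\le\tgt(a_1)$.
\item For $k\ge2$: by Step 1, $\rmax(\cfg_k)=\max(0,\tgt(a_1),\dots,\tgt(a_{k-1}))=\tgt(a_{k-1})$ by monotonicity, and this is at most $\tgt(a_k)$.
\end{itemize}
Hence every step lies in $Tr^\SSR_\nu$, and $\pi$ is an execution of $\RDTS^\SSR(\mathcal{T},\nu)$. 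Finite and infinite executions are handled identically, since the argument is stepwise.
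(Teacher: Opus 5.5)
Your proof is correct and follows the paper's argument: first establish $\rmax(\cfg')=\max(\rmax(\cfg),\tgt(a))$ for every \PIA transition, then unfold the \SSR frontier condition in each direction by induction along the execution. Your version of the claim is slightly more careful than the paper's text, which writes $\rmax(\cfg_1)=1$ and $\tgt(a)>r$; the facts that actually hold and are needed are $\rmax(\cfg_1)=0$ and $\tgt(a)\ge r$, as you use.
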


\begin{proof} To prove the lemma we first prove following claim:
\begin{claim}
    For any $(\cfg,a,\cfg')\in Tr_\nu^\PIA$,  $\rmax(\cfg')=\max\bigl(\rmax(\cfg),\,\tgt(a)\bigr).$
\end{claim}
\noindent \emph{Proof of claim.} By construction of $\PIA$ an action $a=\Update(\rho,r)$ modifies only counters at rounds $r$ and $\tgt(a)>r$. If $\tgt(a)<\rmax(\cfg)$, then $\rmax(\cfg')=\rmax(\cfg)$. If $\tgt(a)\ge \rmax(\cfg)$, the transition increments the $\rho.\mathsf{to}$ location counter at round $\tgt(a)$, making it positive, so $\rmax(\cfg')=\tgt(a)$. In both cases the claim holds. 

Now suppose $\pi$ is semi-synchronous execution of $\RDTS^\PIA(\mathcal{T}, \nu)$, i.e., $\tgt(a_1)\le\tgt(a_2)\le\cdots$. Since every $\cfg\in I_\nu^\PIA$ satisfies $\rmax(\cfg)=1$, we have $\tgt(a_1)\ge \rmax(\cfg_1)$. By the claim $\rmax(\cfg_2)=\tgt(a_1)$. Inductively, if $\rmax(\cfg_i)=\tgt(a_{i-1})$ then $\tgt(a_i)\ge\tgt(a_{i-1})=\rmax(\cfg_i)$, so the $\SSR$ enabledness restriction holds at step $i$. Thus $\pi$ is an execution of $\RDTS^\SSR(\mathcal{T}, \nu)$.  

Conversely, suppose $\pi$ is an execution of $\RDTS^\SSR(\mathcal{T}, \nu)$. Then by definition $\tgt(a_i)\ge \rmax(\cfg_i)$ for all $i$. The above claim gives $\rmax(\cfg_{i+1})=\tgt(a_i)$, which implies $\tgt(a_{i+1})\ge \rmax(\cfg_{i+1})=\tgt(a_i)$. Therefore $\tgt(a_1)\le\tgt(a_2)\le\cdots$, i.e., the target sequence is non-decreasing, so $\pi$ is semi-synchronous.   
\end{proof}

To prove Lemma \ref{lm:SSR-soundness} we use the following claim whose proof follows from the definition of $Tr_\nu^\PIA$ and can be found in Appendix \ref{apdx:soundness-SSR}.

\begin{claim}[Commutativity]%
Let $a=\Update(\rho,r)$ and $a'=\Update(\rho',r')$, and suppose we have $(\cfg, a, \cfg^*), (\cfg^*, a', \cfg') \in Tr_\nu^\PIA$. If $\tgt(a)\neq r'$, then there exists $\cfg^\dagger$ with $(\cfg, a', \cfg^\dagger), (\cfg^\dagger, a, \cfg') \in Tr_\nu^\PIA$.
\end{claim}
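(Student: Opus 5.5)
Write $\cfg=\langle \PState,\NState\rangle$, and set $t=\tgt(a)=r+\rho.\type$, $t'=r'+\rho'.\type$. The plan is to take $\cfg^\dagger:=\effect_\PIA(\cfg,a')$ and prove three things: $a'$ is enabled in $\cfg$; $a$ is enabled in $\cfg^\dagger$; and $\effect_\PIA(\cfg^\dagger,a)=\cfg'$. The last point is the easiest. Each effect is a fixed additive vector on $(\PState,\NState)$: it subtracts one at $(\rho.\frm,r)$, adds one at $(\rho.\mathsf{to},t)$, and adds one at $(\mathsf{Bcast}(\rho.\mathsf{to}),t)$ when the broadcast is not $\perp$. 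Addition commutes, so applying both effects in either order yields the same pair $(\PState,\NState)$. The only remaining check is that no intermediate location counter becomes negative, and this is exactly what enabledness guarantees.

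\emph{Enabledness of $a'$ in $\cfg$.} The guard part is immediate. In $\PIA$, enabledness asks for some $R$ with $0\le R\le \NState^*$ and $R(\cdot,r')\models\rho'.\guard$. The step $a$ changes $\NState$ only at round $t\neq r'$, so $\NState^*(\cdot,r')=\NState(\cdot,r')$. Hence the witness $R$ restricted to round $r'$ (and set to $0$ elsewhere) also satisfies $R\le\NState$.

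The location part needs $\PState(\rho'.\frm,r')>0$. Since $a$ only increments at $(\rho.\mathsf{to},t)$, and $t\neq r'$, we get $\PState(\rho'.\frm,r')\ge \PState^*(\rho'.\frm,r')>0$. This holds whether or not $(\rho.\frm,r)=(\rho'.\frm,r')$, because $a$ only decrements at that pair.

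\emph{Enabledness of $a$ in $\cfg^\dagger$.} The guard witness for $a$ in $\cfg$ still works, because $\NState$ only grows under $a'$. For the location part we need $\PState^\dagger(\rho.\frm,r)>0$. This is the case that needs the most care. If $(\rho'.\frm,r')\neq(\rho.\frm,r)$, the count at $(\rho.\frm,r)$ can only stay the same or grow under $a'$, so it remains positive. If $(\rho'.\frm,r')=(\rho.\frm,r)$, both actions consume a process from the same counter. In the original order, $a$ removes one process and $a'$ then needs another, and $t\neq r'=r$ means $a$ does not refill that counter. So $\PState(\rho.\frm,r)\ge 2$, and the counter is still positive after $a'$.

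Finally, $\cfg^\dagger$ must be a legitimate $\PIA$ configuration, meaning the $\alpha_\PIA$-image of some $\RMA$ configuration. One option is to lift via Lemma~\ref{lem:PIA-bisim}: take an $\RMA$ preimage of $\cfg$ and fire the corresponding process-indexed update. Alternatively, if $S_\nu^\PIA$ is read as all count vectors, this is automatic.

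The main obstacle is the bookkeeping when the two actions share a source counter or interact through the network. The hypothesis $\tgt(a)\neq r'$ is used exactly twice: to say that $a$ neither fed the process that $a'$ moves nor added messages $a'$ reads. Monotonicity of the network state covers the reverse direction.
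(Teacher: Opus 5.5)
Your proposal is correct and follows essentially the same route as the paper's proof. You set $\cfg^\dagger=\effect_{\PIA}(\cfg,a')$, show $a'$ is enabled at $\cfg$ via round-locality of guards and $\tgt(a)\neq r'$, show $a$ is enabled at $\cfg^\dagger$ via monotonicity in $\NState$ and the fact that a shared source counter must be at least $2$, and conclude by commutativity of the additive effects. Your inequality $\PState(\rho'.\frm,r')\ge\PState^*(\rho'.\frm,r')$ is a slightly tidier replacement for the paper's case split (which tacitly uses that $a$ cannot refill the source counter of $a'$), and your remark that $\cfg^\dagger$ is a genuine $\PIA$ configuration addresses a point the paper leaves implicit.
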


\subsubsection{Proof of Lemma \ref{lm:SSR-soundness}.}

\begin{proof}
Let $\pi=\cfg_1 a_1 \cfg_2 a_2 \cdots$ be an execution of $\RDTS^{\PIA}(\mathcal{T},\nu)$, where each $a_i=\Update(\rho_i,r_i)$ and $\tgt(a_i)=r_i+\rho_i.\type$. 
We transform $\pi$ into a semi-synchronous execution by repeatedly commuting out-of-order updates until the sequence of target rounds becomes non-decreasing.

\noindent\textit{Inductive reordering.}
Write $\pi_{-1}=\pi$. For $r=0,1,\dots$ define $\pi_r$ from $\pi_{r-1}$ as follows.
Scan $\pi_{r-1}$ left-to-right and identify (in increasing order of their indices) all updates whose target is $r$. There are finitely many such updates by finiteness of processes and acyclicity of type $0$ updates.
For each such update, repeatedly apply the Commutativity Claim to swap it left over any immediately preceding update whose target is $>r$, stopping once the preceding update (if any) has target $\le r$.
Perform these swaps in the order the target-$r$ updates are encountered so their relative order is preserved.
Denote the resulting execution by $\pi_r$.

\noindent\textit{Invariants.} For every $r\ge 0$ there exists an index $m_r$ such that:
(i) the prefix execution $\pi_r[1..m_r]$ consists \emph{exactly} of all updates with target $\le r$, arranged in non-decreasing order of their targets, and for each fixed target the original relative order is preserved;
(ii) every update in the suffix $\pi_r[m_r+1..\ ]$ has target $>r$;
(iii) $\pi_r$ is a valid $\PIA$ execution starting from $\cfg_1$; and
(iv) $\pi_r$ has the same multiset of update actions as $\pi$.
Items (iii)–(iv) hold because each step uses only swaps justified by the Commutativity Claim; items (i)–(ii) hold by construction.

\noindent\textit{Limit execution.}
For $n\in\mathbb{N}$, let $r_0$ be the maximum target round among the first $n$ updates of $\pi$. Then for all $r>r_0$, the executions $\pi_r$ and $\pi_{r_0}$ agree on the prefix of length $n$. thus the prefix of length $n$ stabilizes. Consequently, the pointwise limit $\pi'=\lim_{r\to\infty}\pi_r$ exists and is a valid execution, obtained by taking, for each $n$, the stabilized prefix of length $n$. 
\end{proof}

\subsubsection{Proof of Proposition \ref{lem:SSR-completeness}}
\begin{proof}
By Lemma~\ref{lm:SSR-soundness}, for every execution $\pi$ of $\RDTS^{\PIA}(\mathcal{T},\nu)$, there exists semi-synchronous execution $\pi'$ of $\RDTS^{\PIA}(\mathcal{T},\nu)$ such that $\textsf{hsc}(\pi') = \textsf{hsc}(\pi)$. Furthermore, by Lemma \ref{prop:SSRexec=SSexec}, the executions of $\RDTS^{\SSR}(\mathcal{T},\nu)$ are exactly the semi-synchronous executions of $\RDTS^{\PIA}(\mathcal{T},\nu)$. \\Thus the set of $\textsf{hsc}$-traces of $\RDTS^{\PIA}(\mathcal{T},\nu)$ and $\RDTS^{\SSR}(\mathcal{T},\nu)$ is identical. 
\end{proof}
\subsubsection{Strong \textsf{SSR}}

{The $\SSR$ restriction is sufficient to obtain a reduced finite-counter semantics. However, we can further decrease non-determinism on interleaving by introducing a stronger restriction while preserving soundness.}

{Let $\theta$ be a total order on the set of jump rules $\{\rho \in \Rules\ |\ \rho.\type > 0\}$. An execution $\pi = \cfg_0, a_1, \cfg_1, a_2 \dots$ of $\RDTS^{\PIA}(\mathcal{T},\nu)$ with $a_i = \Update(\rho_i, r_i)$ is called \emph{$\theta$-strong semi-synchronous} if:
(i) it is semi-synchronous, i.e., for all $i<j$, $\tgt(a_i) \leq \tgt(a_j)$, and
(ii) whenever $\tgt(a_i) = \tgt(a_j)$ and $\rho_j.\type > 0$, we have $\rho_i.\type > 0$ and $\rho_i \leq_\theta \rho_j$.
Clearly, every $\theta$-strong semi-synchronous execution is semi-synchronous. The converse is also true:}

\begin{lemma}[Soundness of strong $\SSR$]
Let $\theta$ be a total order on jump rules. For every semi-synchronous execution $\pi$ of $\RDTS^{\PIA}(\mathcal{T},\nu)$ there exists a $\theta$-strong semi-synchronous execution $\pi'$ of $\RDTS^{\PIA}(\mathcal{T},\nu)$ that starts from the same initial configuration as $\pi$ and contains the same multiset of update actions.
\end{lemma}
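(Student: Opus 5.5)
The plan is to reorder $\pi$ locally, one target round at a time, using only adjacent swaps justified by the Commutativity Claim. The argument then follows the template of the proof of Lemma~\ref{lm:SSR-soundness}. Because $\pi$ is already semi-synchronous, it splits into consecutive (possibly empty) blocks $B_0 B_1 B_2\cdots$, where $B_r$ is the contiguous segment of updates with target $r$. Each block is finite: there are finitely many processes, each enters round $r$ at most once via a jump, and type-$0$ rules are acyclic. So it suffices to reorder each block internally while preserving validity. The resulting concatenation is then $\theta$-strong semi-synchronous, and since each modification is confined to a finite block, the infinite case needs no limit argument: every prefix is fixed once its blocks are processed.

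Within a block $B_r$, every action is either a jump update ($\rho.\type>0$, source round $<r$) or a local update ($\rho.\type=0$, source round $r$). All targets equal $r$, so the Commutativity Claim (precondition $\tgt(a)\neq r'$ for consecutive $a,a'$) permits swapping a consecutive pair $a\,a'$ exactly when $a'$ has source round different from $r$, i.e.\ when $a'$ is a jump update. Hence any jump update may be moved one position to the left past any immediately preceding update, local or jump. The reordering therefore proceeds as a bubble sort that only moves jump updates leftwards. First, each jump update in $B_r$ is bubbled left past all local updates preceding it, in order of occurrence, until all jumps precede all locals; this yields condition (ii)'s requirement that jump updates come before local updates with the same target. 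Second, among the jump prefix, the jump updates are sorted according to $\theta$. This sort is again a sequence of adjacent swaps of the form $a\,a'$ with $a'$ a jump moved left, which is legal. The local updates keep their relative order, which is fine because condition (ii) constrains only the jumps.

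Every swap preserves validity from the same starting configuration and preserves the multiset of actions (this is the content of the Commutativity Claim). The swaps never move an action across a block boundary, so the target sequence remains non-decreasing and the result is semi-synchronous. The resulting execution satisfies (ii): if $\tgt(a_i)=\tgt(a_j)$ with $i<j$ and $\rho_j.\type>0$, both lie in the same block, $a_j$ lies in the jump prefix, so $a_i$ is a jump as well and $\rho_i\le_\theta\rho_j$ by the sort.

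The main obstacle I expect is to verify that the Commutativity Claim really applies in the direction needed, i.e.\ that it may be invoked on $a\,a'$ when the earlier action $a$ is a local update at round $r$ and the later $a'$ is a jump into $r$. The claim requires $\tgt(a)\neq$ source$(a')$, namely $r\neq r'$, which holds because $r'<r$. I would also check that $a'$'s enabledness does not depend on a message broadcast by $a$. This holds because $a'$'s guard reads only round-$r'$ messages, while $a$ broadcasts into round $r$. A further point is that $a$ remains enabled after $a'$: in $\PIA$ enabledness is existential over subsets of $\NState$, and $\NState$ only grows, so $a$ stays enabled. Both facts are covered by the claim as stated, so the remaining work is bookkeeping.
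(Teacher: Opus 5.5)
Your proposal is correct and takes the same approach as the paper. Within each target-round segment of the semi-synchronous execution, you apply the Commutativity Claim, which is applicable whenever the later action of the pair is a jump because its source round then differs from the common target. This lets you move all jump updates ahead of the local updates and then sort the jumps by $\theta$. Your added observations, that each segment is finite and that swaps preserve the configurations at segment boundaries (so no limit argument is needed in the infinite case), usefully tighten the paper's brief sketch.
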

\begin{proof}
The proof follows from the Commutativity claim. In a semi-synchronous execution, within the fragment whose updates targets round $r$, a local update targeting round $r$ can always be commuted past a jump update targeting $r$, so all jumps can be placed before locals. Furthermore, jump updates targeting the same round commute among themselves. Thus, by reordering them according to $\theta$, we obtain the desired execution. 
\end{proof}
{The corresponding strong $\SSR$ semantics extends the $\SSR$ semantics. Informally, configurations additionally record the last executed rule, and enabledness is further restricted as follows: (i) if the last rule was local, then no jump to the current frontier round is permitted; (ii) if the last rule was a jump, then no jump to the current frontier round with a smaller (w.r.t. $\theta$) rule is permitted.
Strong $\SSR$ substantially reduces nondeterminism:
(i) it enforces that all jumps to a round $r$ occur before any locals at $r$, and
(ii) it quotients away permutations of jump updates targeting the same round by fixing their order according to $\theta$.
Since soundness holds for any choice of $\theta$, the strong $\SSR$ semantics can be viewed as if all processes \emph{synchronously jump} to round $r$, followed by an \emph{asynchronous execution of local updates} at $r$, before moving on to the next round.}

\subsection{Bounded-window abstraction}

 Recall that in $\SSR$ the relevant window of a configuration $\cfg$ is $\llbracket \rmax(\cfg)-b,\ \rmax(\cfg)\rrbracket$, of fixed size $b{+}1$. The bounded-window abstraction ($\BWA$) drops all counters outside this window and encodes the remainder as a sliding window of width $b{+}1$ anchored at the frontier $\rmax(\cfg)$; that is, it stores $\rmax(\cfg)$ together with the counters for rounds $\rmax(\cfg)-b\leq r \leq \rmax(\cfg)$.

 \noindent \textbf{Projection.} 
For a $\SSR$  configuration
$\cfg_{\SSR}=\langle \PState_{\SSR},\, \NState_{\SSR}\rangle$, we define the $\SSR$ projection \[\alpha_{\BWA}(\cfg_{\SSR})= \big\langle \PState_{\BWA},\, \NState_{\BWA},\, \rmax\big\rangle,\]  where $\PState_{\BWA} : \mathcal{L} \times \llbracket 0, b\rrbracket \to\mathbb{N}$ is given by $\PState_{\BWA}(\ell, d) = \PState_{\SSR}(\ell, \rmax(\cfg)-d),$ $\NState_{\BWA} : \mathcal{M} \times \llbracket 0, b\rrbracket \to\mathbb{N}$ is given by $\NState_{\BWA}(m, d) = \NState_{\SSR}(m, \rmax(\cfg)-d),$ and $\rmax \in \mathbb{N}$ is given by $\rmax = \rmax(\cfg_\SSR)$.
The $\BWA$ semantics \[ \RDTS^\BWA(\mathcal{T},\nu) = \langle S_\nu^{\BWA}, I_\nu^{\BWA}, Act_\nu^{\BWA}, Tr_\nu^{\BWA}\rangle, \] where $S_\nu^{\BWA} = \{\alpha_{\BWA}(\cfg) \mid \cfg \in S_\nu^\SSR\}$, $I_\nu^{\BWA} = \{\alpha_{\BWA}(\cfg) \mid \cfg \in I_\nu^\SSR\}$, and $Act_\nu^{\BWA} = Act_\nu^{\SSR}$.
The transition relation is the existential lift of $\SSR$ semantics: $(\cfg_{\BWA},a,\cfg'_{\BWA})\in Tr_\nu^{\BWA}$ iff there exist $\cfg_{\SSR},\cfg'_{\SSR}\in S_\nu^{\SSR}$ with
\[\alpha_{\BWA}(\cfg_{\SSR})=\cfg_{\BWA},\ \alpha_{\BWA}(\cfg'_{\SSR})=\cfg'_{\BWA}, \text{ and }(\cfg_{\SSR},a,\cfg'_{\SSR})\in Tr_\nu^{\SSR}.\]
{The $\textsf{hsc}$-trace of executions of $\RDTS^{\BWA}(\mathcal{T},\nu)$ is defined identically to that of executions of $\RDTS^{\PIA}(\mathcal{T},\nu)$ (see Appendix~\ref{def:HSC-PIA}). Since the action sets of the two systems coincide, this is well defined.}

Before establishing action based bisimulation between $\BWA$ and $\SSR$ semantics, we establish following property of projection $\alpha_\BWA:$
\begin{claim}%
Let $\cfg,\hat\cfg\in S_\nu^\SSR$ and $a\in Act_\nu^\SSR$.
If $\alpha_\BWA(\cfg)=\alpha_\BWA(\hat\cfg)$, then in $\SSR$ semantics:
\begin{enumerate}
\item $a$ is enabled at $\cfg$  iff it is enabled at $\hat\cfg$.
\item  if $a$ is enabled, then $\alpha_\BWA\!\big(\effect_\SSR(\cfg,a)\big)\;=\;\alpha_\BWA\!\big(\effect_\SSR(\hat\cfg,a)\big).$

\end{enumerate}
\end{claim}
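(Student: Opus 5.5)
The argument rests on one observation: everything that determines enabledness and effect in the $\SSR$ semantics lies inside the window $\llbracket \rmax(\cfg)-b, \rmax(\cfg)\rrbracket$, and $\alpha_\BWA$ records exactly this window together with $\rmax$. Assume $\alpha_\BWA(\cfg)=\alpha_\BWA(\hat\cfg)$ and write $R:=\rmax(\cfg)=\rmax(\hat\cfg)$. Then for every $\ell$, every $m$ and every $d\in\llbracket 0,b\rrbracket$ we have $\PState(\ell,R-d)=\hat\PState(\ell,R-d)$ and $\NState(m,R-d)=\hat\NState(m,R-d)$. Both configurations also vanish above $R$: no process sits in a round above $R$, and (as recorded in App.~\ref{app:semantics-fixed}) $\NState(\cdot,r')=\mathbf 0$ whenever all processes are below $r'$. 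So the two configurations agree on every round $\ge R-b$.

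\emph{Part 1.} Fix $a=\Update(\rho,r)$. By definition,
\[\enabled_\SSR(\cfg,a)=\bigl(\PState(\rho.\frm,r)>0\bigr)\wedge\bigl(\exists R':\ 0\le R'\le\NState,\ R'(\cdot,r)\models\rho.\guard[\mathsf P\leftarrow\nu]\bigr)\wedge\bigl(r+\rho.\type\ge R\bigr).\]
The third conjunct is the same for $\cfg$ and $\hat\cfg$, since it depends only on $R$. If it fails, $a$ is disabled in both configurations. If it holds, then $r\ge R-\rho.\type\ge R-b$, so $r$ lies in the window.

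The first conjunct then reads a counter inside the window, and so it agrees for $\cfg$ and $\hat\cfg$.

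For the existential conjunct, only $R'(\cdot,r)$ is constrained by the guard. One can therefore take $R'$ to be zero outside round $r$, and the condition becomes $\exists \vec x\in\mathbb N^{\mathcal M}$ with $\vec x\le\NState(\cdot,r)$ and $\vec x\models\rho.\guard[\mathsf P\leftarrow\nu]$. This depends only on $\NState(\cdot,r)$, which is again a window counter. Hence $a$ is enabled at $\cfg$ iff it is enabled at $\hat\cfg$.

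\emph{Part 2.} Let $a$ be enabled and let $t:=\tgt(a)\ge R$. By the Claim in the proof of Lemma~\ref{prop:SSRexec=SSexec}, both successors have frontier $\max(R,t)=t$.

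The effect $\effect_\PIA$ changes exactly three counters: it decrements $\PState(\rho.\frm,r)$, increments $\PState(\rho.\mathsf{to},t)$, and possibly increments $\NState(\mathsf{Bcast}(\rho.\mathsf{to}),t)$. These changes are identical in the two successors. The new window is $\llbracket t-b,t\rrbracket$, and $t-b\ge R-b$, so it lies inside the region on which $\cfg$ and $\hat\cfg$ already agree (rounds $\ge R-b$, including the zero region above $R$). Consequently the successors agree on every counter of the new window. Together with the equal frontier $t$, this gives $\alpha_\BWA(\effect_\SSR(\cfg,a))=\alpha_\BWA(\effect_\SSR(\hat\cfg,a))$.

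\emph{Main obstacle.} The delicate point is justifying agreement above $R$, which matters when $t>R$ and the new window reaches into previously unrecorded rounds. For this we need the invariant that all counters at rounds above the frontier are zero in every reachable $\SSR$ configuration. For process counters this is the definition of $\rmax$. For network counters it is the fact that a message tagged $r'$ is broadcast only by a process entering round $r'$, so $\NState(\cdot,r')\neq\mathbf 0$ implies $\rmax\ge r'$ (the constraint stated in App.~\ref{app:semantics-fixed}). I would state this invariant explicitly first, prove it by induction on executions, and then apply it in both parts.
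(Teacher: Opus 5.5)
Your proof is correct and follows essentially the same route as the paper's: equal $\alpha_\BWA$-projections give the same frontier and agreement on all counters at rounds $\ge \rmax-b$. Enabledness of $\Update(\rho,r)$ then depends only on round-$r$ counters and on $\tgt(a)\ge\rmax$, and the effect touches only rounds $r$ and $\tgt(a)$ while moving the frontier to $\max(\rmax,\tgt(a))$. You are more explicit than the paper about counters above the frontier, which the paper uses implicitly; note that their vanishing is part of the definition of $S_\nu$ and survives the projections, so it holds for every configuration in $S_\nu^{\SSR}$ without induction over executions, which would only cover reachable configurations.
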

\begin{proof}
Assume $\alpha_\BWA(\cfg)=\alpha_\BWA(\hat\cfg)$. Then $\rmax(\cfg)=\rmax(\hat\cfg)$ and, for all $r \geq \rmax(\cfg) -b$, the round $r$ counters,
$\PState(\ell,r)$ and $\NState(m,r)$ coincide in $\cfg$ and $\hat\cfg$.

\noindent 1.
For any $a=\Update(\rho,r)$, $a$ is enabled at $\cfg$ implies $r \geq \rmax(\cfg) -b$, Since the enabledness of $a$ depends only on the round $r$ counters, $a$ is enabled in $\cfg$ iff it is enabled in $\hat\cfg$. 

\noindent 2. When enabled, $a$ updates only round $r$ and round $\tgt(a)$ counters and since $\tgt(a) \geq r$ for all rounds $r' \geq \rmax(\cfg) -b$ the round $r'$ counters coincide in $\cfg' = \effect_\SSR(\cfg,a)$ and $\hat{\cfg}' = \effect_\SSR(\hat\cfg,a)$. Moreover \[\rmax(\cfg'))=\max\{\rmax(\cfg),\,\tgt(a)\} = \max\{\rmax(\hat \cfg),\,\tgt(a)\} = \rmax(\hat{\cfg}').\] Thus $\cfg'$ and $\hat{\cfg}'$ agree on $\rmax$ and the round $r'$ counters for $r' \geq \rmax(\cfg')) - b \geq \rmax(\cfg) -b$. Therefore the updated window counters match as well, showing
$\alpha_\BWA(\cfg')=\alpha_\BWA(\hat{\cfg}')$. 

\end{proof}

\begin{lemma}[Action-based bisimulation]\label{lem:BWA-bisim} It holds that:
\begin{enumerate}
 \item  $\forall\ \cfg \in I_\nu^{\SSR}:$  $\alpha_{\BWA}(\cfg) \in I_\nu^{\BWA}$ and
  $ \forall\ \cfg_{\BWA} \in I_\nu^{\BWA}:$  $\exists\,\cfg \in I_\nu^{\SSR}$ with $\alpha_{\BWA}(\cfg)=\cfg_{\BWA}$.
\item For all $\cfg \in S_\nu^\SSR$ it holds:
\begin{itemize}
    \item[$\Rightarrow$] If \((\cfg,a,\cfg')\in Tr_\nu^{\SSR}\), then \((\alpha_{\BWA}(\cfg),a,\alpha_{\BWA}(\cfg')) \in Tr_\nu^{\BWA}\).
    \item[$\Leftarrow$] If $(\alpha_{\BWA}(\cfg),a,\cfg'_{\BWA})\in Tr_\nu^{\BWA}$, then $\exists\ \cfg'\in S_\nu^{\SSR}:$ $(\cfg,a,\cfg')\in Tr_\nu^{\SSR}$ and $\alpha_{\BWA}(\cfg')=\cfg'_{\BWA}$.
\end{itemize}
\end{enumerate}
\end{lemma}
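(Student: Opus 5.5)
Part~1 follows directly from the definitions: $I_\nu^{\BWA}$ is defined as $\{\alpha_{\BWA}(\cfg)\mid \cfg\in I_\nu^{\SSR}\}$. So every initial $\SSR$ configuration projects to an initial $\BWA$ configuration, and every initial $\BWA$ configuration has an initial $\SSR$ preimage under $\alpha_\BWA$. No further argument is required beyond unfolding the definition, exactly as in Lemma~\ref{lem:PIA-bisim}.

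For direction ($\Rightarrow$) of Part~2, I will use the fact that $Tr_\nu^{\BWA}$ is defined as the existential lift of $Tr_\nu^{\SSR}$ along $\alpha_\BWA$. Given $(\cfg,a,\cfg')\in Tr_\nu^{\SSR}$, the pair $\cfg_{\SSR}:=\cfg$, $\cfg'_{\SSR}:=\cfg'$ is itself a witness for $(\alpha_\BWA(\cfg),a,\alpha_\BWA(\cfg'))\in Tr_\nu^{\BWA}$. This direction is therefore also immediate.

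The real content is direction ($\Leftarrow$), and the preceding Claim is the tool for it. Suppose $(\alpha_\BWA(\cfg),a,\cfg'_\BWA)\in Tr_\nu^{\BWA}$. By the definition of the lift, there exist $\hat\cfg,\hat\cfg'\in S_\nu^{\SSR}$ such that:
\begin{itemize}
\item $\alpha_\BWA(\hat\cfg)=\alpha_\BWA(\cfg)$,
\item $\alpha_\BWA(\hat\cfg')=\cfg'_\BWA$,
\item $(\hat\cfg,a,\hat\cfg')\in Tr_\nu^{\SSR}$.
\end{itemize}
Thus $a$ is enabled at $\hat\cfg$ in the $\SSR$ semantics. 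By item~1 of the Claim, $a$ is then enabled at $\cfg$, and I set $\cfg':=\effect_\SSR(\cfg,a)$. This gives $(\cfg,a,\cfg')\in Tr_\nu^{\SSR}$, with $\cfg'\in S_\nu^{\SSR}$ since $\SSR$ transitions stay in $S_\nu^{\SSR}=S_\nu^{\PIA}$. By item~2 of the Claim, $\alpha_\BWA(\cfg')=\alpha_\BWA(\effect_\SSR(\hat\cfg,a))=\alpha_\BWA(\hat\cfg')=\cfg'_\BWA$, as required.

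The main obstacle lies inside the Claim, which I am allowed to assume. The one point I would double-check there is that enabledness in $\SSR$ depends only on counters inside the window. Two facts must be confirmed:
\begin{itemize}
\item The existential message witness $R$ in $\enabled_\PIA$ only matters through $R(\cdot,r)$. So $R$ can be chosen as $\NState(\cdot,r)$ on round $r$ and zero elsewhere, which depends only on round-$r$ counters.
\item The $\SSR$ side condition $\tgt(a)\ge\rmax(\cfg)$, together with $\rho.\type\le b$, forces $r\ge \rmax(\cfg)-b$. So round $r$ lies inside the retained window, and the side condition itself depends only on $\rmax$, which $\alpha_\BWA$ also retains.
\end{itemize}
With those facts in place, the lemma follows by the argument above.
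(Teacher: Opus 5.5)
Your proof is correct and follows the paper's argument exactly. Part~1 and ($\Rightarrow$) are immediate from the definitions of $I_\nu^{\BWA}$ and of $Tr_\nu^{\BWA}$ as an existential lift, and ($\Leftarrow$) transfers the lifted witness transition to $\cfg$ via the preceding Claim, taking $\cfg'=\effect_\SSR(\cfg,a)$. One small slip in your side remark: the witness cannot in general be chosen with $R(\cdot,r)=\NState(\cdot,r)$, because guards need not be monotone (e.g.\ $\mathsf{g3}$ bounds message counts from above); instead, any witness can be restricted to round $r$ and set to zero elsewhere, which still shows that enabledness depends only on the round-$r$ slice.
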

\begin{proof}
(1) holds immediately since $I_\nu^{\BWA}=\{\alpha_{\BWA}(\cfg)\mid \cfg\in I_\nu^{\SSR}\}$.  
Direction ($\Rightarrow$) holds by the definition of $Tr_\nu^{\BWA}$ as an existential lift of $Tr_\nu^{\SSR}$.  

For direction ($\Leftarrow$), let $\cfg\in S_\nu^\SSR$ with $\alpha_{\BWA}(\cfg)=\cfg_\BWA$ and assume $(\cfg_\BWA,a,\cfg'_\BWA)\in Tr_\nu^\BWA$.
Then by definition of $Tr_\nu^\BWA$, there exist $\cfg^\dagger,\cfg^{\dagger\prime}\in S_\nu^\SSR$ with $(\cfg^\dagger,a,\cfg^{\dagger\prime})\in Tr_\nu^\SSR$, $\alpha_\BWA(\cfg^\dagger)=\cfg_\BWA$, and $\alpha_\BWA(\cfg^{\dagger\prime})=\cfg'_\BWA$.  
Since $\cfg$ and $\cfg^\dagger$ have the same $\BWA$ projection, above Claim implies that $a$ is also enabled at $\cfg$ and that \[\alpha_\BWA(\effect_\SSR(\cfg,a))=\alpha_\BWA(\effect_\SSR(\cfg^\dagger,a))=\cfg'_\BWA.\]  
Taking $\cfg'=\effect_\SSR(\cfg,a)$ proves the direction. 
\end{proof}

\subsubsection{Proof of Proposition \ref{lem:BWA-completeness}}
\begin{proof}
 By Lem. \ref{lem:BWA-bisim} $\alpha_{\BWA}$ is an action-based bisimulation between $\RDTS^{\SSR}(\mathcal{T},\nu)$ and $\RDTS^{\BWA}(\mathcal{T},\nu)$.  Thus, for every execution $\pi$ of $\RDTS^{\SSR}(\mathcal{T},\nu)$ there exists an execution $\pi_\BWA$ of $\RDTS^{\BWA}(\mathcal{T},\nu)$ with the same sequence of action labels, and hence $\textsf{hsc}(\pi_\BWA) = \textsf{hsc}(\pi)$.  
The converse direction follows symmetrically. 
\end{proof}

\subsection{History-record extension}

Formally, $\RDTS^\HRE(\mathcal{T},\nu,\varphi)
= \langle S_\nu^{\HRE}, I_\nu^{\HRE}, Act_\nu^{\HRE}, Tr_\nu^{\HRE}\rangle,$ where $S_\nu^{\HRE}$ is the set of $\HRE$ configurations, $I_\nu^{\HRE}$ the set of initial $\HRE$ configurations, $Act_\nu^{\HRE}=Act_\nu^{\BWA}$, and $Tr_\nu^{\HRE}$, the transition relation obtained by extending $Tr_\nu^{\BWA}$ with the corresponding updates to the history record.

\noindent \textbf{$\HRE$ Configurations.}
An $\HRE$ configuration is of the form \[\cfg_\HRE=\langle \cfg_\BWA,\local,\cumul\rangle\] with $\cfg_\BWA\in S_\nu^\BWA$ and  $\local,\cumul:\mathcal{X}\to\mathbb{N}$. For $\ell\in\mathcal{X}$, $\local(\ell)$ counts visits to $\ell$ in the frontier round $\rmax$ of $\cfg_\BWA$, and $\cumul(\ell)$ counts total visits to $\ell$ up to~$\rmax$. We refer to counters $\local$ and $\cumul$ as history record counters.

For a formula $\varphi$, define $t_{\max}(\varphi, \nu)$ as the largest numeric value attained by any threshold in $\varphi$ under $\nu$. Formally, let $T(\varphi) \subseteq \textsf{LinTrm}(\mathcal{Y})$ denote the set of thresholds occurring in $\varphi$ (for example, in Figure~\ref{fig:prop-expressivity}-Left, we have $T(\texttt{A}) = \{0\}$ and $T(\texttt{T}) = \{\mathsf{N_c}\}$), then 
\[t_{\max}(\varphi, \nu) = \max \{\, t[\mathcal{Y} \leftarrow \nu] \mid t \in T(\varphi)\,\}.\]
In $\HRE$ configurations we require that for every $\ell \in \mathcal{X}$, both $\local(\ell)$ and $\cumul(\ell)$ are bounded above by $B_\nu^\varphi = 1 + \max \{0,\, t_{\max}(\varphi, \nu)\}.$
This bounding is sufficient because every atomic constraint in $\varphi$ has the form $\sum_{x\in\mathcal{X}} c_x A_x \le t$ where $c_x \in \mathbb{N}$ and $A_x$ denote either a local or cumulative visit count. If $c_x=0$, the value of $A_x$ is irrelevant. If $c_x>0$ and $A_x$ takes value $B_\nu^\varphi$, then the atom is permanently false regardless of further increases.
Thus exact values of $A_x$ beyond $B_\nu^\varphi$ never affect the truth of~$\varphi$.

An $\HRE$ configuration $\cfg_\HRE=\langle \cfg_\BWA,\local, \cumul\rangle$ is initial iff $\cfg_\BWA\in I_\nu^{\BWA}$ and, the history record counters are initialized to match the number of processes at location $\ell$ in round~0, respecting the bound of $B_\nu^\varphi$.
Formally, for every $\ell\in\mathcal{X}$, 
\[\local(\ell)=\cumul(\ell)=\min\bigl(\PState_\BWA(\ell,0), B_\nu^\varphi\bigr),\]
where $\PState_\BWA$ is the process-state component of $\cfg_\BWA$.

\noindent \textbf{$\HRE$ transitions.}
For $\HRE$ configurations $\cfg_\HRE=\langle \cfg_\BWA,\local, \cumul\rangle$ and $\cfg_\HRE'=\langle \cfg_\BWA',\local', \cumul'\rangle$, and an action $a=\Update(\rho,r)$, we have\\ $(\cfg_\HRE,a,\cfg_\HRE)\in Tr_\nu^{\HRE}$ iff $(\cfg_\BWA,a,\cfg_\BWA')\in Tr_\nu^{\BWA}$ and the history record counters are related as follows.
Let $\rmax$, $\rmax'$ be the frontier rounds of $\cfg_\BWA$ and $\cfg_\BWA'$ respectively. Then, for each $\ell \in \mathcal{X}$:
\begin{itemize}
    \item The cumulative record $\cumul$ is incremented at location $\rho.\mathsf{to}$ subject to the saturation at $B_\nu^\varphi$; other locations are unchanged. Formally,  $\mathsf{cumul}'(\ell)=\min(\mathsf{cumul}(\ell)+1, B_\nu^\varphi)$ if $\ell=\rho.\mathsf{to}$ and $\mathsf{cumul}'(\ell)=\mathsf{cumul}(\ell)$ otherwise.%
    \item Recall that the record $\local$ counts visits in the frontier round. 
    \begin{itemize}
        \item If the frontier-round does not advance ($\rmax'=\rmax$), then only the target location increments, subject to the saturation. Formally, $\local'(\ell)=$ $ \min(\local(\ell)+1, B_\nu^\varphi)$ if  $\ell=\rho.\mathsf{to}$ and $\local'(\ell)= \local(\ell)$ otherwise.
        \item If the frontier-round advances ($\rmax'>\rmax$), all local counters reset to $0$ except at the new target, which is set to $1$. Formally, $\local'(\ell)= 1$ if  $\ell=\rho.\mathsf{to}$ and $\local'(\ell)= 0$ otherwise.
        \item Note that frontier-round is never decremented in $\BWA$ semantics. 
    \end{itemize}
\end{itemize}   

\subsubsection{Proof of Proposition \ref{lem:LTL-transaltion}}

\begin{proof}%
Write $\gamma(\langle \cfg_\BWA,\local, \cumul \rangle) = \cfg_\BWA$.  
For every $\HRE$ execution $\pi=\cfg_0 \xrightarrow{a_0}\cfg_1 \xrightarrow{a_1}\cdots$, its $\gamma$ projection
\[
\gamma(\pi) = \gamma(\cfg_0) \xrightarrow{a_0}\gamma(\cfg_1) \xrightarrow{a_1}\cdots
\]
is a $\BWA$ execution.  
Conversely, for every $\BWA$ execution $\pi=\cfg_0 \xrightarrow{a_0}\cfg_1 \xrightarrow{a_1}\cdots$, there exists $\HRE$ execution
\[
\pi' = \langle \cfg_0,{\local}_0, \cumul_0\rangle \xrightarrow{a_0} \langle \cfg_1,{\local}_1, \cumul_1\rangle \xrightarrow{a_1} \cdots,
\]
obtained by initializing history record counters as specified in $I_\nu^{\HRE}$ and executing the same actions and updating the history record counters as specified by $Tr_\nu^{\HRE}$.

We now prove the proposition for base cases $\varphi = \forall_r\,\alpha_r$ and $\varphi = \beta$. Since the translation is preserved for $\neg$ and $\wedge$, the inductive cases follow immediately.

\begin{itemize}
    \item \textbf{Case $\varphi = \forall_r\,\alpha_r$:}  
     Assume $\alpha_r=\sum_{\ell} c_\ell\cdot\kappa(\ell,r)\le t$.  
    To show $\pi \models \varphi$ iff $\pi' \models \LTL(\varphi)$.
    \begin{itemize}
        \item ($\Rightarrow$) Suppose $\pi \models \varphi$ but $\pi' \not\models \LTL(\varphi)$. Then there exists some position $i$ in $\pi'$ with
    \[
    \sum_{\ell \in \mathcal{X}} c_\ell \cdot {\local}_i(\ell) > t[\mathcal{Y} \leftarrow \nu].
    \]
    Since $\local$ counts process visits to round $\rmax(\cfg_i)$, this contradicts $\pi \models \forall_r\,\alpha_r$, as $\alpha_r$ fails for $r=\rmax(\cfg_i)$.
    \item ($\Leftarrow$) Conversely, suppose $\pi' \models \LTL(\varphi)$ but $\pi \not\models \varphi$.  
    Then there exists a round $r_0$ with
    \[
    \sum_{\ell} c_\ell\cdot\textsf{hsc}(\pi)(\ell,r_0) > t[\mathcal{Y}\leftarrow \nu].
    \]
    Since the number of processes is finite and each performs only a bounded number of updates per round, there exists a last position $i$ in $\pi$ where $\rmax(\cfg_i)=r_0$ (and if $\cfg_{i+1}$ exists, then $\rmax(\cfg_{i+1})>r_0$).
    At this point,
    \[
    \sum_{\ell \in \mathcal{X}} c_\ell \cdot {\local}_i(\ell) = \sum_{\ell \in \mathcal{X}} c_\ell\cdot\textsf{hsc}(\pi)(\ell,r_0) > t[\mathcal{Y} \leftarrow \nu],
    \]
    contradicting $\pi' \models \LTL(\varphi)$.  
    \end{itemize}
    \item \textbf{Case $\varphi = \beta$:}  
    Assume $\beta = \sum_{\ell\in\mathcal X} c_\ell\cdot\sum_r \kappa(\ell,r)\le t$.  To show $\pi \models \varphi$ iff $\pi' \models \LTL(\varphi)$.
    \begin{itemize}
        \item ($\Rightarrow$) Suppose $\pi \models \varphi$ but $\pi' \not\models \LTL(\varphi)$. Then there exists some position $i$ in $\pi'$ with
    \[
    \sum_{\ell \in \mathcal{X}} c_\ell \cdot {\mathsf{cumul}_i}(\ell) > t[\mathcal{Y} \leftarrow \nu].
    \]
    Since $\mathsf{cumul}_i$ count exactly the total number of cumulative visits to each location till $\cfg_i$, the count is less than or equal to the total number of cumulative visits in $\pi$, thus 
    \[\sum_{\ell\in\mathcal X} c_\ell\cdot\sum_r \textsf{hsc}(\pi)(\ell,r) \geq \sum_{\ell \in \mathcal{X}} c_\ell \cdot \mathsf{cumul}_i(\ell) > t[\mathcal{Y} \leftarrow \nu]\] contradicting $\pi \models \varphi$.
        \item ($\Leftarrow$) Conversely, if $\pi' \models \LTL(\beta)$ then the number of process visits never exceeds $t[\mathcal Y\leftarrow \nu]$ at any prefix, so $\pi \models \beta$.
    \end{itemize}
\end{itemize}

Thus for every execution $\pi$ of $\RDTS^\BWA(\mathcal{T},\nu)$ and its corresponding execution $\pi'$ of $\RDTS^{\HRE}(\mathcal{T},\nu, \varphi)$:
$\pi \models \varphi$ iff $\pi' \models \LTL(\varphi).$
Quantifying over executions yields the desired equivalence at the model level.
\end{proof}

\subsection{Round-identity abstraction}
The system $\RDTS^\RIA(\mathcal{T},\nu, \varphi)$ abstracts the frontier-round identifier $\rmax$ from $\RDTS^\HRE(\mathcal{T},\nu, \varphi)$. For an $\HRE$ configuration 
\[\cfg_{\HRE}=\langle \cfg_\BWA,\, \local, \cumul\rangle\text{ with }
\cfg_\BWA = \langle \PState_{\BWA},\, \NState_{\BWA},\, r_{\max}\rangle,\] define the $\RIA$ projection as 
$\alpha_{\RIA}(\cfg_{\HRE}) \;=\; \langle \PState_{\BWA},\, \NState_{\BWA},\, \local, \cumul \rangle.$
Then the $\RIA$ semantics is the system \[\RDTS^\RIA(\mathcal{T},\nu, \varphi) = \langle S_\nu^{\RIA}, I_\nu^{\RIA}, Act_\nu^{\RIA}, Tr_\nu^{\RIA}\rangle, \] where 
$S_\nu^{\RIA}=\{\alpha_{\RIA}(\cfg) \mid \cfg \in S_\nu^\HR\}$, $I_\nu^{\RIA} = \{\alpha_{\RIA}(\cfg) \mid \cfg \in I_\nu^\HR\}$, and $Act_\nu^{\RIA} = Act_\nu^{\HRE}$.
The transition relation is the existential lift of $\HRE$ semantics:\\
$(\cfg_{\RIA},a,\cfg'_{\RIA})\in Tr_\nu^{\RIA}$ iff there exist $\cfg_{\HRE},\cfg'_{\HRE}\in S_\nu^{\HRE}$ with \[\alpha_{\RIA}(\cfg_{\HRE})=\cfg_{\RIA}, \alpha_{\RIA}(\cfg'_{\HRE})=\cfg'_{\RIA}, \text{ and }(\cfg_{\HRE},a,\cfg'_{\HRE})\in Tr_\nu^{\HRE}.\] %

\begin{lemma}[State-based bisimulation]\label{lem:RIA-bisim}  It holds that:
\begin{enumerate}
 \item  $\forall\ \cfg \in I_\nu^{\HRE}:$  $\alpha_{\RIA}(\cfg) \in I_\nu^{\RIA}$ and
  $ \forall\ \cfg_{\RIA} \in I_\nu^{\RIA}:$  $\exists\,\cfg \in I_\nu^{\HRE}$ with $\alpha_{\RIA}(\cfg)=\cfg_{\RIA}$.
\item For all $\cfg \in S_\nu^\HRE$ it holds:
\begin{itemize}
    \item[$\Rightarrow$] If \((\cfg,a,\cfg')\in Tr_\nu^{\HRE}\), then \((\alpha_{\RIA}(\cfg),a,\alpha_{\RIA}(\cfg')) \in Tr_\nu^{\RIA}\).
    \item[$\Leftarrow$] If $(\alpha_{\RIA}(\cfg),a,\cfg'_{\RIA})\in Tr_\nu^{\RIA}$, then $\exists\ \cfg'\in S_\nu^{\HRE},\ a' \in Act_\nu^\HRE$ such that\\ $(\cfg,a',\cfg')\in Tr_\nu^{\HRE}$ and $\alpha_{\RIA}(\cfg')=\cfg'_{\RIA}$.
\end{itemize}
\end{enumerate}
\end{lemma}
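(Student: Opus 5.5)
Part 1 and direction ($\Rightarrow$) follow directly from the definitions. $I_\nu^{\RIA}$ is defined as the image of $I_\nu^{\HRE}$ under $\alpha_{\RIA}$, which gives both halves of Part~1. $Tr_\nu^{\RIA}$ is defined as the existential lift of $Tr_\nu^{\HRE}$ along $\alpha_{\RIA}$, so every $\HRE$ transition $(\cfg,a,\cfg')$ yields $(\alpha_{\RIA}(\cfg),a,\alpha_{\RIA}(\cfg'))\in Tr_\nu^{\RIA}$ with the same label. The substance of the proof is direction ($\Leftarrow$). There I would mirror the argument for Lemma~\ref{lem:BWA-bisim}: first prove an auxiliary claim that the $\HRE$ semantics is \emph{translation-invariant in $\rmax$}, then transport witnesses.

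\textbf{Auxiliary claim.} Let $\cfg,\hat\cfg\in S_\nu^{\HRE}$ with $\alpha_{\RIA}(\cfg)=\alpha_{\RIA}(\hat\cfg)$, so they differ only in $\rmax$, say by a shift $\delta=\rmax(\hat\cfg)-\rmax(\cfg)$. Then for every $a=\Update(\rho,r)$ enabled at $\cfg$, the shifted action $a^{\delta}=\Update(\rho,r+\delta)$ is enabled at $\hat\cfg$, and
\[
\alpha_{\RIA}(\effect(\hat\cfg,a^\delta))=\alpha_{\RIA}(\effect(\cfg,a)).
\]
This is the reason the statement allows a possibly different label $a'$. To prove the claim, I would unfold the $\BWA$ transition through its $\SSR$/$\PIA$ origin:
\begin{itemize}
\item[(i)] In $\PIA$, enabledness of $\Update(\rho,r)$ depends only on $\PState(\rho.\frm,r)$ and $\NState(\cdot,r)$. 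The $\SSR$ condition $\tgt(a)\ge\rmax$ depends only on the offset $\rmax-r$. Hence, in window coordinates $d=\rmax-r\in\llbracket 0,b\rrbracket$, enabledness is a function of $(\PState_\BWA,\NState_\BWA,d,\rho)$ alone.
\item[(ii)] The effect on the window is likewise expressed via offsets. Let $d'=\rmax-\tgt(a)$. If $\tgt(a)\le\rmax$, only depths $d$ and $d'$ change. If $\tgt(a)>\rmax$, the window slides by $\tgt(a)-\rmax$, dropping the oldest depths and filling new depths with zeros except at the target. Both cases depend only on offsets.
\item[(iii)] The $\local$/$\cumul$ updates depend only on $\rho.\mathsf{to}$ and on whether $\rmax'>\rmax$, which is again an offset condition.
\end{itemize}

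To make (ii) rigorous, I would pick $\SSR$ preimages of $\cfg$ and $\hat\cfg$ and shift the whole $\SSR$ configuration by $\delta$. The realizability of a preimage for $\hat\cfg$ is given, since $\hat\cfg\in S_\nu^{\HRE}$. The Claim preceding Lemma~\ref{lem:BWA-bisim} shows that the choice of preimage is irrelevant.

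\textbf{Direction ($\Leftarrow$).} Given $(\alpha_{\RIA}(\cfg),a,\cfg'_{\RIA})\in Tr_\nu^{\RIA}$, the definition of $Tr_\nu^{\RIA}$ provides a witness $(\hat\cfg,a,\hat\cfg')\in Tr_\nu^{\HRE}$ with $\alpha_{\RIA}(\hat\cfg)=\alpha_{\RIA}(\cfg)$ and $\alpha_{\RIA}(\hat\cfg')=\cfg'_{\RIA}$. Applying the claim with the roles of $\cfg$ and $\hat\cfg$ exchanged, the shifted action $a'=a^{-\delta}$ is enabled at $\cfg$. Its effect $\cfg'$ satisfies $\alpha_{\RIA}(\cfg')=\alpha_{\RIA}(\hat\cfg')=\cfg'_{\RIA}$, as required.

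\textbf{Main obstacle.} The main obstacle is the shift step (ii). Shifting an $\SSR$ configuration by a negative $\delta$ could push rounds below $0$. Processes occupying rounds below $\rmax-b$ also lie outside the window, but they are still part of the concrete configuration and are not determined by the window. I would avoid both issues by never shifting the whole configuration. Instead, I would argue directly at the $\BWA$ level that enabledness and effect are determined by the window together with offsets, using only that updates read and write rounds in $\llbracket\rmax-b,\tgt(a)\rrbracket$.

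The one genuinely delicate point is the guard clause ``$\exists R\le\NState$''. It must refer only to $\NState(\cdot,r)$, and this holds because the guard is evaluated on $R(\cdot,r)$ alone, so $R$ can be taken to be zero elsewhere.
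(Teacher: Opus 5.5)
Your proposal is correct and follows the paper's proof. Part~1 and ($\Rightarrow$) hold by definition, and for ($\Leftarrow$) you take the witnessing $\HRE$ transition and shift the action's round index by the difference of frontier rounds, $a'=\Update(\rho,r+\rmax-\rmax^1)$, using that enabledness and effect depend only on the window and offsets. Your auxiliary translation-invariance claim, including the appeal to the preimage-independence Claim before Lemma~\ref{lem:BWA-bisim}, spells out what the paper compresses into ``it follows from the semantics of $\RDTS^\HRE$''.
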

\begin{proof}
Part 1 is immediate from \(I_\nu^{\RIA}=\{\alpha_{\RIA}(\cfg)\mid \cfg\in I_\nu^{\HRE}\}\). 

Direction ($\Rightarrow$) follows from the definition of \(Tr_\nu^{\RIA}\) as the existential lift of \(Tr_\nu^{\HRE}\).
For direction (\(\Leftarrow\)), suppose \((\alpha_{\RIA}(\cfg),a,\cfg'_{\RIA})\in Tr_\nu^{\RIA}\). By definition of \(Tr_\nu^{\RIA}\) there exist \(\cfg_1,\cfg_2\in S_\nu^{\HRE}\) with 
\[(\cfg_1,a,\cfg_2)\in Tr_\nu^{\HRE}, \alpha_{\RIA}(\cfg_1)=\alpha_{\RIA}(\cfg),\text{ and }\alpha_{\RIA}(\cfg_2)=\cfg'_{\RIA}.\] Thus \(\cfg\) and \(\cfg_1\) agree on the $\PState_\BWA$, $\NState_\BWA$ and on history record counters $\local$ and $\cumul$; they may differ only on the frontier-round. Let \(\rmax^1\) be the frontier-round of \(\cfg_1\), \(\rmax\) be the frontier-round of \(\cfg\), and let \(\Delta =\rmax-\rmax^1\).
Suppose \(a=\Update(\rho,r)\). Set \(r' =r+\Delta\) and \(a'=\Update(\rho,r')\).
It follows from the semantics of $\RDTS^\HRE(\mathcal{T}, \nu,\varphi)$ that (i) \(a'\) is enabled at \(\cfg\), and (ii) executing \(a'\) at \(\cfg\) results in \(\cfg'\) such that \(\alpha_{\RIA}(\cfg')=\cfg'_{\RIA}\). This gives the required \(\cfg'\) and \(a'\). 
\end{proof}

\subsubsection{Proof of Proposition \ref{lem:RIA-completeness}.} The proposition follows from the Lem. \ref{lem:RIA-bisim} and the fact that $\alpha_\RIA$ preserves the state labels given by the history record counters.

\section{Technical proofs from Section~\ref{app:abstractions}}
\subsection{Proof of Lemma \ref{lm:RMA-refinement}} \label{apdx:ref-RMA}

\begin{proof}

Fix an $\RMA$ configuration $\cfg_{\RMA}=\langle \PState_{\RMA},\NState_{\RMA}\rangle\in S_\nu^{\RMA}$.  
Define $\beta_{\RMA}(\cfg_{\RMA})=\cfg=\langle \PState,\NState\rangle\in S_\nu$ by setting $\NState=\NState_{\RMA}$ and, for each $i\in\llbracket 1,N\rrbracket$, letting $\PState(i).\mathsf{loc}=\PState_{\RMA}(i).\mathsf{loc}$ and $\PState(i).\mathsf{rd}=\PState_{\RMA}(i).\mathsf{rd}$.

Write $\ell\leadsto_{0}\ell'$ if there is a (possibly empty) path of type-0 rules from $\ell$ to $\ell'$. Let
\[
X_i(\cfg)=\{\rho\in\mathsf{Rules}\mid \rho.\mathsf{type}=0\ \wedge\ \rho.\mathsf{to}\leadsto_{0}\PState(i).\mathsf{loc}\},
\]
and for a guard $\phi$, let $\mathsf{FV}_\mathcal{M}(\phi)\subseteq\mathcal M$ denote its set of message variables. Define
\[
M_i(\cfg)=\bigcup_{\rho\in X_i(\cfg)}\mathsf{FV}_{\mathcal M}(\rho.\mathsf{guard}).
\]

Finally, define the received-message multiset $\PState(i).\mathsf{rcvMsg}:\mathcal M\times\mathbb N\to\mathbb N$ by
\[
\PState(i).\mathsf{rcvMsg}(m,r)=
\begin{cases}
\NState(m,r), & r<\PState(i).\mathsf{rd},\\
\NState(m,r), & r=\PState(i).\mathsf{rd}\ \text{and } m\in M_i(\cfg),\\
0, & \text{otherwise.}
\end{cases}
\]

The $\beta_\RMA$ preserves the network component, respects process locations and rounds, and constructs received message component that satisfies condition~\eqref{equ:rec-bounded-bcast}.

$\beta_\RMA$ satisfies Part 1 of lemma, since initially every process is in initial location of round 0, and initial locations, and by Definition \ref{def:RPT} no type-0 rule can lead to an initial location and thus $M_i(\cfg) = \empty$ for all $i$.

\smallskip

For part 2, fix an abstract transition $(\cfg_{\RMA},a,\cfg'_{\RMA})\in Tr_\nu^{\RMA}$ with 
$a=\Update(i_0,\rho,r)$, and let 
$\beta_{\RMA}(\cfg_{\RMA})=\cfg$ and 
$\beta_{\RMA}(\cfg'_{\RMA})=\cfg'$, with $\cfg=\langle \PState,\NState\rangle$ and $\cfg' = \langle \PState',\NState'\rangle$.
By $\RMA$-enabledness we have, $\PState(i_0).(\mathsf{loc},\mathsf{rd})=(\rho.\mathsf{frm},r)$ and there exists 
$R:\mathcal M\times\mathbb N\to\mathbb N$ satisfying $0\le R\le \NState$ such that 
$R(\cdot, r)\models \rho.\mathsf{guard}[\mathsf P \leftarrow \nu]$.

\noindent \textbf{Step 1 (Prepare with receives).}
Let \(F=\mathsf{FV}_{\mathcal M}(\rho.\mathsf{guard})\). For each \(m\in F\), set
\(\Delta(m)=\max\{0,\,R(m,r)-\PState(i_0).\mathsf{rcvMsg}(m,r)\}\).
Execute exactly \(\Delta(m)\) transitions \(\Recv(i_0,\langle m,r\rangle)\) (for each \(m\in F\)),
obtaining \[\cfg \leadsto^{\Recv^\ast} \cfg^\ast=\langle \PState^\ast,\NState\rangle.\]
Feasibility holds because \(0\le\Delta(m)\le \NState(m,r)-\PState(i_0).\mathsf{rcvMsg}(m,r)\) by \(R\le\NState\) (see §\ref{receive-tr}).

Note that only \(\PState(i_0).\mathsf{rcvMsg}(\cdot,r)\) may change in $\cfg^*$ after series of process $i_0$ receives. Let \(S =F\cap M_{i_0}(\cfg)\).
By the definition of \(\beta_{\RMA}\):
\begin{itemize}
    \item For \(m\in S\), \(\PState(i_0).\mathsf{rcvMsg}(m,r)=\NState(m,r)\), hence \(\Delta(m)=0\) and therefore \(\PState^\ast(i_0).\mathsf{rcvMsg}(m,r)\ge R(m,r)\).
    \item For \(m\in F\setminus S\), \(\PState(i_0).\mathsf{rcvMsg}(m, r) = 0\), hence \(\Delta(m)=R(m,r)\) and therefore
\(\PState^\ast(i_0).\mathsf{rcvMsg}(m,r)=R(m,r)\).
\item 
For \(m\notin F\), and for all \(i\neq i_0\) or \(r'\neq r\), received message counts are unchanged.
\end{itemize}
Consequently, on the guard’s support \(F\), \(\PState^\ast(i_0).\mathsf{rcvMsg}(\cdot,r)\) agrees with \(R(\cdot,r)\) on \(F\setminus S\) and dominates it on \(S\).
Since \(\rho.\mathsf{guard}\) depends only on \(F\), and by Condition~$3$ in Def.~\ref{def:RPT}, the guard is
monotone in the variables from \(S\), replacing \(R(\cdot,r)\) by
\(\PState^\ast(i_0).\rcvMsg(\cdot,r)\) preserves satisfaction:
if \(R(\cdot,r)\models \rho.\mathsf{guard}[\mathsf P\leftarrow \nu]\), then also
\(\PState^\ast(i_0).\rcvMsg(\cdot,r)\models \rho.\mathsf{guard}[\mathsf P\leftarrow \nu]\).
Hence \(\Update(i_0,\rho,r)\) is enabled in \(\cfg^\ast\).

\noindent \textbf{Step 2 (Apply the update).}
From Step~1, \(a=\Update(i_0,\rho,r)\) is enabled in \(\cfg^\ast\).
Apply it to obtain
\[
\cfg^\ast \xrightarrow{\ \Update(i_0,\rho,r)\ } \cfg^x
  = \langle \PState^x,\NState^x\rangle.
\]
Receives do not change \(\NState\) or any \((\mathsf{loc},\mathsf{rd})\), and in both the concrete
and \(\RMA\) semantics the effect of \(\Update\) on these components coincides. Therefore, comparing
with the abstract step \((\cfg_{\RMA},a,\cfg'_{\RMA})\),
\[
\NState^x=\NState' \qquad\text{and}\qquad
\PState^x(i).(\mathsf{loc},\mathsf{rd})=\PState'(i).(\mathsf{loc},\mathsf{rd})
\ \ \text{for all } i.
\]

\noindent \textbf{Step 3 (Realign with receives).}
Only the receive message component may differ between \(\cfg^x\) and \(\cfg'\).
We reach \(\cfg'\) from \(\cfg^x\) by finitely many \(\Recv\) steps.

\noindent \emph{Past rounds.} 
For every process $i$ and every $r'<\PState'(i).\mathsf{rd}$, 
$\beta_{\RMA}(\cfg'_{\RMA})$ requires 
$\PState'(i).\mathsf{rcvMsg}(\cdot, r')=\NState'(\cdot, r')$. 
Since $\PState^x(i).\mathsf{rcvMsg}(\cdot, r')\le \NState'(\cdot, r')$ \\component-wise 
(Eq.~\ref{equ:rec-bounded-bcast}), we can deliver the outstanding messages 
at each $(m,r')$ until equality holds.

\noindent\emph{Frontier round.} 
Let $r_i=\PState'(i).\mathsf{rd}$ and let $M_i(\cfg')$ be as in the definition of $\beta_{\RMA}$; note that $M_i(\cfg')=M_i(\cfg^x)$. 
For each $i$, deliver messages at round $r_i$ exactly for those $m\in M_i(\cfg^x)$ 
until $\PState(i).\mathsf{rcvMsg}(m,r_i)=\NState'(m,r_i)$, 
and deliver none for $m\notin M_i(\cfg^x)$. 
This is feasible by Eq.~\ref{equ:rec-bounded-bcast}. 
Moreover:
\begin{itemize}
\item For $i\neq i_0$ we have $\PState^x(i)=\PState(i)$, hence $M_i(\cfg^x)=M_i(\cfg)$ and thus for all $m \notin M_i(\cfg^x)$ their receive count in round $r$ remain $0$.
\item For $i=i_0$: 
  \begin{itemize}
  \item If $\rho.\mathsf{type}=0$, then by construction of $M_{i_0}$ along type-0 paths, 
  $M_{i_0}(\cfg^x)\supseteq M_{i_0}(\cfg)\cup \mathsf{FV}_{\mathcal M}(\rho.\mathsf{guard})$, 
  so all messages delivered in Step~1 remain permitted and any additional ones required by $\beta_{\RMA}(\cfg'_{\RMA})$ can be added now.
  \item If $\rho.\mathsf{type}>0$, the frontier advances to $r+\rho.\mathsf{type}$, and since\\ 
  $\PState^x(i_0).\mathsf{rcvMsg}(\cdot, r+\rho.\mathsf{type})=0$, 
  we simply deliver up to $\NState'(\cdot, r+\rho.\mathsf{type})$ on the set $M_{i_0}(\cfg^x)$.
  \end{itemize}
\end{itemize}

\noindent\emph{Future rounds.} 
For all $i$ and $r''>\PState'(i).\mathsf{rd}$, $\beta_{\RMA}(\cfg'_{\RMA})$ requires \\
$\PState(i).\mathsf{rcvMsg}(\cdot, r'')=0$. 
This already holds in $\cfg^x$ and is preserved by not delivering any such messages.

After finitely many $\Recv$ transitions as above, the resulting concrete state 
$\cfg^\dagger=\langle \PState^{\dagger},\NState'\rangle$ satisfies 
$\PState^{\dagger}(i).(\mathsf{loc},\mathsf{rd})=\PState'(i).(\mathsf{loc},\mathsf{rd})$ 
and \\$\PState^{\dagger}(i).\mathsf{rcvMsg}=\PState'(i).\mathsf{rcvMsg}$ for all $i$; 
hence the state reached is exactly 
$\cfg^\dagger=\beta_{\RMA}(\cfg'_{\RMA})=\cfg'$.

We have thus constructed a concrete finite path
\[
\cfg\ \leadsto^{\Recv^*} 
\cfg^\ast\ \xrightarrow{\,a\,}\ \cfg^x\ \leadsto^{\Recv^*}\ \cfg',
\]
in which exactly one transition is $a$ and all others are $\Recv$, as required. 
\end{proof}

\subsection{Proof of Commutativity Claim (Appendix~\ref{app:SSR})}\label{apdx:soundness-SSR}
Before providing the proof we recall the $\PIA$ semantics. Recall in~$\PIA$ semantics, a configuration $\cfg=\langle \PState,\NState\rangle$ consists of
$\PState:\mathcal{L}\times\mathbb{N}\to\mathbb{N}$ (number of processes at location/round pairs $(\ell,r)$) and
$\NState:\mathcal{M}\times\mathbb{N}\to\mathbb{N}$ (number of messages of kind $m$ sent for round~$r$).
For an action $a=\Update(\rho,r)$, set
\[
s(a)=(\rho.\mathsf{from},r),\qquad
\tgt(a)=r+\rho.\mathsf{type},\qquad
t(a)=(\rho.\mathsf{to},\tgt(a)).
\]
Executing $a$ at $\cfg=\langle \PState,\NState\rangle$ produces $\cfg'=\langle \PState',\NState'\rangle$ with

\[
\PState'(\ell,r')=
\begin{cases}
\PState(\ell,r')- 1, &\text{if }(\ell,r')= s(a),\\
\PState(\ell,r') + 1 &\text{if }(\ell,r')= t(a), \\
\PState(\ell,r'), &\text{otherwise}.
\end{cases}
\]

\[
\NState'(m,r')=
\begin{cases}
\NState(m,r')+1,&\text{if }(m,r')=(\mathsf{Bcast}(\rho.\mathsf{to}),\tgt(a)),\\
\NState(m,r'),&\text{otherwise}.
\end{cases}
\]
Action $a$ is enabled at $\cfg$ if
\begin{itemize}
    \item (C1) $\PState(s(a))>0$, and
    \item (C2) there exists a witness $R:\mathcal{M}\times\mathbb{N}\to\mathbb{N}$ with $0\le R\le \NState$ (component-wise) such that $R(\cdot, r)\models \rho.\guard[\mathsf P\!\leftarrow\!\nu]$.
\end{itemize}

\noindent \textbf{We will use following two facts.}
Let $a=\Update(\rho,r)$:
\begin{enumerate}
\item (F1) Round locality of guards.
Enabledness of $a$ depends only on the round $r$ slice of $\NState$.
Indeed, if \(R\le\NState\) witnesses the guard, then so does $\widetilde R$ defined by $\widetilde R(\cdot,r)=R(\cdot,r)$ and $\widetilde R(\cdot,r'')=0$ for all $r''\neq r$.
\item (F2) Monotonicity in $\NState$.
If $a$ is enabled under $\NState$ and $\NState'\ge \NState$ component-wise, then $a$ remains enabled under $\NState'$ (reuse the same witness~$R$).
\end{enumerate}

\begin{proof}
Assume \((\cfg,a,\cfg^*),(\cfg^*,a',\cfg')\in Tr_\nu^\PIA\) with $\cfg= \langle \PState, \NState \rangle$, $\cfg^*= \langle \PState^*, \NState^* \rangle$, $\cfg'= \langle \PState', \NState' \rangle$, \(a=\Update(\rho,r)\) and \(a'=\Update(\rho',r')\), and assume \(\tgt(a)\neq r'\).

\medskip
\noindent\emph{Step 1: $a'$ is enabled at $\cfg$.}
Since $(\cfg^*,a',\cfg')\in Tr_\nu^\PIA$, enabledness of $a'$ at $\cfg^*$ gives
\[
\PState^*(s(a'))>0
\quad\text{and}\quad
\exists R^*\le \NState^*\text{ with }R^*(\cdot, r')\models \rho'.\guard[\mathsf P\leftarrow \nu].
\]

\noindent Enabledness condition C2 holds. Executing $a$ from $\cfg$ to $\cfg^*$ changes $\NState$ only at
$(\mathsf{Bcast}(\rho.\mathsf{to}),\,\tgt(a))$.
By the hypothesis $\tgt(a) \neq r'$, we have $\NState^*(\cdot, r')$ $=\NState(\cdot, r')$.
By (F1), $\widetilde R^*$ is a valid witness for the guard of $a'$ at $\cfg$.

\noindent Enabledness condition C1 holds.
If $s(a')\neq s(a)$, then $\PState^*(s(a'))=\PState(s(a'))$, hence $\PState(s(a'))>0$.
Else $s(a')=s(a)$, then $\PState^*(s(a))=\PState(s(a))-1>0$, so $\PState(s(a))\ge 2$ and in particular $\PState(s(a'))>0$.
Thus $a'$ is enabled at $\cfg$. Let $\cfg^\dagger=\effect_{\PIA}(\cfg,a')$. Then $(\cfg,a',\cfg^\dagger)\in Tr_\nu^\PIA$.

\medskip
\noindent\emph{Step 2: $a$ is enabled at $\cfg^\dagger$.}
Since $(\cfg,a,\cfg^*)\in Tr_\nu^\PIA$, enabledness of $a$ at $\cfg$ gives
\[
\PState(s(a))>0
\quad\text{and}\quad
\exists R\le \NState\text{ with }R(\cdot, r)\models \rho.\guard[\mathsf P\leftarrow \nu].
\]

\noindent Enabledness condition C2 holds. From $\cfg$ to $\cfg^\dagger$, executing $a'$ changes $\PState$ only at $s(a')$ (by $-1$) and $t(a')$ (by $+1$), and changes $\NState$ only by increasing the entry at $(\mathsf{Bcast}(\rho'.\mathsf{to}),\,\tgt(a'))$.
Hence $\NState^\dagger\ge \NState$ component-wise, so by (F2) any witness for the guard of $a$ at $\cfg$ remains valid at $\cfg^\dagger$.

\noindent Enabledness condition C1 holds.
If $s(a)\neq s(a')$, then
$\PState^\dagger(s(a))=$\\ $\PState(s(a))>0$.
Else $s(a)=s(a')$, then from Step~1 we had $\PState(s(a))\ge 2$, so
$\PState^\dagger(s(a))=\PState(s(a))-1\ge 1>0$.
Therefore $a$ is enabled at $\cfg^\dagger$, let $\effect_{\PIA}(\cfg^\dagger,a) = \widehat \cfg$ then $(\cfg^\dagger,a, \widehat \cfg)\in Tr_\nu^\PIA$.

\medskip
\noindent\emph{Step 3: $\widehat \cfg$ and $\cfg'$ are equal.}
Both sequences $aa'$ and $a'a$ modify exactly the same multiset of counters:
\begin{itemize}
\item on $\PState$: decrement at $s(a)$ and $s(a')$, increment at $t(a)$ and $t(a')$;
\item on $\NState$: increment at $(\mathsf{Bcast}(\rho.\mathsf{to}),\,\tgt(a))$ and at $(\mathsf{Bcast}(\rho'.\mathsf{to}),\,\tgt(a'))$.
\end{itemize}
Since these updates are additions of integers (decrement is addition of the count and -1) at fixed coordinates, which commute, we obtain
\[
\effect_{\PIA}(\effect_{\PIA}(\cfg,a),a') \;=\; \effect_{\PIA}(\effect_{\PIA}(\cfg,a'),a).
\]
By hypothesis, since the left-hand side is $\cfg'$ and the right hand side $\widehat \cfg$, hence $
\widehat \cfg=\cfg'$.

\medskip
Thus we have found $\cfg^\dagger$ with $(\cfg,a',\cfg^\dagger)$ and $(\cfg^\dagger,a,\cfg')$ in $Tr_\nu^\PIA$, as required. 
\end{proof}

\section{Explicit Encoding of Finite-Counter System}
\label{app:counter-system}

This appendix gives the explicit finite-counter-system construction obtained
after the reduction presented in Section~\ref{sec:abstractions}. The input is a process template
\[
\mathcal{T}
=
\langle
\mathsf{P},\mathsf{rc},\mathcal{L},\mathcal{I},
\mathcal{M},\mathsf{Bcast},\mathsf{Rules}
\rangle
\]
and an \(\HSCL\) formula \(\varphi\). For fixed-instance verification, the input
also includes a parameter valuation \(\nu:\mathsf{P}\to\mathbb{N}\).

The output is a finite-counter system 
\[
(V,\mathsf{INVAR},\mathsf{INIT},\mathsf{TRAN})
\]
together with the \(\LTL\) formula obtained from \(\varphi\), whose atomic
propositions are linear-arithmetic formulas over \(V\).
 Thus, the construction
produces an instance of \(\LTL\) model checking over a finite-counter system,
which can be encoded in symbolic model checkers such as nuXmv.

In the fixed-instance case, the invariant formula $\textsf{INVAR}$ additionally fixes the parameter variables
according to \(\nu\). This implies that only finitely many valuations satisfy \(\mathsf{INVAR}\). Hence, for fixed-instance verification, the generated finite-counter system is in fact finite-state.

\medskip
For a rule \(\rho\in\mathsf{Rules}\), write
$\rho=(\ell,\ell',k,g),$
where \(\ell=\rho.\mathsf{from}\), \(\ell'=\rho.\mathsf{to}\),
\(k=\rho.\mathsf{type}\), and \(g=\rho.\mathsf{guard}\). Let
\(
b=\max\{\rho.\mathsf{type}\mid \rho\in\mathsf{Rules}\}.
\)
Further, let \(p_1,\ldots,p_q\) be the parameter variables from \(\mathsf{P}\) that occur in \(g\), and let \(m_1,\ldots,m_s\) be the message variables from
\(\mathcal{M}\) that occur in \(g\). Let 
$\widehat g(p_1,\ldots,p_q,m_1,\ldots,m_s)$
be a quantifier-free Presburger formula equivalent to 
\[
\exists r_1,\ldots,r_s.\;
\left(
\bigwedge_{i=1}^{s} 0\leq r_i\leq m_i
\right)
\wedge
g[m_1\mapsto r_1,\ldots,m_s\mapsto r_s].
\]

Let \(X\subseteq\mathcal{L}\) be the set of locations mentioned in \(\varphi\) and let \(K\) be one plus the largest threshold appearing in cumulative atoms of \(\varphi\), see Def.~\ref{def:HSCL} (\(K=0\) if no such
atom exists).

Below, we describe each of the components of the obtained finite-counter system and the \LTL formula, including some brief explanations to help with understanding.

\paragraph*{Variables \(V\).}
The set \(V\) consists of the following non-negative integer variables:
\[
\mathsf{par}_p
\qquad
(p\in\mathsf{P}),
\]
\[
\mathsf{loc}_{\ell,d}
\qquad
(\ell\in\mathcal{L},\ d\in\{0,\ldots,b-1\}),
\]
\[
\mathsf{msg}_{m,d}
\qquad
(m\in\mathcal{M},\ d\in\{0,\ldots,b-1\}),
\]
\[
\mathsf{local}_{\ell},\mathsf{cumul}_{\ell}
\qquad
(\ell\in X).
\]
These are called, respectively, \emph{parameter variables, location variables, message variables, and history variables.} %

\paragraph*{Invariant formula \(\mathsf{INVAR}\).}
The formula \(\mathsf{INVAR}\) is the conjunction of the constraints below.

\noindent 1. The parameter variables satisfy the resilience condition:
\[
\mathsf{rc}\big[p\mapsto \mathsf{par}_p\big].
\]
For fixed-instance verification with parameter valuation
\(\nu:\mathsf{P}\to\mathbb{N}\), \(\mathsf{INVAR}\) additionally includes the
constraints
\[
\mathsf{par}_p=\nu(p)
\qquad
(p\in\mathsf{P}).
\]

\noindent 2. The location variables are bounded by the total number of processes:
\[
H \leq\sum_{\ell\in\mathcal L}\sum_{d=0}^{b-1}\mathsf{loc}_{\ell,d} \leq \mathsf{par}_n
\]
where \(H=n\) when failures are modeled explicitly by dedicated fault locations,
and \(H=n-t\) when failures are modeled implicitly by allowing processes to stop participating.

\noindent 3. The message variables are bounded by the maximum number of messages that
can be broadcast in one round:
\[
0\leq \mathsf{msg}_{m,d}\leq |\mathcal{L}|\cdot \mathsf{par}_n
\qquad
(m\in\mathcal{M},\ d\in\{0,\ldots,b-1\}).
\]

\noindent 4. The history variables are bounded as follows:
\[
(0\leq \mathsf{local}_{\ell}\leq \mathsf{par}_n)
\wedge (0\leq \mathsf{cumul}_{\ell}\leq K)
\qquad
(\ell\in X).
\]

\paragraph*{Initial formula \(\mathsf{INIT}\).}
The formula \(\mathsf{INIT}\) is the conjunction of the constraints below.

\noindent 1. The initial location variables satisfy
\[
\sum_{\ell\in\mathcal{I}}\mathsf{loc}_{\ell,0}
=
\mathsf{par}_n,
\]
\[
\mathsf{loc}_{\ell,0}=0
\qquad
(\ell\notin\mathcal{I}),\qquad \text{and}\]
\[\mathsf{loc}_{\ell,d}=0
\qquad
(\ell\in\mathcal{L},\ d\in\{1,\ldots,b-1\}).
\]

\noindent 2. The initial message variables satisfy
\[
\mathsf{msg}_{m,d}=0
\qquad
(m\in\mathcal{M},\ d\in\{0,\ldots,b-1\}).
\]

\noindent 3. The initial history variables satisfy
\[
\mathsf{local}_{\ell}
=
\begin{cases}
\mathsf{loc}_{\ell,0} & \text{if } \ell\in\mathcal{I},\\
0 & \text{otherwise,}
\end{cases}
\qquad
\mathsf{cumul}_{\ell}
=
\begin{cases}
\mathsf{loc}_{\ell,0} & \text{if } \ell\in\mathcal{I},\\
0 & \text{otherwise.}
\end{cases}
\]

\paragraph*{Transition formula \(\mathsf{TRAN}\).}
The transition formula is the disjunction of the local-transition formula and
the jump-transition formula:
\[
\mathsf{TRAN}
=
\mathsf{TRAN}_{\mathsf{loc}}
\vee
\mathsf{TRAN}_{\mathsf{jump}}.
\]

\medskip
\noindent {\bf Local-transition formula \(\mathsf{TRAN}_{\mathsf{loc}}\).}
Intuitively, the formula \(\mathsf{TRAN}_{\mathsf{loc}}\) encodes one process taking a
type-\(0\) rule in the current frontier round. For each type-\(0\) rule
\[
\rho=(\ell,\ell',0,g),
\]
we define a clause \(\mathsf{LocClause}_{\rho}\), and set
\[
\mathsf{TRAN}_{\mathsf{loc}}
=
\bigvee_{\rho=(\ell,\ell',0,g)\in\mathsf{Rules}}
\mathsf{LocClause}_{\rho}.
\]
The clause \(\mathsf{LocClause}_{\rho}\) is the conjunction of the
following constraints.

\noindent 1. The enabledness constrain:
\[
\mathsf{loc}_{\ell,0}>0
\;\wedge\;
\widehat g
\big[
p\mapsto \mathsf{par}_p,\;
m\mapsto \mathsf{msg}_{m,0}
\big].
\]

\noindent 2. The location variables are updated by moving one process from \(\ell\)
to \(\ell'\) at depth \(0\):
\[
\mathsf{loc}'_{\ell,0}
=
\mathsf{loc}_{\ell,0}-1,
\qquad
\mathsf{loc}'_{\ell',0}
=
\mathsf{loc}_{\ell',0}+1,
\]
and all other location variables are unchanged.

\noindent 3. If \(\mathsf{Bcast}(\ell')=m\in\mathcal M\), then the corresponding
message variable at depth \(0\) is incremented:
\[
\mathsf{msg}'_{m,0}
=
\mathsf{msg}_{m,0}+1,
\]
and all other message variables are unchanged. If
\(\mathsf{Bcast}(\ell')=\bot\), then all message variables are unchanged.

\noindent 4. If \(\ell'\in X\), then the history variables for \(\ell'\) is incremented as follows:
\[
\mathsf{local}'_{\ell'}
=
\mathsf{local}_{\ell'}+1, \qquad \mathsf{cumul}'_{\ell'}
=
\begin{cases}
\mathsf{cumul}_{\ell'}+1 & \text{if } \mathsf{cumul}_{\ell'}<K,\\
\mathsf{cumul}_{\ell'} & \text{otherwise.}
\end{cases}
\]
and all other history variables are unchanged. If \(\ell'\notin X\), then all
history variables are unchanged.

\noindent 5. Finally, all parameter variables are unchanged.

\medskip
\noindent {\bf Jump-transition formula \(\mathsf{TRAN}_{\mathsf{jump}}\).}
Intuitively, the formula \(\mathsf{TRAN}_{\mathsf{jump}}\) encodes a jump of the frontier
round. For each jump length
\[
h\in\{1,\ldots,b\},
\]
we define a clause \(\mathsf{JumpClause}_{h}\), and set
\[
\mathsf{TRAN}_{\mathsf{jump}}
=
\bigvee_{h=1}^{b}
\mathsf{JumpClause}_{h}.
\]

The clause \(\mathsf{JumpClause}_{h}\) uses auxiliary non-negetive integer variables
\[
J_{d,\rho}
\qquad
(d\in\{0,\ldots,b-1\},\ \rho\in\mathsf{Rules},\ \rho.\mathsf{type}>0).
\]
The variable \(J_{d,\rho}\) denotes the number of processes at depth \(d\) that
take the positive-type rule \(\rho\) during the jump. 

Formally, \(\mathsf{JumpClause}_{h}\) can be viewed as existentially
quantifying the auxiliary variables \(J_{d,\rho}\); the resulting formula can then be made linear-arithmetic by Presburger quantifier elimination.  In our encoding, we use the equivalent and more direct presentation: the variables \(J_{d,\rho}\) are included in \(V\) as auxiliary variables and are updated nondeterministically by jump transitions, so that their current values serve as witnesses for the jump.

The clause
\(\mathsf{JumpClause}_{h}\) is the conjunction of the following constraints.

\noindent 1. The auxiliary variables cannot move more
processes than are available. For every \(\lambda\in\mathcal L\) and $d\in\{0,\ldots,b-1\}$ writing
\[
J^{\mathsf{out}}_{\lambda,d}
=
\sum_{\substack{
\rho=(\lambda,\ell',k,g)\in\mathsf{Rules}\\
k>0
}}
J_{d,\rho},
\]
we require
\[
J^{\mathsf{out}}_{\lambda,d}
\leq
\mathsf{loc}_{\lambda,d}
\qquad
(\lambda\in\mathcal L,\ d\in\{0,\ldots,b-1\}).
\]

\noindent 2. A positive-type rule may be taken only if it lands exactly in the
new frontier round. Thus, for every
\(\rho=(\ell,\ell',k,g)\) with \(k>0\), and every
\(d\in\{0,\ldots,b-1\}\), we require
\[
J_{d,\rho}>0
\implies
d+h=k.
\]

\noindent 3. Whenever some processes take a rule, its guard must be enabled at
the corresponding depth. 
\[
J_{d,\rho}>0
\implies
\widehat g
\big[
p\mapsto \mathsf{par}_p,\;
m\mapsto \mathsf{msg}_{m,d}
\big],
\]

\noindent 4. The location variables are updated by shifting the old window by length
\(h\) and placing the processes that jump in the new frontier
round. For every \(\lambda\in\mathcal L\), let
\[
J^{\mathsf{in}}_{\lambda}
=
\sum_{\substack{
d\in\{0,\ldots,b-1\}\\
\rho=(\ell,\lambda,k,g)\in\mathsf{Rules}\\
k>0
}}
J_{d,\rho}.
\]
Then
\[
\mathsf{loc}'_{\lambda,0}
=
J^{\mathsf{in}}_{\lambda}
\qquad
(\lambda\in\mathcal L).
\]
For depths skipped by the jump, we set
\[
\mathsf{loc}'_{\lambda,d}=0
\qquad
(\lambda\in\mathcal L,\ 1\leq d<h).
\]
For the remaining depths, we shift the old variables and remove the processes
that jumped:
\[
\mathsf{loc}'_{\lambda,d}
=
\mathsf{loc}_{\lambda,d-h}
-
J^{\mathsf{out}}_{\lambda,d-h}
\qquad
(\lambda\in\mathcal L,\ h\leq d\leq b-1).
\]

\noindent 5. The message variables are updated analogously. The messages in the
new frontier round are exactly those broadcast by the processes that jump into
that round. Thus, for every \(m\in\mathcal M\),
\[
\mathsf{msg}'_{m,0}
=
\sum_{\substack{
\lambda \in\mathcal{L}\\
\mathsf{Bcast}(\lambda)=m
}}
J^{\mathsf{in}}_{\lambda}.
\]
For depths skipped by the jump, we set
\[
\mathsf{msg}'_{m,d}=0
\qquad
(m\in\mathcal M,\ 1\leq d<h).
\]
For the remaining depths, we shift the old message variables:
\[
\mathsf{msg}'_{m,d}
=
\mathsf{msg}_{m,d-h}
\qquad
(m\in\mathcal M,\ h\leq d\leq b-1).
\]

\noindent 6. The history variables are updated according to the processes that
enter locations mentioned in the formula. For every \(\lambda\in X\),
\[
\mathsf{local}'_{\lambda}
=
J^{\mathsf{in}}_{\lambda},\quad \text{and}\qquad \mathsf{cumul}'_{\lambda}
=
\begin{cases}
\mathsf{cumul}_{\lambda}+J^{\mathsf{in}}_{\lambda}
&
\text{if } \mathsf{cumul}_{\lambda}+J^{\mathsf{in}}_{\lambda}\leq K,\\[2mm]
K
&
\text{otherwise.}
\end{cases}
\]

\noindent 7. Finally, all parameter variables are unchanged.

\medskip
\noindent {\bf Remark.}
The nuXmv encoding additionally requires the transition relation to be total, i.e., to have no deadlock states. We ensure this manually by adding stuttering transitions wherever necessary.
\paragraph*{\LTL Specification.}
For completeness, we recall the \(\HSCL\)-to-\(\LTL\) translation presented 
Section~\ref{subsec:LTL}.  From \(\HSCL\) formula \(\varphi\), we
construct an \(\LTL\) formula \(\LTL(\varphi)\) recursively.  The atoms of the translated formula are linear-arithmetic formulas over the history variables of the finite-counter system, namely formulas of the form
\[
\sum_{\ell\in X} c_\ell \cdot \mathsf{local}_{\ell} \leq t
\qquad\text{or}\qquad
\sum_{\ell\in X} c_\ell \cdot \mathsf{cumul}_{\ell} \leq t,
\]
where \(c_\ell\in\mathbb{N}\) are constants,  \(t\in\LinTrm(\mathsf P)\), and $\mathsf{local}_{\ell},\mathsf{cumul}_{\ell}$ are history variables.

The translation is defined as follows. See Def.\ref{def:HSCL} for the syntax of \HSCL.  For a universal round-local atom
\[
\forall r.\ \alpha_r
\qquad\text{where}\qquad
\alpha_r =
\sum_{\ell\in X} c_\ell\cdot \kappa(\ell,r) \leq t,
\]
we define
\[
\LTL(\forall r.\ \alpha_r)
=
\mathbf{G}\!\left(
\sum_{\ell\in X} c_\ell\cdot \mathsf{local}_{\ell} \leq t
\right).
\]
For a cumulative atom
\[
\beta =
\sum_{\ell\in X} c_\ell\cdot \sum_r \kappa(\ell,r) \leq t,
\]
we define
\[
\LTL(\beta)
=
\mathbf{G}\!\left(
\sum_{\ell\in X} c_\ell\cdot \mathsf{cumul}_{\ell} \leq t
\right).
\]
Boolean connectives are translated compositionally:
\[
\LTL(\neg\psi)=\neg\LTL(\psi),
\qquad
\LTL(\psi\wedge\psi')=\LTL(\psi)\wedge\LTL(\psi').
\]

\end{document}